\keywords{QBF, Proof Complexity, Verification, Frege, Extended Frege, Strategy Extraction}
\theoremstyle{plain}\newtheorem{observation}[thm]{Observation}
\newlength{\breite}
\newcommand{\Math}[1]{\ensuremath{#1}\xspace}
\DeclareMathOperator*{\domain}{{\sf dom}}
\DeclareMathOperator*{\var}{\operatorname{var}}
\DeclareMathOperator*{\range}{{\sf rng}}
\DeclareMathOperator*{\val}{\operatorname{Val}}
\DeclareMathOperator*{\set}{\operatorname{Set}}
\DeclareMathOperator*{\con}{\operatorname{con}}
\DeclareMathOperator*{\ann}{\operatorname{anno}}
\newcommand{\ComplexityClassFont}[1]{\mathsf{#1}}
\newcommand{\class}[1]{\ensuremath{\mathsf{#1}}}
\newcommand{\Pe}{\class{P}}
\newcommand{\PSPACE}{\class{PSPACE}}
\newcommand{\C}{\class{C}}
\newcommand{\ProofSystemsFont}[1]{\mathsf{#1}}
\newcommand{\DefineProofSystem}[2]{\expandafter\def\csname#1\endcsname{\Math{\ComplexityClassFont{#2}}}}
\DeclareMathOperator*{\lev}{\operatorname{lv}}
\DeclareMathOperator*{\ind}{\operatorname{ind}}
\newcommand{\Red}{\Math{\boldsymbol{\forall}\kern .04ex\ProofSystemsFont{red}}}
\newcommand{\Ared}{$\forall$\text{-Red}\xspace}
\newcommand{\red}{\Math{\,\raisebox{.2ex}{$\scriptstyle+$}\,\Red}}
\newcommand{\Frege}[1][]{\Math{\ifthenelse{\isempty{#1}}{\ProofSystemsFont{Frege}}{#1\text{-}\ProofSystemsFont{Frege}}}}
\newcommand{\FregeRed}[1][]{\Frege[#1]\!\!\red}
\newcommand{\eFregeRed}[1][]{\eFrege\!\!\red}
\newcommand{\Dz}[1]{\ensuremath{\mathcal{D}^{\mbox{\scriptsize \upshape #1}}}\xspace}
\newcommand{\Drrs}{\Dz{rrs}}
\DeclareMathOperator*{\instantiate}{\textsf{inst}}
\newcommand{\ecalculus}{$\forall$\textsf{Exp+Res}\xspace}
\newcommand{\qrc}{\textsf{Q-Res}\xspace}
\newcommand{\qrat}{\textsf{QRAT}\xspace}
\newcommand{\mrc}{\textsf{M-Res}\xspace}
\newcommand{\qurc}{\textsf{QU-Res}\xspace}
\newcommand{\lqrc}{\textsf{LD-Q-Res}\xspace}
\newcommand{\qdrc}{\textsf{Q(\Drrs)-Res}\xspace}
\newcommand{\lqdrc}{\textsf{LD-Q(\Drrs)-Res}\xspace}
\newcommand{\lquprc}{\textsf{LQU}$^+$\textsf{-\hspace{1pt}Res}\xspace}
\newcommand{\irc}{\textsf{IR-calc}\xspace}
\newcommand{\irmc}{\textsf{IRM-calc}\xspace}
\newcommand{\Select}{\texttt{Select}\xspace}
\newcommand{\Merge}{\texttt{Merge}\xspace}
\DeclareMathOperator*{\rest}{{\sf restrict}}
\newcommand{\restr}[2]{\rest_{{#1}} ({#2})}
\newcommand{\complete}[2]{{#1}\;\circ\;{#2}}
\newcommand{\comprehension}[2]{\ensuremath{\left\{ {#1} \;|\; {#2}\right\}}}
\tikzstyle{calcn}=[rectangle%
\tikzstyle{legn}=[font=\scriptsize]
\DeclareMathOperator*{\diff}{\operatorname{Dif}}
\DeclareMathOperator*{\equ}{\operatorname{Eq}}
\begin{document}
\title{Towards Uniform Certification in QBF} 

\author[L.~Chew]{Leroy Chew\lmcsorcid{0000-0003-0226-2832}}[a]
\address{TU Wien, Vienna, Austria \url{https://leroychew.wordpress.com/} }
\email{lchew@ac.tuwien.ac.at}

\author[F.~Slivovsky]{Friedrich Slivovsky\lmcsorcid{0000-0003-1784-2346}}[b]
\address{TU Wien, Vienna, Austria \url{http://www.ac.tuwien.ac.at/people/fslivovsky/}}
\email{fslivovsky@ac.tuwien.ac.at}

\thanks{This work was supported by the Vienna Science and Technology Fund (WWTF) under grant ICT19-060.}

\begin{abstract}
  We pioneer a new technique that allows us to prove a multitude of previously open simulations in QBF proof complexity. In particular, we show that extended QBF Frege p-simulates clausal proof systems such as IR-Calculus, IRM-Calculus, Long-Distance Q-Resolution, and Merge Resolution.
  These results are obtained by taking a technique of Beyersdorff et al. (JACM 2020) that turns strategy extraction into simulation and combining it with new local strategy extraction arguments.

This approach leads to simulations that are carried out mainly in propositional logic, with minimal use of the QBF rules. Our proofs therefore provide a new, largely propositional interpretation of the simulated systems.
We argue that these results strengthen the case for uniform certification in QBF solving, since many QBF proof systems now fall into place underneath extended QBF Frege.

\end{abstract}
\maketitle

\section{Introduction}
\label{sec:intro}
The problem of evaluating Quantified Boolean Formulas (QBF), an extension of propositional satisfiability (SAT), is a canonical \PSPACE-complete problem~\cite{SM73,AB09}. 
Many tasks in verification, synthesis and reasoning have succinct QBF encodings~\cite{ShuklaBPS19}, making QBF a natural target logic for automated reasoning.
As such, QBF has seen considerable interest from the SAT community, leading to the development of a variety of QBF solvers (e.g.,~\cite{LonsingB10,JanotaKMC16,rabe2015caqe,JanotaM15,PeitlSS19}). The underlying algorithms are often highly nontrivial, and their implementation can lead to subtle bugs~\cite{BrummayerLB10}. While formal verification of solvers is typically impractical, trust in a solver's output can be established by having it generate a proof trace that can be externally validated.
This is already standard in SAT solving with the \textsf{DRAT} proof system~\cite{WetzlerHH14}, for which even formally verified checkers are available~\cite{Cruz-FilipeHHKS17}. A key requirement for standard proof formats like \textsf{DRAT} is that they \emph{simulate} all current and emerging proof techniques.

Currently, there is no decided-upon checking format for QBF proofs (although there have been some suggestions~\cite{Jus07,HeuleSB17}).
The main challenge of finding such an universal format, is that QBF solvers are so radically different in their proof techniques, that each solver basically works in its own proof system. For instance, solvers based on CDCL and (some) clausal abstraction solvers can generate proofs in Q-resolution (\qrc)~\cite{KBKF95} or long-distance Q-resolution (\lqrc)~\cite{DBLP:journals/fmsd/BalabanovJ12}, while the proof system underlying expansion based solvers combines instantiation of universally quantified variables with resolution (\ecalculus)~\cite{JM15}. Variants of the latter system have been considered: \irc (\textbf{I}nstantiation \textbf{R}esolution) admits instantiation with partial assignments, and \irmc (\textbf{I}nstantiation \textbf{R}esolution \textbf{M}erge) additionally incorporates elements of long-distance Q-resolution~\cite{BCJ19}.

A universal checking format for QBF ought to simulate all of these systems.
A good candidate for such a proof system has been identified in extended QBF Frege (\eFregeRed): 
Beyersdorff et al. showed~\cite{BBCP20} that a lower bound for \eFregeRed would not be possible without a major breakthrough.

In this work, we show that \eFregeRed does indeed p-simulate  \irmc, Merge Resolution (\mrc) and \lquprc (a generalisation of \lqrc), thereby establishing \eFregeRed and any stronger system (e.g., \qrat~\cite{HeuleSB17} or \Gfull~\cite{KP90}) as potential universal checking formats in QBF. 
As corollaries, we obtain (known) simulations of \ecalculus~\cite{KHS17} and \lqrc~\cite{KS19} by \qrat, as well as a (new) simulation of \irc by \qrat, answering a question recently posed by Chede and Shukla~\cite{ChedeS21}.
A simulation structure with many of the known QBF proof systems and our new results is given in Figure~\ref{fig:simstructure}.

\begin{figure}[h]
  \centering
  \includegraphics[scale=0.275]{simulationhasse.pdf}
	\caption{Hasse diagram for polynomial simulation order of QBF calculi \cite{BCJ19, BWJ14,BBCP20,HeuleSB17, ChewSat21,DBLP:journals/fmsd/BalabanovJ12, Gelder12,ChewHSAT22, BBM18}. In this diagram all proof systems below the first line are known to have strategy extraction, and all below the second line have an exponential lower bound. \Gfull and \qrat have strategy extraction if and only if $\Pe=\PSPACE$. \label{fig:simstructure}}
\end{figure}

Our proofs crucially rely on a property of QBF proof systems known as strategy extraction. Here, ``strategy'' refers to winning strategies of a set of \PSPACE\ two-player games (see Section~\ref{sec:prelim} for more details) each of which corresponds exactly to some QBF.
A proof system is said to have strategy extraction if a strategy for the two-player game associated with a QBF can be computed from a proof of the formula in polynomial time.
Balabanov and Jiang discovered~\cite{DBLP:journals/fmsd/BalabanovJ12} that Q-Resolution admitted a form of strategy extraction where a circuit computing a winning strategy could be extracted in linear time from the proofs.
Strategy extraction was subsequently proven for many QBF proof systems (cf. Figure~\ref{fig:simstructure}):
the expansion based systems
\ecalculus \cite{BCJ19},
\irc \cite{BCJ19} and
\irmc \cite{BCJ19},
Long-Distance Q-Resolution \cite{BEW13},
including with dependency schemes \cite{BEW13},
Merge Resolution~\cite{BBM18},
Relaxing Stratex \cite{Che16}
and
\C-\FregeRed systems 
including \eFregeRed \cite{BBCP20}.
Strategy extraction also gained notoriety because it became a method to show Q-resolution lower bounds~\cite{BCJ19}.  
Beyersdorff et al. \cite{BBCP20, BCMS16CPjournal} generalised this approach to more powerful proof systems, allowing them to establish a tight correspondence between lower bounds for \eFregeRed and two major open problems in circuit complexity and propositional proof complexity: they showed that proving a lower bound for \eFregeRed is equivalent to either proving a lower bound for $\mathsf{P/poly}$ or a lower bound for propositional \eFrege. It was conjectured by Chew~\cite{ChewSat21} that all the aforementioned proof systems that had strategy extraction were very likely to be simulated by \eFregeRed. An outline of how to use strategy extraction to obtain the corresponding simulations was also provided. 

We follow this outline in proving simulations for multiple systems by \eFregeRed. While the strategy extraction for expansion based systems \cite{BCJ19} has been known for a while using the technique from Goultiaeva et. al~\cite{Goultiaeva-ijcai11}, there currently is no intuitive way to formalise this strategy extraction into a simulation proof.
Here we specifically studied a new strategy extraction technique given by Schlaipfer et al. \cite{SSWZ20}, that creates local strategies for each \ecalculus line. Inductively, we can affirm each of these local strategies and prove the full strategy extraction this way. This local strategy extraction technique is based on arguments of Suda and Gleiss \cite{SG18}, which allow it to be generalised to the expansion based system \irmc. We thus manage to prove a simulation for \ecalculus and generalise it to \irc and then to \irmc. We also show a much more straight-forward simulation of \mrc and an adaptation of the \irmc argument to \lquprc.

The remainder of the paper is structured as follows. In Section~\ref{sec:prelim} we go over general preliminaries and the definition of \eFregeRed. The remaining sections are each dedicated to simulations of different calculi by \eFregeRed. In Section~\ref{sec:mrc} we begin with a simulation of \mrc as a relatively easy example. 

In Section~\ref{sec:expansion} we find show how \eFregeRed simulates expansion based systems.
We find a  propositional interpretation and a local strategy for \irc. 
This leads to a full simulation of \irc by \eFregeRed.
In Section~\ref{sec:irm} we extend this simulation to \irmc which involves dealing with merged literals. 
 In Section~\ref{sec:lquprc} we study the strongest CDCL proof system \lquprc and explain why it is also simulated by \eFregeRed, using a similar argument to \irmc. We leave some of the finer details of the simulation of \irmc and \lquprc in the Appendix.

\section{Preliminaries}\label{sec:prelim}
\subsection{Quantified Boolean Formulas}
A Quantified Boolean Formula (QBF) is a propositional formula augmented with Boolean quantifiers $\forall, \exists$ that range over the Boolean values $\bot, \top$ (the same as $0,1$). 
Every propositional formula is already a QBF. 
Let $\phi$ be a QBF. The semantics of the quantifiers are that:
$\forall x \phi(x)\equiv \phi(\bot)\wedge \phi(\top)$
and 
$\exists x \phi(x)\equiv \phi(\bot)\vee \phi(\top)$.

When investigating QBF in computer science we want to standardise the input formula.
In a \emph{prenex} QBF, all quantifiers appear outermost in a \emph{(quantifier) prefix}, and are followed by a propositional formula, called the \emph{matrix}.
If every propositional variable of the matrix is bound by some quantifier in the prefix we say the QBF is a 
\emph{closed} prenex QBF.
We often want to standardise the propositional matrix, and so we can take the same approach as seen often in propositional logic.
We denote the set of universal variables as $U$, and the set of existential variables as $E$.
A \emph{literal} is a propositional variable ($x$) or its negation ($\neg x$ or $\bar x$).
A \emph{clause} is a disjunction of literals. 
Since disjunction is idempotent, associative and commutative we can think of a clause simultaneously as a set of literals.
The empty clause is just false.
A \emph{conjunctive normal form (CNF)} is a conjunction of clauses. 
Again, since conjunction is idempotent, associative and commutative a CNF can be seen as set of clauses. 
The empty CNF is true, and a CNF containing an empty clause is false.
Every propositional formula has an equivalent formula in CNF, we therefore restrict our focus to closed \emph{PCNF} QBFs, that is closed prenex QBFs with CNF matrices.

\subsection{QBF Proof Systems}
\subsubsection{Proof Complexity}
A proof system \cite{CR79} is a polynomial-time checking function that checks that every proof maps to a valid theorem. Different proof systems have varying strengths, in one system a theorem may require very long proofs, in another the proofs could be considerably shorter.
We use \emph{proof complexity} to analyse the strength of proof systems~\cite{krajivcek2019proof}.
A proof system is said to have an $\Omega(f(n))$-lower bound, if there is a family of theorems such that shortest proof (in number of symbols) of the family are bounded below by $\Omega(f(n))$ where $n$ is the size (in number of symbols) of the theorem.
Proof system $p$ is said to \emph{simulate} proof system $q$ if there is a fixed polynomial $P(x)$ such that for every $q$-proof $\pi$ of every theorem $y$ there is a $p$-proof of $y$ no bigger than  $P(|\pi|)$ where $|\pi|$ denotes the size of $\pi$. 
A stricter condition, proof system $p$ is said to p-simulate proof system $q$ if there is a polynomial-time algorithm that takes $q$-proofs to $p$-proofs preserving the theorem. 

\subsubsection{Extended Frege+\Ared}

Frege systems are ``text-book'' style proof systems for propositional logic. They consist of a finite set of axioms and rules where any variable can be replaced by any formula (so each rule and axiom is actually a schema). 
A Frege system needs also to be sound and complete. 
Frege systems are incredibly powerful and can handle simple tautologies with ease. No lower bounds are known for Frege systems and all Frege systems are p-equivalent \cite{CR79, Rec76}. For these reasons we can assume all Frege-systems can handle simple tautologies and syllogisms without going into details. 

Extended \Frege (\eFrege) takes a \Frege system and allows the introduction of new variables that do not appear in any previous line of the proof. These variables abbreviate formulas, but since new variables can be consecutively nested, they can be understood to represent circuits. The rule works by introducing the axiom of $v\leftrightarrow f$ for new variable $v$ (not appearing in the formula $f$). Alternatively one can consider \eFrege as the system where lines are circuits instead of formulas.

Extended \Frege is a very powerful system, it was shown \cite{Kra95, Bey09} that any propositional proof system $f$ can be simulated by $\eFrege+||\phi||$ where $\phi$ is a polynomially recognisable axiom scheme.
The QBF analogue is \eFregeRed, which adds the reduction rule to all existing \eFrege rules~\cite{BBCP20}.
\eFregeRed is refutationally sound and complete for closed prenex QBFs. 
The reduction rules allows one to substitute a universal variable in a formula with $\bot$ or with $\top$ as long as no other variable appearing in that formula is right of it in the prefix.
Extension variables now must appear in the prefix and must be quantified right of the variables used to define it, we can consider them to be defined immediately right of these variables as there is no disadvantage to this.

\subsection{QBF Strategies}
With a closed prenex QBF $\Pi\phi $, the semantics of a QBF has an alternative definition in games.
The two-player QBF game has an $\exists$-player and a $\forall$-player. The game is played in order of the prefix $\Pi$ left-to-right, whoever's quantifier appears must assign the quantified variable to $\bot$ or $\top$. The existential player is trying to make the matrix $\phi$ become true. The universal player is trying to make the matrix become false. 
$\Pi\phi $ is true if and only if there winning strategy for the $\exists$ player.
$\Pi\phi $ is false if and only if there winning strategy for the $\forall$ player.

A \emph{strategy} for a false QBF is a set of functions $f_u$ for each universal variable $u$ on variables left of $u$ in the prefix. In a \emph{winning strategy} the propositional matrix must evaluate to false when every $u$ is replaced by $f_u$.
A QBF proof system has \emph{strategy extraction} if there is a polynomial time  program that takes in a refutation $\pi$ of some QBF $\Psi$ and outputs circuits that represent the functions of a winning strategy. 

A \emph{policy} is similarly defined as a strategy but with partial functions for each universal variables instead of a fully defined function.

\section{Extended Frege+\texorpdfstring{\Ared}{∀-Red} p-simulates M-Res}\label{sec:mrc}
In this section we show a first example of how the \eFregeRed simulation argument works in practice for systems that have strategy extraction. 
Merge resolution provides a straightforward example because the strategies themselves are very suitable to be managed in propositional logic. In later theorems where we simulate calculi like \irc and \irmc, representing strategies is much more of a challenge. 

\subsection{Merge Resolution}
Merge resolution (\mrc) was first defined by Beyersdorff, Blink\-horn and Mahajan \cite{BBM18}. Its lines combine clausal information with a merge map, for each universal variable. Merge maps give a ``local'' strategy which when followed forces the clause to be true or the original CNF to be false.

\subsubsection{Definition of Merge Resolution}
Each line of an \mrc proof consists of a clause on existential variables and partial universal strategy functions for universal variables. These functions are represented by \emph{merge maps}, which are defined as follows. 
For universal variable $u$, let $E_u$ be the set of existential variables left of $u$ in the prefix. For line number $i$, A non-trivial merge map $M^u_i$ is a collection of nodes in $[i]$ along with the construction function $M^u_i$, which details the structure. For $j\in [i]$, the construction function $M^u_i(j)$ is either in $\{\bot,\top\}$ for leaf nodes  or $E_u\times[j]\times[j]$ for internal nodes.
The root $r(u,i)$ is the highest value of all the nodes $M^u_i$.
The strategy function $h^u_{i,j}:\{0,1\}^{E_u}\rightarrow \{0,1\}$ maps assignments of existential variables $E_u$ in the dependency set of  $u$ to a value for $u$.
The function  $h^u_{i,t}$ for leaf nodes $t$ is simply the truth value $M^u_i(t)$.
For internal nodes $a$ with $M^u_i(a)= (x, b,c)$, we should interpret $h^u_{i,a}$ as ``If $x$ then $h^u_{i,b}$, else $h^u_{i,c}$'' or 
$h^u_{i,a}= (x \wedge h^u_{i,b}) \vee (\neg x \wedge h^u_{i,c})$.
In summary the merge map $M^u_i(j)$ is a representation of the strategy given by function $h^u_{i,r(u,i)}$.

The merge resolution proof system inevitably has merge maps for the same universal variable interact, and we have two kinds of relations on pairs of merge maps.

\begin{defi}
	Merge maps $M^u_j$ and $M^u_k$ are said to be \emph{consistent} if $M^u_j(i)=M^u_k(i)$ for each node $i$ appearing in both $M^u_j$ and $M^u_k$.
\end{defi}

\begin{defi}
	Merge maps $M^u_j$ and $M^u_k$ are said to be isomorphic if there exists a bijection $f$ from the nodes of $M^u_j$ to the nodes of $M^u_k$ such that if $M^u_j(a)=(x ,b, c)$ then $M^u_k(f(a))= (x, f(b), f(c))$ and if $M^u_j(t)=p\in \{\bot ,\top\}$ then  $M^u_k(f(t))=p$.
\end{defi}

With two merge maps $M^u_j$ and $M^u_k$, we define two operations as follows:
\begin{itemize}
\item $\Select(M^u_j,M^u_k)$ returns $M^u_j$ if $M^u_k$ is trivial (representing a ``don't care''), or $M^u_j$ and $M^u_k$ are isomorphic and returns $M^u_k$ if $M^u_j$ is trivial and not isomorphic to $M^u_k$.
If neither $M^u_j$ or $M^u_k$ is trivial and the two are not isomorphic then the operation fails.

\item $\Merge(x,M^u_j,M^u_k)$  returns the map $M^u_i$ with $i>j,i>k$ when $M^u_j,M^u_k$ are consistent where if $a$ is a node in $M^u_j$ then $M^u_i(a)=M^u_j(a)$ and 
if $a$ is a node in $M^u_k$ then $M^u_i(a)=M^u_k(a)$. Merge map $M^u_i$ has a new node $r(u,i)$ as a root node (which is greater than the maximum node in each of $M^u_i(a)$ or $M^u_j(a)$), and is defined as
$M^u_i(r(u,i))=(x,r(u,j),r(u,k))$.
\end{itemize}

Proofs in \mrc consist of lines, where every line is a pair $(C_i, \{ M^u_i\mid u \in U\})$.
Here, $C_i$ is a purely existential clause (it contains only literals that are from existentially quantified variables). The other part is a set containing merge maps for each universal variable (some of the merge maps can be trivial, meaning they do not represent any function).
Each line is derived by one of two rules:

\textbf{Axiom}:
$C_i=\{l \mid l\in C, \var(l)\in E\}$ is the existential subset of some clause $C$ where $C$ is a clause in the matrix.
If universal literals $u, \bar u$ do not appear in $C$, let $M^u_i$ be trivial.
If universal variable $u$ appears in $C$ then let $i$ be the sole node of $M^u_i$ with
$M^u_i(i)=\bot$.
Likewise if $\neg u$ appears in $C$ then let $i$ be the sole node of $M^u_i$ with
$M^u_i(i)=\top$.

\textbf{Resolution:}
Two lines $(C_j, \{ M^u_j\mid u \in U \})$ and $(C_k, \{ M^u_k\mid u \in U \})$
can be resolved to obtain a line $(C_i, \mid \{ M^u_i\mid u \in U\})$
if there is literal $\neg x\in C_j$
and $x \in C_k$
such that $C_i= C_j\cup C_k \setminus\{x, \neg x\}$, and every $M^u_i$ can either be defined as $\Select(M^u_j, M^u_k)$, when $M^u_j$ and $M^u_k$ are isomorphic or one is trivial,
or as $\Merge(x, M^u_j, M^u_k)$ when $x < u$ and $M^u_j$ and $M^u_k$ are consistent.

\subsection{Simulation of Merge Resolution}
We now state the main result of this section.
\begin{thm}\label{thm:mrc}
	\eFregeRed simulates \mrc.
\end{thm}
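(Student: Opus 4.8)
The plan is to follow the strategy-extraction-to-simulation recipe of Beyersdorff et al.\ \cite{BBCP20}: extract a winning $\forall$-strategy from the \mrc refutation, build polynomial-size circuits for it that respect the prefix order, prove \emph{propositionally} in \eFrege that this strategy falsifies the matrix, and only at the very end invoke the reduction rule to turn this certificate into an \eFregeRed refutation of $\Pi\phi$. The feature of \mrc that makes it a clean first example is that the merge maps \emph{are already} the strategy in if-then-else form, so the circuits are read off directly: for each line $i$ and universal $u$ introduce, for every node $j$ of $M^u_i$, an extension variable $s^u_{i,j}$ defined by $s^u_{i,t}\leftrightarrow M^u_i(t)$ at leaves and $s^u_{i,a}\leftrightarrow (x\wedge s^u_{i,b})\vee(\neg x\wedge s^u_{i,c})$ whenever $M^u_i(a)=(x,b,c)$, so that $s^u_{i,r(u,i)}$ computes $h^u_{i,r(u,i)}$. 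Since every internal guard $x$ lies in $E_u$, each circuit depends only on existential variables left of $u$, as required.

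The core of the argument is an induction along the \mrc derivation establishing, for every line $(C_i,\{M^u_i\})$, the \emph{local strategy invariant}
\[
  I_i:\qquad \phi\big[u\mapsto s^u_{i,r(u,i)}\ (\text{$M^u_i$ nontrivial})\big]\ \rightarrow\ C_i ,
\]
as a short \eFrege proof; write $\sigma_i$ for this substitution, where a universal with a trivial (``don't care'') map is left free so that $I_i$ is asserted for all its values. This is the contrapositive of the usual \mrc soundness invariant, and at the final line $C_i=\bot$ it yields $\neg\phi[u\mapsto s^u]$, the statement that the extracted strategy wins. For an axiom line I derive $I_i$ directly: the single-node maps set $s^u_{i,i}=\bot$ when $u\in C$ and $s^u_{i,i}=\top$ when $\bar u\in C$, which falsifies exactly the universal literals of the matrix clause $C$, so the substituted clause $C$ (a conjunct of $\phi$) collapses to its existential part $C_i$, giving $\phi\sigma_i\rightarrow C_i$.

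For a resolution step producing line $i$ from $j,k$ with pivot $x$ ($\neg x\in C_j$, $x\in C_k$), I would case-split on $x$ inside \eFrege. For a universal obtained by $\Merge(x,M^u_j,M^u_k)$ the root is the if-then-else node, so the definitions give $x\rightarrow(s^u_{i,r(u,i)}\leftrightarrow s^u_{j,r(u,j)})$ and $\neg x\rightarrow(s^u_{i,r(u,i)}\leftrightarrow s^u_{k,r(u,k)})$; for a universal obtained by $\Select$ the two maps are isomorphic or one is trivial, so $s^u_i$ provably agrees with whichever of $s^u_j,s^u_k$ is defined. Consequently, setting $x=\top$ turns $\phi\sigma_i$ into an \eFrege-instance of $\phi\sigma_j$ (instantiating any don't-care universal of line $j$ by its line-$i$ circuit), so the hypothesis $I_j$ applies; together with $\neg x\in C_j$, which collapses $C_j$ to a subclause of $C_i$, this yields $C_i$. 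The branch $x=\bot$ is symmetric using $I_k$. Combining the branches gives $I_i$ in a constant number of further Frege steps, and since the number of lines and the total number of merge-map nodes are polynomial in the \mrc proof, the accumulated \eFrege derivation of $\neg\phi[u\mapsto s^u]$ is polynomial.

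Finally I would discharge the universal variables. Having the propositional tautology $\neg\phi[u\mapsto s^u]$ together with strategy circuits respecting the quantifier prefix, I apply the reduction phase of the Beyersdorff et al.\ technique \cite{BBCP20}, which converts such a certificate into an \eFregeRed refutation of $\Pi\phi$ using the \Ared rule alone, with no further extension reasoning. I expect the main obstacle to be the inductive step above rather than this last phase: one must make the bookkeeping of $\Select$, $\Merge$, consistency/isomorphism and trivial maps precise enough that the claimed equivalences such as $x\rightarrow(s^u_i\leftrightarrow s^u_j)$ are genuinely \eFrege-derivable from the extension definitions, and one must ensure that sharing the merge-map structure across lines keeps the total circuit size polynomial.
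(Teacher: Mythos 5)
Your overall plan coincides with the paper's: the same extension variables read off the merge maps node by node, the same induction along the \mrc derivation with a case split on \Select versus \Merge at each resolution step, and the same final phase that reduces the universal variables innermost-out. The one substantive divergence is the form of the invariant, and it hides a genuine gap. You prove $I_i:\ \phi\sigma_i\rightarrow C_i$ where $\sigma_i$ \emph{substitutes} the strategy circuits into $\phi$ and leaves universals with trivial maps free. The paper instead proves $\bigwedge_{u\in U_i}(u\leftrightarrow s^u_{r(u,i)})\rightarrow C_i$ from the axioms of $\phi$, keeping every universal as a free variable and carrying the strategy only as biconditional hypotheses. This is not a cosmetic difference. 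In the \Select case where $M^u_j$ is trivial but $M^u_i$ is not, you need to pass from the derived formula $\phi\sigma_j\rightarrow C_j$, in which $u$ occurs free, to its instance $\phi\sigma_j[u:=s^u_{i,r(u,i)}]\rightarrow C_j$. Frege has no substitution rule for already-derived formulas: you can replace a subterm by a \emph{provably equivalent} one via congruence (which is why your \Merge and isomorphism cases go through, using $x\rightarrow(s^u_i\leftrightarrow s^u_j)$), but you cannot instantiate a free variable by a term without either a hypothesis $u\leftrightarrow s^u_i$ or a re-derivation of the entire subproof under the substitution. Re-derivation is not benign here: \mrc proofs are DAGs, and re-deriving a substitution instance of the subproof of $I_j$ at each use, where that subproof itself contains re-derived instances of its own premises, multiplies sizes along paths and can blow up exponentially in the depth.

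The repair is exactly the paper's formulation: carry $u\leftrightarrow s^u_{r(u,i)}$ as an explicit conjunct in the antecedent rather than substituting. Then the trivial-map case becomes a pure weakening (adding a conjunct to the antecedent of a derived implication, which is a constant number of Frege steps), and the matrix $\phi$ only ever enters at the axiom leaves with its universals intact, so no congruence reasoning inside $\phi$ is needed at resolution steps at all. Relatedly, your final certificate $\neg\phi\sigma_l$ still mentions existential variables and any free universals, so it cannot be fed to \Ared directly; you first need the congruence step deriving $\bigvee_{u\in U_l}(u\oplus s^u_{r(u,l)})$ from $\phi$ and $\neg\phi\sigma_l$ before the reduction cascade. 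With the biconditional invariant this disjunction is literally the negation of the final antecedent and falls out for free, which is how the paper concludes.
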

For a false QBF $\Pi \phi$ refuted by \mrc, the final set of merge maps represent a falsifying strategy for the universal player, the strategy can be asserted by a proposition $S$ that states that all universal variables are equivalent to their strategy circuits. It then should be the case that if $\phi$ is true, $S$ must be false, a fact that can be proved propositionally, formally $\phi \vdash \neg S$. 

To build up to this proof we can inductively find  a local strategy $S_i$ for each clause $C_i$ that appears in an \mrc line $(C_i, \{M^u_i\})$ such that $\phi \vdash S_i\rightarrow C_i$. Elegantly, $S_i$ is really just a circuit expressing that each $u\in U$ takes its value in $M^u_i$ (if non-trivial). Extension variables are used to represent these local strategy circuits and so the proof ends up as a propositional extended Frege proof.

The final part of the proof is the technique suggested by Chew \cite{ChewSat21} which was originally used by Beyersdorff et al. \cite{BBCP20}. That is, to use universal reduction starting from the negation of a universal strategy and arrive at the empty clause. 

\begin{proof}[Proof of Theorem~\ref{thm:mrc}]
	
	\textbf{Definition of extension variables.}
	We create new extension variables for each node in every non-trivial merge map appearing in a proof.
	$s^u_{i}$ is created for the node $i$ in merge map $M^u_i$.
	$s^u_{i}$ is defined as a constant when $i$ is leaf node in $M^u_i$.
	If $i$ is an internal node $s^u_{i}$ is defined as $s^u_{i}:= (x \wedge s^u_{b})\vee (\neg x \wedge s^u_{c})$, when $M^u_i(i)= (x, b, c)$.
	Because $x$ has to be before $u$ in the prefix,
	$s^u_{i}$ is always defined before universal variable $u$.

	\noindent\textbf{Induction Hypothesis:} It is easy for \eFrege to prove $\bigwedge_{u\in U_i} (u\leftrightarrow s^u_{r(u,i)})\rightarrow C_i$ from the axioms of $\phi$, where $r(u,i)$ is the index of the root node of Merge map $M^u_i$. $U_i$ is the subset of $U$ for which $M^u_i$ is non-trivial.
	
	\noindent\textbf{Base Case: Axiom:} Suppose $C_i$ is derived by axiom download of clause $C$. If $u$ has a strategy, it is because it appears in a clause and so $u\leftrightarrow s^u_{i}$, where $s^u_{i}\leftrightarrow c_u$ for $c_u\in{\top, \bot}$, $c_u$ is correctly chosen to oppose the literal in $C$ so that $C_i$ is just the simplified clause of $C$ replacing all universal $u$ with their $c_u$. This is easy for \eFrege to prove.

	\noindent\textbf{Inductive Step: Resolution:} If $C_j$ is resolved with $C_k $ to get $C_i$ with pivots $\neg x\in C_j$ and $x\in C_k$, we first show $\bigwedge_{u\in U_i} ( u \leftrightarrow s^u_{r(u,i)} )\rightarrow C_j$ and $\bigwedge_{u\in U_i} ( u \leftrightarrow s^u_{r(u,i)}  )\rightarrow C_k$, where $r(u,i)$ is the root index of the Merge map for $u$ on line $i$. We resolve these together.
	
	To argue that $\bigwedge_{u\in U_i} ( u \leftrightarrow s^u_{r(u,i)} )\rightarrow C_j$ we prove by induction that we can replace 
	$u \leftrightarrow s^u_{r(u,j)}$ with $u \leftrightarrow s^u_{r(u,i)}$ one by one.
	
	\textbf{Induction Hypothesis}: $U_i$ is partitioned into $W$ the set of adjusted variables and $V$ the set of variables yet to be adjusted.
	
	$(\bigwedge_{v\in V\cap U_j} ( v \leftrightarrow s^v_{r(v,j)} ))
	\wedge
	(\bigwedge_{v\in W} ( v \leftrightarrow s^v_{r(v,i)} ))
	\rightarrow C_j$
	
	\textbf{Base Case}:
	$(\bigwedge_{v\in U_i\cap U_j} ( v \leftrightarrow s^v_{j,r(v,j)} )\rightarrow C_j$ is the premise of the (outer) induction hypothesis, since $U_j \subseteq U_i$.
	
	\textbf{Inductive Step:}
	Starting with 
	$(\bigwedge_{v\in V\cap U_j} ( v \leftrightarrow s^v_{r(v,j)} ))
	\wedge
	(\bigwedge_{w\in W} (w \leftrightarrow s^w_{r(w,i)} ))
	\rightarrow C_j$.
	We pick a $u\in V$ to show 
	$
	( u \leftrightarrow s^u_{r(u,i)} )
	\wedge
	(\bigwedge^{v\neq u}_{v\in V\cap U_j} ( v \leftrightarrow s^v_{r(v,j)} ))
	\wedge
	(\bigwedge_{w\in W} ( w \leftrightarrow s^w_{r(w,i)} ))
	\rightarrow C_j$.
	We have four cases:
	\begin{enumerate}
		\item \Select chooses $M^u_i=M^u_j$.
		\item \Select chooses $M^u_i=M^u_k$ because $M^u_j$ is trivial.
		\item \Select chooses  $M^u_i=M^u_k$ because there is an isomorphism $f$ that maps $M^u_j$ to $M^u_k$.
		\item \Merge so that $M^u_i$ is the merge of $M^u_j$ and $M^u_k$ over pivot $x$.
	\end{enumerate}

	In (1)  $( u \leftrightarrow s^u_{r(u,j)} )$ is already 
	$( u \leftrightarrow s^u_{r(u,i)} )$ as $r(u,j)=r(u,i)$.
	
	In (2) we are simply weakening the implication.
	
	In (3) we prove inductively from the leaves to the root that $s^u_{f(t)}= s^u_{j,t}$. Eventually, we end up with $s^u_{f(r(u,k))}=s^u_{r(u,i)}$.
	Then $( u \leftrightarrow s^u_{r(u,j)} )$ can be replaced by 
	$( u \leftrightarrow s^u_{f(r(u,j))} )$. 
	As $f$ is an isomorphism $f(r(u,j))=r(u,k)$ and because \Select is used $r(u,k)=r(u,i)$. 
	Therefore we have  $( u \leftrightarrow s^u_{r(u,i)} )$.

	In (4) We need to replace $s^u_{r(u,j)}$ with $s^u_{r(u,i)}$. 
	For this we use the definition of merging that $x \rightarrow (s^u_{r(u,i)}\leftrightarrow s^u_{r(u,j)})$ and so we have $ (s^u_{r(u,i)}\leftrightarrow s^u_{r(u,j)})\vee \neg x$ but the $\neg x$ is absorbed by the $C_j$ in right hand side of the implication.
	
	\textbf{Finalise Inner Induction:}
	At the end of this inner induction, we have $\bigwedge_{u\in U_i} ( u \leftrightarrow s^u_{r(u,i)} )\rightarrow C_j$ and symmetrically $\bigwedge_{u\in U_i} ( u \leftrightarrow s^u_{r(u,i)} )\rightarrow C_k$.
	We can then prove $\bigwedge_{u\in U_i} ( u \leftrightarrow s^u_{r(u,i)} )\rightarrow C_i$.
	
	\noindent\textbf{Finalise Outer Induction:}
	Note that we have done three nested inductions on the nodes in a merge maps, on the universal variables, and then on the lines of an \mrc proof.
	Nonetheless, this gives a quadratic size \eFrege proof in the number of nodes appearing in the proof.
	In \mrc the final line will be the empty clause and its merge maps. 
	The induction gives us $\bigwedge_{u\in U_l} ( u \leftrightarrow s^u_{r(u,l)} )\rightarrow \bot$.
	In other words, if $U_l=\{u_1, \dots u_n\}$, where $u_i$ appears before $u_{i+1}$ in the prefix, $\bigvee_{i=1}^{n} ( u_i \oplus s^{u_i}_{r(u_i,l)} )$.

	By reduction of $\bigvee_{i=1}^{n-k+1} ( u_i \oplus s^{u_i}_{r(u_i,l)} )$, we derive $(0 \oplus s^{u_{n-k+1}}_{r(u_{n-k+1},l)})\vee\bigvee_{i=1}^{n-k} (u_i \oplus s^{u_i}_{r(u_i,l)})$ and $(1 \oplus s^{u_{n-k+1}}_{r(u_{n-k+1},l)})\vee\bigvee_{i=1}^{n-k} (u_i \oplus s^{u_i}_{r(u_i,l)})$, which we can resolve to obtain $\bigvee_{i=1}^{n-k} ( u_i \oplus s^{u_i}_{r(u_i,l)} )$.
We continue this until we reach the empty disjunction.
\end{proof}	

\section{Extended Frege+\texorpdfstring{\Ared}{∀-Red} p-simulates \irc}\label{sec:expansion}

\subsection{Expansion-Based Resolution Systems}
The idea of an expansion based QBF proof system is to utilise the semantic identity:
$\forall u \phi(u)= \phi(0)\wedge\phi(1)$, to replace universal quantifiers and their variables with propositional formulas.
With
$\forall u \exists x\phi(u)= \exists x\phi(0)\wedge\exists x\phi(1)$
the $x$ from $\exists x\phi(0)$ and from $\exists x\phi(1)$ are actually different variables. 
The way to deal with this while maintaining prenex normal form is to introduce annotations that distinguish one $x$ from another. We will also introduce a third annotation $*$ which will be used only for the purpose of short proofs.

\begin{defiC}[\cite{BCJ19}] \label{def:annotations}\hfill
	\begin{enumerate}
		\item
		An \emph{extended assignment} is a partial mapping from the universal variables to $\{0,1,*\}$.
		We denote an extended assignment by a set or list of individual replacements i.e. $0/u,*/v$ is an extended assignment. We often use set notation where appropriate. 
		\item
		An \emph{annotated clause} is a clause where each literal is annotated by an extended assignment to universal variables.
		
		\item
		For an extended assignment $\sigma$ to universal variables we write $l^{\restr{l}{\sigma}}$ to denote an annotated literal where
		$\restr{l}{\sigma}=\comprehension{c/u\in\sigma}{\lev(u)<\lev(l)}$.
		
		\item
		Two (extended) assignments~$\tau$ and $\mu$ are called \emph{contradictory} if there exists a variable
		$x\in\domain(\tau)\cap\domain(\mu)$ with $\tau(x)\neq\mu(x)$.
	\end{enumerate}
\end{defiC}

\subsubsection{Definitions}
The most simple way to use expansion would be to expand all universal quantifiers and list every annotated clause. The first expansion based system we consider, \ecalculus (Figure~\ref{fig:QEXP}), has a mechanism to avoid this potential exponential explosion in some (but not all) cases. An annotated clause is created and then checked to see if it could be obtained from expansion. This way a refutation can just use an unsatisfiable core rather than all clauses from a fully expanded matrix. 

\begin{figure}[h]
			\begin{prooftree}
				\AxiomC{}
				\RightLabel{(Axiom)}
				\UnaryInfC{$
					\comprehension{l^{\restr{l}{\tau}}}{l\in C, l\text{ is existential}}
					\cup
					\comprehension{\tau(l)}{l\in C, l\text{ is universal}}
					$}
			\end{prooftree}
			\begin{minipage}{0.99\linewidth}
				$C$ is a clause from the matrix and
				$\tau$ is a $\{0,1\}$ assignment to all universal variables.\\
			\end{minipage}

			\begin{prooftree}
				\AxiomC{$C_1\cup\{x^{\tau}\}$}
				\AxiomC{$C_2\cup\{\lnot x^{\tau}\}$}
				\RightLabel{(Res)}
				\BinaryInfC{$C_1\cup C_2$}
			\end{prooftree}
			\caption{The rules of \ecalculus (adapted from~\cite{JM15}).}
			\label{fig:QEXP}
\end{figure}

The drawback of \ecalculus is that one might end up repeating almost the same derivations over and over again if they vary only in changes in the annotation which make little difference in that part of the proof. 
This was used to find a lower bound to \ecalculus for a family of formulas easy in system \qrc \cite{JM15}.
To rectify this, \irc improved on \ecalculus to allow a delay to the annotations in certain circumstances. Annotated clauses now have annotations with ``gaps'' where the value of the universal variable is yet to be set. When they are set there is the possibility of choosing both assignments without the need to rederive the annotated clauses with different annotations.

\begin{defiC}[\cite{BCJ19}]
	Given two partial assignments (or partial annotations) $\alpha$ and $\beta$. The completion $\complete{\alpha}{\beta}$, is a new partial assignment, where
	
	\begin{center}
	\[\complete{\alpha}{\beta}(u)= \begin{cases}\alpha(u) & \text{if } u\in \domain(\alpha) \\ \beta(u)  &\text{if } u\in \domain(\beta)\setminus\domain(\alpha)\\
	\text{unassigned}  &\text{otherwise} \end{cases}\]
	\end{center}
	
\end{defiC}

For $\alpha$ an assignment of the universal variables and $C$ an annotated clause we define $\instantiate(\alpha, C):= \bigvee_{l^{\tau}\in C} l^{\restr{l}{\complete{\tau}{\alpha}}}$. Annotation $\alpha$ here gives  values to unset annotations where one is not already defined. Because the same $\alpha$ is used throughout the clause, the previously unset values gain consistent annotations, but mixed annotations can occur due to already existing annotations. 

\begin{figure}[h]
			\begin{prooftree}
				\AxiomC{}
				\RightLabel{(Axiom)}
				\UnaryInfC{$\comprehension{l^{\restr{l}{\tau}}}{l\in C, l\text{ is existential}} $}
			\end{prooftree}
			$C$ is a non-tautological clause from the matrix. $\tau=\comprehension{0/u}{u\text{ is universal in }C}$, where the notation $0/u$ for literals $u$ is shorthand for $0/x$ if $u=x$ and $1/x$ if $u=\neg x$.
			\begin{prooftree}
				\AxiomC{$x^\tau\lor C_1 $ }
				\AxiomC{$\lnot x^\tau\lor C_2 $}
				\RightLabel{(Resolution)}
				\BinaryInfC{$C_1\cup C_2$}
				\DisplayProof\hspace{1cm}
				\AxiomC{$C$}
				\RightLabel{(Instantiation)}
				\UnaryInfC{$\instantiate(\tau,C)$}
			\end{prooftree}
			\text{$\tau$ is an assignment to universal variables with $\range(\tau) \subseteq \{0,1\}$.}
			\caption{ The rules of \irc \cite{BCJ19}.}
			\label{fig:IR}

\end{figure}

The definition of \irc is given in Figure~\ref{fig:IR}.
Resolved variables have to match exactly, including that missing values are missing in both pivots.
However, non-contradictory but different annotations may still be used for a later resolution step after the instantiation rule is used to make the annotations match the annotations of the pivot.

\subsubsection{Local Strategies for $\forall$Exp+Res}

The work from Schlaipfer et al.~\cite{SSWZ20} creates a conversion of each annotated clause $C$ appearing in some \ecalculus proof into a propositional formula $\con(C)$ defined in the original variables of $\phi$ (so without creating new annotated variables). 
$C$ appearing in a proof asserts that there is some (not necessarily winning) strategy for the universal player to force $\con(C)$ to be true under $\phi$. 
The idea is that for each line $C$ in an \ecalculus refutation of $\Pi \phi$ there is some local strategy $S$ such that $S\wedge \phi \rightarrow \con(C)$.
If $C$ is empty, then $S$ is a winning strategy for the universal player. Otherwise, $S$ only wins if the existential player cooperates by playing according to one of the annotated literals $l^\tau \in C$, that is, if the existential player promises to falsify the literal $l$ whenever the assignment chosen by the universal player is consistent with the annotation~$\tau$.
Suda and Gleiss showed that the resolution rule can then be understood as combining strategies so that the ``promises'' of the existential player corresponding to the pivot literals $x^\tau$ and $\neg x^\tau$ cancel out~\cite{SG18}.

The extra work by Schlaipfer et al. is that the strategy circuits (for each $u$) can be constructed in polynomial time, and can be defined in variables left of $u_i$ in the prefix.
Let $u_1 \dots u_n$ be all universal variables in order. 
For each line in an \ecalculus proof we have a strategy which we will here call $S$.
For each $u_i$ there is an extension variable $\val_S^i$, before $u_i$, that represents the value assigned to $u_i$ by $S$ (under an assignment of existential variables). Using these variables, we obtain a propositional formula representing the strategy as $S=\bigwedge_{i=1}^{n} u_i \leftrightarrow \val_S^i$.
Additionally, we define a conversion of annotated logic in \ecalculus to propositional logic as follows.
For annotations $\tau$ let $\ann(\tau)= \bigwedge_{1/u_i\in \tau} u_i \wedge \bigwedge_{0/u_i\in \tau} \bar u_i$.
We convert annotated literals as $\con(l^\tau) = l\wedge \ann(\tau)$ and clauses as $\con(C)= \bigvee_{l\in C} \con(l)$.

\subsection{Policies and Simulating \irc}\label{sec:ir}
The conversion needs to be revised for \irc. 
In particular the variables not set in the annotations need to be understood.
The solution is to basically treat unset as a third value, and work with local strategies that do not set all universal variables.
Following Suda and Gleiss, we refer to such (partial) strategies as \emph{policies}~\cite{SG18}.

In practice, this requires new $\set^i_S$ variables (left of $u_i$) which state that the $i$th universal variable is set by policy $S$.
We include these variables in our encoding of policy $S$ and let $S=\bigwedge_{i=1}^{n} \set_S^i \rightarrow(u_i \leftrightarrow \val_S^i)$.
The conversion of annotations, literals and clauses also has to be changed.  
For annotations $\tau$ of some quantified variable $x$ let 
\[\textstyle
\ann_{x,S}(\tau)= \bigwedge_{1/u_i\in \tau} (\set^i_{S} \wedge u_i) \wedge \bigwedge_{0/u_i\in \tau} (\set^i_{S}\wedge\bar u_i ) \wedge \bigwedge^{u_i\notin \domain(\tau)}_{u_i<_\Pi x} \neg \set^i_S.\]

Let $\con_S(l^\tau) = l\wedge \ann_{x,S}(\tau)$ and $\con_S(C)= \bigvee_{l\in C} \con_S(l)$ similarly to before, we just reference a particular policy~$S$. This means that we again want $S\wedge \phi \rightarrow \con_S(C)$ for each line, note that $\set^i_{S}$ variables are defined in their own way.

The most crucial part of simulating \irc is that after each application of the resolution rule we can obtain a working policy.
\begin{lem}\label{thm:res}
	Suppose, there are policies $L$ and $R$ such that $L\rightarrow \con_{L}(C_1 \vee \neg x^\tau)$ and $R\rightarrow \con_{R}(C_2 \vee x^\tau)$ then there is a policy $B$ such that $B \rightarrow \con_{B}(C_1 \vee C_2)$ can be obtained in a short \eFrege proof.
\end{lem}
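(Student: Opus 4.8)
The plan is to build a single combined policy $B$ out of $L$ and $R$ by branching on the pivot variable $x$, and then to derive $B \rightarrow \con_B(C_1 \vee C_2)$ propositionally by a case split on $x$. Since $x$ is existential and the extension variables $\set^i_B,\val^i_B$ for a universal $u_i$ must be definable to the left of $u_i$ in the prefix, a branch on $x$ is only available for the universals to the right of $x$. Concretely, I would set, for $u_i >_\Pi x$, $\set^i_B := (x \wedge \set^i_L) \vee (\neg x \wedge \set^i_R)$ and $\val^i_B := (x \wedge \val^i_L) \vee (\neg x \wedge \val^i_R)$, so that $B$ copies $L$ once $x$ is true and copies $R$ once $x$ is false. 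For $u_i <_\Pi x$, where $x$ is unavailable, $B$ is instead committed to the shared pivot annotation $\tau$: it sets exactly the variables in $\domain(\tau)$ to their $\tau$-values and leaves the rest unset. All of these are introduced as ordinary \eFrege extension variables, each in its correct place in the prefix, so the whole construction stays within \eFregeRed.

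With $B$ fixed, I would prove the two implications $x \wedge B \rightarrow \con_B(C_1)$ and $\neg x \wedge B \rightarrow \con_B(C_2)$ and combine them using $x \vee \neg x$. For the first, the hypothesis $L \rightarrow \con_L(C_1) \vee (\neg x \wedge \ann_{x,L}(\tau))$ has its pivot disjunct killed by $x$, leaving $L \rightarrow \con_L(C_1)$; the remaining work is to (i) see that $x \wedge B$ entails $L$ and (ii) transfer $\con_L(C_1)$ to $\con_B(C_1)$. Both reduce to matching the $\set$- and $\val$-variables of $B$ and $L$ on every universal occurring in the relevant annotations. To the right of $x$ this is immediate from the definition of $B$ once $x$ is fixed to $1$, since there $\set^i_B$ and $\set^i_L$ (and likewise $\val^i_B,\val^i_L$) literally coincide. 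To the left of $x$ it must follow from $\ann_{x,L}(\tau)$ forcing $L$ to agree with $\tau$, which is exactly what $B$ plays there. The second implication is the mirror image, using $R$, $\neg x$ and $\ann_{x,R}(\tau)$.

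The main obstacle is precisely the behaviour to the left of $x$, where no branch on $x$ is possible and the single pre-pivot policy must serve both continuations at once. The delicate point is that the annotation conversions $\ann_{\cdot,\cdot}$ contain not only positive conjuncts $\set^i \wedge u_i$ but also negative ``unset'' conjuncts $\neg\set^i$, so transferring $\con_L$ to $\con_B$ requires $B$ to set exactly the same pre-$x$ variables as $L$ — neither more nor fewer. The device that makes this go through is the exact-match discipline on pivot annotations in \irc: because both premises carry the same $\tau$ on $x$ (after instantiation, with the other literals kept non-contradictory), the case that actually uses the pivot is the one in which $\ann_{x,L}(\tau)$ and $\ann_{x,R}(\tau)$ hold, forcing $L$, $R$ and the $\tau$-committed $B$ to coincide exactly on $\domain(\tau)$; the complementary case, where a pivot-annotation condition fails, has to be discharged by arguing that the conversion of the surviving clause already holds on such plays. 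Isolating and proving this reconciliation is the crux, and I expect it to absorb most of the derivation, while the outer case split and the right-of-$x$ matching are routine and keep the overall \eFrege proof short (polynomial, essentially linear in the size of the annotations involved).
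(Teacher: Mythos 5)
Your post-pivot construction (branching on $x$ for $u_i >_\Pi x$) is essentially the paper's, but the pre-pivot definition of $B$ is where the argument fails, and it fails in a way that no amount of careful bookkeeping in the ``reconciliation'' step can repair. You commit $B$ to playing $\tau$ exactly on the universals left of $x$. Then step (i) of your plan, that $x \wedge B$ entails $L$, is false: if $L$ sets some pre-pivot $u_i \notin \domain(\tau)$, or sets $u_i \in \domain(\tau)$ to the value opposite to $\tau(u_i)$, an assignment consistent with $B$ need not be consistent with $L$, so the hypothesis $L \rightarrow \con_L(C_1 \vee \neg x^\tau)$ yields nothing on such plays. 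Worse, the conclusion can then be outright false for your $B$: take a single universal $u_1 <_\Pi x$, $\tau = \emptyset$, $C_1 = \{e^{1/u_1}\}$, $C_2 = \{f^{0/u_1}\}$, let $L$ set $u_1$ to $1$ and $R$ set $u_1$ to $0$, and let $\phi = (\neg u_1 \vee e)\wedge(u_1 \vee f)$. Both hypotheses are provable (each policy's pivot annotation $\ann_{x,\cdot}(\emptyset)$ is already falsified by $\set^1 = 1$), but your $B$ has $\set^1_B = 0$, so $\con_B(e^{1/u_1})$ and $\con_B(f^{0/u_1})$ are both identically false while $B$ itself is vacuously true; hence $B \rightarrow \con_B(C_1\vee C_2)$ is not a consequence of $\phi$ and admits no \eFrege proof at all.

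The missing idea is that when $L$ (resp.\ $R$) deviates from $\tau$ at some level before $x$, the policy $B$ must \emph{follow the deviating policy from that point on}: that is precisely the case in which the hypothesis already guarantees $\con_L(C_1)$, because the pivot disjunct $\neg x \wedge \ann_{x,L}(\tau)$ is killed by the deviation rather than by $x$, and the transfer from $\con_L(C_1)$ to $\con_B(C_1)$ requires $B$'s own $\set/\val$ variables to track $L$'s (not $\tau$'s). This is what the paper's $\diff_L^i$, $\diff_R^i$ and $\equ^i_{f=g}$ extension variables implement: $B$ detects, level by level, which of $L$ and $R$ first becomes inconsistent with $\tau$, copies that policy thereafter (defaulting to $L$ on a tie), plays both simultaneously while neither has deviated (they provably agree there, by the analogue of Lemma~\ref{lem:nLnR}), and only falls back to branching on $x$ if neither deviates by level $m$. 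So your ``complementary case'' is not dischargeable as you hope, and the pre-pivot part of the construction has to be replaced, not just argued about more carefully.
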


The proof of the simulation of \irc relies on Lemma~\ref{thm:res}. To prove this we have to first give the precise definitions of the policy $B$ based on policies $L$ and $R$. Schlaipfer et al.'s work~\cite{SSWZ20} is used to crucially make sure the strategy $B$, respects the prefix ordering.

\subsubsection{Building the Strategy}

We start to define $\val^i_B$ and $\set^i_B$ on lower $i$ values first. In particular we will always start with $1\leq i \leq m$ where $u_m$ is the rightmost universal variable still before the pivot variable $x$ in the prefix.
Starting from $i=0$, the initial segments of $\ann_{x,L}(\tau)$ and $\ann_{x,R}(\tau)$ may eventually reach such a point $j$ where one is contradicted. Before this point $L$ and $R$ are detailing the same strategy (they may differ on $\val^i$ but only when $\set^i$ is false) so this part of $B$ can be effectively played as both $L$ and $R$ simultaneously.
Without loss of generality, as soon as $L$ contradicts $\ann_{x,L}(\tau)$, we know that  $\con_L( x^\tau)$ is not satisfied by $L$ and thus it makes sense for $B$ to copy $L$, at this point and the rest of the strategy as it will satisfy $\con_B(C_1)$.
It is entirely possible that we reach $i=m$ and not contradict either $\ann_{x,L}(\tau)$ or $\ann_{x,R}(\tau)$.
Fortunately after this point in the game we now know the value the existential player has chosen for $x$.
We can use the $x$ value to decide whether to play $B$ as $L$ (if $x$ is true) or $R$ (if $x$ is false).

To build the circuitry for $\val^i_B$ and $\set^i_B$ we will introduce other circuits that will act as intermediate.
First we will use constants $\set^i_\tau$ and $\val^i_\tau$ that make $\ann_{x,S}(\tau)$ equivalent to $ \bigwedge_{u_i<_\Pi x} (\set^i_{S}\leftrightarrow \set^i_{\tau} )\wedge \set^i_{\tau}\rightarrow (u_i\leftrightarrow \val^i_{\tau} )$. This mainly makes our notation easier.
Next we will define circuits that represent two strategies being equivalent up to the $i$th universal variable. This is a generalisation of what was seen in the local strategy extraction for \ecalculus \cite{SSWZ20}.
\begin{center}$\equ^{0}_{f=g}:=1, \equ^{i}_{f=g}:=\equ^{i-1}_{f=g}\wedge (\set^i_f\leftrightarrow \set^i_g)\wedge (\set^i_f\rightarrow(\val^i_f \leftrightarrow \val^i_g) ).$
\end{center}
We specifically use this for a trigger variable that tells you which one of $L$ and $R$ differed from $\tau$ first.

\begin{center}$\diff_L^0:= 0\text{ and }\diff_L^i:= \diff_L^{i-1} \vee (\equ_{R=\tau}^{i-1}\wedge ((\set^i_L\oplus \set^i_\tau)\vee (\set^i_\tau\wedge(\val^i_L\oplus \val^i_\tau))))$ \\
  $\diff_R^0:= 0\text{ and }\diff_R^i:= \diff_R^{i-1} \vee (\equ_{L=\tau}^{i-1}\wedge ((\set^i_R\oplus \set^i_\tau)\vee (\set^i_\tau\wedge(\val^i_R\oplus \val^i_\tau))))$
\end{center}

Note that $\diff_L^i$ and $\diff_R^i$ can both be true but only if the strategies start to differ from $\tau$ at the same point.

Using these auxiliary variables, we can define a bottom policy $B$ that chooses between the left policy $L$ and the right policy $R$ as indicated above, following Suda and Gleiss's \texttt{Combine} operation~\cite{SG18}.
If one of the policies is inconsistent with the annotation~$\tau$ (this includes setting a variable that is not set by~$\tau$), policy $B$ follows whichever policy is inconsistent first, picking $L$ if both policies start deviating at the same time.
If both policies are consistent with $\tau$, policy $B$ follows $R$ if the pivot $x$ is false, otherwise it follows $L$.
\begin{defi}[Definition of resolvent policy for \textsf{IR-calc}]\label{def:B}

	For $0\leq i\leq m$,
	define $\val^i_B$ and $\set_B^i$ such $\val^i_B=\val_R^i$ and $\set_B^i=\set_R^i$ if 
	\begin{center}$
	\neg \diff_L^{i-1} \wedge (\diff_R^{i-1} \vee (\neg\set_{\tau}^i \wedge \neg\set_L^i \wedge \set_R^i) \vee (\set_{\tau}^i \wedge \set_L^i \wedge (\val_\tau^i \leftrightarrow \val_L^i)))
	$
	\end{center}
	and $\val^i_B=\val_L^i$ and $\set_B^i=\set_L^i$,
	otherwise.
	
	For $i>m$, define $\val^i_B$ and $\set_B^i$ such $\val^i_B=\val_R^i$ and $\set_B^i=\set_R^i$ if 
	\begin{center}$\neg \diff_L^m \wedge (\diff_R^m \vee \bar x)$ \end{center}
	and $\val^i_B=\val_L^i$ and $\set_B^i=\set_L^i$,
	otherwise.
	
\end{defi}

We will now define variables $B_L$ and $B_R$. These say that $B$ is choosing $L$ or $R$, respectively. These variables can appear rightmost in the prefix, as they will be removed before reduction takes place. The purpose of $B_L$ (resp. $B_R$) is that $\con_B$ becomes the same as $\con_L$ (resp. $\con_R$).
\begin{itemize}
	\item $B_L:= \bigwedge_{i=1}^n (\set^i_B\leftrightarrow \set^i_L)\wedge(\set^i_B\rightarrow (\val^i_B\leftrightarrow \val^i_L))$
	
	\item $B_R:= \bigwedge_{i=1}^n (\set^i_B\leftrightarrow \set^i_R)\wedge(\set^i_B\rightarrow (\val^i_B\leftrightarrow \val^i_R))$
\end{itemize}

The important points are that $B$ is set up so that it either takes values in $L$ or $R$, i.e. $B\rightarrow B_L\vee B_R$,
specifically we need that whenever the propositional formula $\ann_{x,B}(\tau)$ is satisfied, 
$B=B_L$ when $x$, and $B=B_R$ when $\neg x$. The variables $\set_B^i$ and $\val_B^i$ that comprise the policy are carefully constructed to come before $u_i$. A number of technical lemmas involving all these definitions is necessary for the simulation.

\begin{lem}\label{lem:chain}
	For $0<j\leq m$ the following propositions have short derivations in Extended Frege:
	\begin{itemize}
		\item $\diff_L^j \rightarrow \bigvee_{i=1}^j \diff_L^i \wedge \neg \diff_L^{i-1}$
		\item $\diff_R^j \rightarrow \bigvee_{i=1}^j \diff_R^i \wedge \neg \diff_R^{i-1}$
		\item $\neg \equ_{f=g}^j \rightarrow \bigvee_{i=1}^j \neg \equ_{f=g}^i \wedge \equ_{f=g}^{i-1}$. For $f,g \in \{L, R, \tau\}$.
	\end{itemize}
\end{lem}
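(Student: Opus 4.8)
The three bullets are all instances of a single \emph{first-occurrence} principle. Say a Boolean sequence $a_0, a_1, \dots, a_j$ is \emph{rooted} if its defining (extension) axiom forces $a_0 \leftrightarrow 0$. The claim in each case is exactly that, for a rooted sequence, $a_j \rightarrow \bigvee_{i=1}^{j} (a_i \wedge \neg a_{i-1})$, i.e.\ if the sequence is ever true then it must switch from false to true at some index. I would first identify the three sequences. For the first two bullets take $a_i := \diff_L^i$ and $a_i := \diff_R^i$; here $a_0 \leftrightarrow 0$ is immediate from the defining axioms $\diff_L^0 \leftrightarrow 0$ and $\diff_R^0 \leftrightarrow 0$. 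For the third bullet take $a_i := \neg\,\equ^i_{f=g}$ for each $f,g \in \{L,R,\tau\}$; since $\equ^0_{f=g} \leftrightarrow 1$, we again get $a_0 \leftrightarrow 0$. Note that the only property used is that $a_0$ is false: no monotonicity of the sequence is needed (these sequences happen to be monotone, but the argument never invokes this).

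I would prove the template by induction on $j$, producing a uniform family of \eFrege derivations. For $j = 1$ the goal is $a_1 \rightarrow (a_1 \wedge \neg a_0)$, which is a constant-size \Frege consequence of $a_0 \leftrightarrow 0$. For the inductive step, I extend the derivation of $a_{j-1} \rightarrow \bigvee_{i=1}^{j-1} (a_i \wedge \neg a_{i-1})$ and argue by a case split on $a_{j-1}$: if $a_{j-1}$ holds, the inductive hypothesis already supplies $\bigvee_{i=1}^{j-1}(a_i \wedge \neg a_{i-1})$, which is a sub-disjunction of the target; if $a_{j-1}$ is false, then from the antecedent $a_j$ we obtain the last disjunct $a_j \wedge \neg a_{j-1}$ directly. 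Merging the two branches is a routine propositional case analysis, so in either case $\bigvee_{i=1}^{j}(a_i \wedge \neg a_{i-1})$ follows. Each step adds only a constant number of \Frege lines on top of the previous derivation, so the derivation for index $j$ has size linear in $j$ up to the sizes of the formulas displayed, hence polynomial; this linearity is the entire ``short proof'' content of the lemma.

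Finally I would instantiate the template three times, with $a_i = \diff_L^i$, $a_i = \diff_R^i$, and $a_i = \neg\,\equ^i_{f=g}$ (the negation is precisely what converts the conjunctive $\equ$ recursion into a rooted sequence). There is no genuine analytic difficulty; everything is discharged from the extension axioms already introduced for the $\diff$ and $\equ$ variables, so no further extension variables are required. The only points needing care are that the argument is a \emph{meta}-induction, generating one \eFrege proof per value of $j$ rather than a single proof quantifying over $j$, and the accompanying bookkeeping of formula sizes to confirm that the growing disjunctions keep the total size polynomial. Both are routine, which is why I expect this lemma to be the easiest of the auxiliary facts supporting the simulation of \irc.
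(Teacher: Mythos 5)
Your proposal is correct and follows essentially the same route as the paper: both prove the statement by a meta-induction on $j$, with the base case discharged from $a_0$ being false by definition and the inductive step given by a case split on the previous index (use the hypothesis if it was already true, otherwise the new disjunct $a_j\wedge\neg a_{j-1}$ fires), and both observe that the three bullets are the same argument modulo renaming, since $\diff^0_L$, $\diff^0_R$ are false and $\equ^0_{f=g}$ is true. Your explicit ``rooted sequence'' abstraction and the remark that monotonicity is not needed are just a cleaner packaging of what the paper does.
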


\begin{proof}
	\textbf{Induction Hypothesis on $j$:}
	$\diff_L^j \rightarrow \bigvee_{i=1}^j \diff_L^i \wedge \neg \diff_L^{i-1}$ has an $O(j)$-size proof 
	
	\noindent\textbf{Base Case $j=1$}: 
	$\diff_L^1 \rightarrow \diff_L^1$ 
	is a basic tautology with a constant-size Frege proof,
	$\diff_L^0$ 
	is false by definition so Frege can assemble 
	$\diff_L^1 \rightarrow \diff_L^1\wedge \neg \diff_L^0$.
	
	\begin{sloppypar}
		\noindent\textbf{Inductive Step $j+1$}:
		$\neg \diff_L^j \vee  \diff_L^j$ 
		and 
		$\diff_L^{j+1}\rightarrow \diff_L^{j+1} $ 
		are tautologies with a constant-size Frege proof.
		Putting them together we get 
		$\diff_L^{j+1}\rightarrow\diff_L^{j+1}\wedge (\neg \diff_L^j \vee  \diff_L^j)$
		and weaken to 
		$\diff_L^{j+1}\rightarrow(\diff_L^{j+1}\wedge \neg \diff_L^j) \vee  \diff_L^j$.
		Using the induction hypothesis,
		$\diff_L^j \rightarrow \bigvee_{i=1}^j \diff_L^i \wedge \neg \diff_L^{i-1}$,
		we can change this tautology to 
	\end{sloppypar}
	\begin{center}$\diff_L^{j+1}\rightarrow(\diff_L^{j+1}\wedge \neg \diff_L^j)  \vee \bigvee_{i=1}^j \diff_L^i \wedge \neg \diff_L^{i-1}$
	\end{center}
	
	Note that since $\neg \diff^0_R, \equ^0_{L=\tau}, \equ^0_{R=\tau}  $ are all true. The proofs for $\diff^j_R$, $\neg \equ^j_{L=\tau}$ and $\neg \equ^j_{R=\tau}$ are identical modulo the variable names.
\end{proof}

\begin{lem}\label{lem:impl}
	For $0\leq i \leq j\leq m$ the following propositions that describe the monotonicity of $\diff$ have short derivations in Extended Frege:
	\begin{itemize}
		\item $\diff_L^i \rightarrow \diff_L^j$
		\item $\diff_R^i \rightarrow \diff_R^j$
		\item $\neg \equ_{f=g}^i \rightarrow \neg \equ_{f=g}^j$
	\end{itemize}
\end{lem}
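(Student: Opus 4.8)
The plan is to read all three claims straight off the recursive definitions of $\diff_L$, $\diff_R$ and $\equ_{f=g}$, since each of these quantities is built from its predecessor by a single logical connective whose monotonicity is a textbook Frege tautology. Concretely, I would first establish the single-step versions $\diff_L^{k-1}\rightarrow\diff_L^k$ and $\neg\equ_{f=g}^{k-1}\rightarrow\neg\equ_{f=g}^k$, and then chain them by transitivity of implication, fixing $i$ and inducting on $j$ exactly as in Lemma~\ref{lem:chain}.

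For the $\diff$ claims, note that the defining equation $\diff_L^k=\diff_L^{k-1}\vee(\cdots)$ exhibits $\diff_L^k$ as a disjunction having $\diff_L^{k-1}$ as one disjunct. Hence $\diff_L^{k-1}\rightarrow\diff_L^k$ is an instance of the tautology $A\rightarrow A\vee B$ and has a constant-size \Frege proof. The same holds verbatim for $\diff_R$ after renaming, exactly as in the closing remark of Lemma~\ref{lem:chain}. For the $\equ$ claim, the defining equation $\equ_{f=g}^k=\equ_{f=g}^{k-1}\wedge(\cdots)$ exhibits $\equ_{f=g}^k$ as a conjunction with $\equ_{f=g}^{k-1}$ as one conjunct, so $\equ_{f=g}^k\rightarrow\equ_{f=g}^{k-1}$ is an instance of $A\wedge B\rightarrow A$; taking the contrapositive (again a constant-size \Frege step) yields the single-step implication $\neg\equ_{f=g}^{k-1}\rightarrow\neg\equ_{f=g}^k$ for each $f,g\in\{L,R,\tau\}$.

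With the single-step implications in hand, I would compose them along $k=i+1,\dots,j$. Since \Frege proves transitivity of implication in constant size, each increment of $j$ adds only $O(1)$ lines, yielding an $O(j-i)$-size, hence $O(m)$-size, derivation of $\diff_L^i\rightarrow\diff_L^j$, and likewise for $\diff_R^i\rightarrow\diff_R^j$ and $\neg\equ_{f=g}^i\rightarrow\neg\equ_{f=g}^j$.

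I do not expect any genuine obstacle here: the entire content is carried by the \emph{shape} of the definitions (a disjunction for $\diff$, so the sequence is monotonically increasing once true; a conjunction for $\equ$, so $\neg\equ$ is monotonically increasing once true). The only thing requiring a remark is the size bound, namely that chaining $O(m)$ constant-size steps keeps the whole proof polynomially bounded, which it plainly does. This lemma is thus the degenerate companion of Lemma~\ref{lem:chain} and follows the identical induction skeleton.
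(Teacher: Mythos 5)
Your proposal is correct and follows essentially the same route as the paper: the paper's proof also fixes $i$, inducts on $j$ with base case $j=i$, derives the single-step implication from the fact that $\diff^{j+1}$ is a disjunction containing $\diff^j$ (respectively $\equ^{j+1}$ a conjunction containing $\equ^j$, contraposed), and chains by transitivity for an $O(j)$-size derivation. No gaps.
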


\begin{proof}
	For $\diff_L$ and $\diff_R$,
	
	\noindent\textbf{Induction Hypothesis on $j$:}
	$\diff_L^i \rightarrow \diff_L^j$ has an $O(j)$ proof.
	
	\noindent\textbf{Base Case $j=i$}: 
	$\diff_L^i \rightarrow \diff_L^i$ 
	is a tautology with a constant-size Frege proof.
	
	\noindent\textbf{Inductive Step $j+1$}: 
	$\diff^{j+1}_L:= \diff^j_L \vee A$ 
	where  $A$ is an expression.
	Therefore in all cases $\diff^j_L\rightarrow \diff^{j+1}_L$ is a straightforward corollary with a constant-size number of additional Frege steps.
	Using the induction hypothesis 
	$\diff^i_L\rightarrow \diff^j_L$ 
	we can get 
	$\diff^i_L\rightarrow \diff^{j+1}_L$.
	The proof is symmetric for $R$. 
	
	For $\neg \equ_{f=g}$,
	
	\noindent\textbf{Induction Hypothesis on $j$:}
	$\neg \equ_{f=g}^i \rightarrow \neg \equ_{f=g}^j$ has an $O(j)$ proof.
	
	\noindent\textbf{Base Case $j=i$}: 
	$\neg \equ_{f=g}^i \rightarrow \neg \equ_{f=g}^i$ 
	is a tautology that Frege can handle.
	
	\noindent\textbf{Inductive Step $j+1$}: 
	$\equ_{f=g}^{j+1}:= \equ_{f=g}^{j} \wedge A$ 
	where  $A$ is an expression.
	Therefore in all cases $\neg \equ_{f=g}^{j}\rightarrow \neg \equ_{f=g}^{j+1}$ is a straightforward corollary with a constant-size number of additional Frege steps.
	Using the induction hypothesis 
	$\neg \equ_{f=g}^i \rightarrow \neg \equ_{f=g}^j$
	we can get 
	$\neg \equ_{f=g}^i \rightarrow \neg \equ_{f=g}^{j+1}$.
\end{proof}

\begin{lem}\label{lem:rel}
	For $0\leq i \leq j\leq m$ the following propositions describe the relationships between the different extension variables and have short derivations in Extended Frege:
	
	\begin{itemize}
		\item $\equ^{i}_{L=\tau}\rightarrow \neg \diff^i_L$
		\item $\diff_L^i \wedge \neg \diff_L^{i-1}\rightarrow \equ^{i-1}_{R=\tau}$
		\item $\diff_L^i \wedge \neg \diff_L^{i-1}\rightarrow \neg \diff_R^{i-1}$
		\item $\equ^{i}_{R=\tau}\rightarrow \neg \diff^i_R$
		\item $\diff_R^i \wedge \neg \diff_R^{i-1}\rightarrow \equ^{i-1}_{L=\tau}$
		\item $\diff_R^i \wedge \neg \diff_R^{i-1}\rightarrow \neg \diff_L^{i-1}$
	\end{itemize}
\end{lem}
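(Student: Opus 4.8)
The plan is to prove all six propositions by a short induction on $i$, exploiting the fact that the defining recursions for $\diff$ and $\equ$ are complementary at each coordinate. First I would record the bookkeeping identity that, for $f\in\{L,R\}$, the increment clause in the definition of $\equ^i_{f=\tau}$, namely $(\set^i_f\leftrightarrow \set^i_\tau)\wedge (\set^i_f\rightarrow(\val^i_f \leftrightarrow \val^i_\tau))$, is propositionally equivalent to the negation of the disjunct $(\set^i_f\oplus \set^i_\tau)\vee (\set^i_\tau\wedge(\val^i_f\oplus \val^i_\tau))$ that triggers $\diff^i_f$ at coordinate $i$. This is a tautology in the four variables $\set^i_f,\set^i_\tau,\val^i_f,\val^i_\tau$, so \eFrege proves it in constant size. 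Writing $d^i_f$ for that disjunct, we thus obtain, as short \eFrege derivations, the clean recursions $\equ^i_{f=\tau}\leftrightarrow \equ^{i-1}_{f=\tau}\wedge \neg d^i_f$ together with $\diff^i_L\leftrightarrow \diff^{i-1}_L\vee(\equ^{i-1}_{R=\tau}\wedge d^i_L)$ and the symmetric form for $R$.

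With these in hand, the first and fourth propositions follow by an induction on $i$ mirroring Lemma~\ref{lem:impl}. For $\equ^i_{L=\tau}\rightarrow \neg\diff^i_L$, the base case $i=0$ is $1\rightarrow\neg 0$, and in the step I would use that $\equ^i_{L=\tau}$ implies both $\equ^{i-1}_{L=\tau}$ (hence $\neg\diff^{i-1}_L$ by the inductive hypothesis) and $\neg d^i_L$, which falsifies both disjuncts of $\diff^i_L$. Each step is a constant-size Frege move, so the whole derivation is $O(i)$. The fourth proposition is identical after swapping $L$ and $R$.

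The second and fifth propositions are one-line unfoldings of the $\diff$ recursion: from $\diff^i_L\leftrightarrow \diff^{i-1}_L\vee(\equ^{i-1}_{R=\tau}\wedge d^i_L)$, the hypotheses $\diff^i_L$ and $\neg\diff^{i-1}_L$ force the second disjunct, yielding $\equ^{i-1}_{R=\tau}$ in a constant-size Frege derivation. The third and sixth propositions then follow by composition: chaining the second proposition $\diff^i_L\wedge\neg\diff^{i-1}_L\rightarrow\equ^{i-1}_{R=\tau}$ with the fourth proposition at index $i-1$, namely $\equ^{i-1}_{R=\tau}\rightarrow\neg\diff^{i-1}_R$, gives $\neg\diff^{i-1}_R$ in a constant number of further steps. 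I expect the only real subtlety to be the complementarity identity of the first paragraph, and in particular keeping straight that $\diff^i_f$ tests the equivalence variable $\equ^{i-1}$ of the \emph{other} policy at the previous coordinate, so that the conclusions of the second and fifth propositions land on $\equ^{i-1}$ with the correct index and the correct policy.
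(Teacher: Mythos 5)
Your proof is correct and follows essentially the same route as the paper: an induction on $i$ for the first and fourth items, a one-step unfolding of the $\diff$ recursion for the second and fifth, and composition of those with the fourth at index $i-1$ for the third and sixth. The only difference is presentational — you package the paper's case analysis on $\set^{i+1}_\tau$ into a single complementarity tautology (the $\equ$ increment clause is the negation of the $\diff$ trigger, given the conjunct $\set^i_f\leftrightarrow\set^i_\tau$), which is a clean and valid constant-size Frege step.
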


\begin{proof}
	
	\noindent\textbf{Induction Hypothesis on $i$: }
	$\equ^{i}_{L=\tau}\rightarrow \neg \diff^i_L$ in an $O(i)$-size Frege proof.
	
	\noindent\textbf{Base Case $i=0$: } $\diff^i_L$ is defined as $0$ so $\neg \diff^i_L$ is true and trivially implied by 
	$\equ^{i}_{L=\tau}$. This can be shown in a constant-size Frege proof.
	
	\begin{sloppypar}
		\noindent\textbf{Inductive Step $i+1$:}
		If $\set^{i+1}_\tau$ is false then $\equ^{i+1}_{L=\tau}$ is equivalent to  $\equ^{i}_{L=\tau}\wedge \neg \set^{i+1}_L$ and $\neg \diff^{i+1}_L$ is equivalent to $\neg \diff^{i}_L\wedge \neg \set^{i+1}_L\vee \neg \equ^{i}_{L=\tau}$. Induction hypothesis is 
		$\equ^{i}_{L=\tau}\rightarrow \neg \diff^{i}_L$, 
		now $\equ^{i+1}_{L=\tau}$ implies $\neg \diff^{i}_L$ and $\neg \set^{i+1}$ which is enough for $\neg \diff^{i+1}_L$.
		If $\set^{i+1}_\tau$ is true then $\equ^{i+1}_{L=\tau}$ is equivalent to  $\equ^{i}_{L=\tau}\wedge \set^{i+1}_L\wedge (\val^{i+1}_L \leftrightarrow \val^{i+1}_\tau)$ and $\neg \diff^{i+1}_L$ is equivalent to $\neg \diff^{i}_L\wedge \set^{i+1}_L\wedge (\val^{i+1}_L \leftrightarrow \val^{i+1}_\tau)\vee \neg \equ^{i}_{L=\tau}$.
		Again, using the induction hypothesis, $\equ^{i+1}_{L=\tau}$ now implies ,$\diff^{i}_L$ $\set^{i+1}_L$ and $(\val^{i+1}_L \leftrightarrow \val^{i+1}_\tau)$ which is enough for $\diff^{i+1}_L$.
		
		Therefore using the induction hypothesis $\equ^{i+1}_{L=\tau}\rightarrow \neg \diff^{i+1}_L$. This can be shown in a constant number of Frege steps.
		Similarly for $R$.
	\end{sloppypar}
	
	The formulas $\diff^i_L\wedge \neg \diff^{i-1}_L \rightarrow \equ^{i-1}_{R=\tau}$ are simple corollaries of the inductive definition of $\diff^i_L$, and combined with $\equ^{i-1}_{R=\tau}\rightarrow \neg  \diff^{i-1}_R$ we get $\diff^i_L\wedge \neg \diff^{i-1}_L \rightarrow \neg  \diff^{i-1}_R$. Similarly if we swap $L$ and $R$.
\end{proof}

\begin{lem}\label{lem:tau}
	For any $0\leq i\leq m$
	the following propositions are true and have short Extended Frege proofs.
	
	\begin{itemize}
		\item $L\wedge \diff_L^i \rightarrow \neg \ann_{x,L}(\tau)$
		\item $R\wedge \diff_R^i \rightarrow \neg \ann_{x,R}(\tau)$
	\end{itemize}
\end{lem}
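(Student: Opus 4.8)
The plan is to prove the first implication $L\wedge \diff_L^i \rightarrow \neg \ann_{x,L}(\tau)$ by induction on $i$, and to obtain the second by interchanging the roles of $L$ and $R$ throughout (the definitions of $\diff_R$ and $\ann_{x,R}$ are symmetric to those of $\diff_L$ and $\ann_{x,L}$). The guiding intuition is that $\diff_L^i$ records that the policy $L$ has, at some position $j\leq i$, deviated from the annotation constants $\set^j_\tau,\val^j_\tau$, whereas $\ann_{x,L}(\tau)$ asserts exactly that no such deviation occurs. Using the constant-based form of $\ann_{x,L}(\tau)$ stated just before Lemma~\ref{lem:chain}, namely $\bigwedge_{u_j<_\Pi x}\bigl[(\set^j_L\leftrightarrow \set^j_\tau)\wedge(\set^j_\tau\rightarrow(u_j\leftrightarrow \val^j_\tau))\bigr]$, every index $j$ with $1\leq j\leq i\leq m$ satisfies $u_j<_\Pi x$, so the conjunct of $\ann_{x,L}(\tau)$ relevant to position $j$ is always present.

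The core of the argument is a single bounded-size propositional fact: for each position $j\leq m$,
\[
L \wedge \bigl( (\set^j_L \oplus \set^j_\tau) \vee (\set^j_\tau \wedge (\val^j_L \oplus \val^j_\tau)) \bigr) \rightarrow \neg\ann_{x,L}(\tau).
\]
I would prove this by case analysis on the antecedent disjunction. If $\set^j_L\oplus\set^j_\tau$ holds, it directly contradicts the conjunct $\set^j_L\leftrightarrow\set^j_\tau$ of $\ann_{x,L}(\tau)$. If $\set^j_\tau\wedge(\val^j_L\oplus\val^j_\tau)$ holds, then from $\set^j_\tau$ and $\ann_{x,L}(\tau)$ we get $\set^j_L$ and $u_j\leftrightarrow\val^j_\tau$; the policy $L$ contributes $\set^j_L\rightarrow(u_j\leftrightarrow\val^j_L)$, and chaining these equivalences gives $\val^j_L\leftrightarrow\val^j_\tau$, contradicting $\val^j_L\oplus\val^j_\tau$. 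Each case is a fixed-size syllogism that Frege discharges immediately.

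With this fact established, the induction is routine. The base case $i=0$ is vacuous since $\diff_L^0:=0$. For the step, unfold $\diff_L^i:=\diff_L^{i-1}\vee(\equ_{R=\tau}^{i-1}\wedge D)$, where $D:=(\set^i_L\oplus\set^i_\tau)\vee(\set^i_\tau\wedge(\val^i_L\oplus\val^i_\tau))$. In the left disjunct the induction hypothesis yields $\neg\ann_{x,L}(\tau)$; in the right disjunct $D$ holds (the factor $\equ_{R=\tau}^{i-1}$ is simply discarded), and the displayed fact instantiated at $j=i$ again yields $\neg\ann_{x,L}(\tau)$. Reasoning by cases on the disjunction then gives $L\wedge\diff_L^i\rightarrow\neg\ann_{x,L}(\tau)$ in a constant number of additional Frege steps, for an overall $O(i)$-size proof. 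Alternatively, one could first expand $\diff_L^i\rightarrow\bigvee_{j=1}^i \diff_L^j\wedge\neg\diff_L^{j-1}$ via Lemma~\ref{lem:chain} and apply the core fact to each disjunct, since $\diff_L^j\wedge\neg\diff_L^{j-1}$ forces the difference expression $D$ at position $j$.

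I expect the only delicate point to be the second case of the core fact, where one must correctly thread the three equivalences $\set^j_L\leftrightarrow\set^j_\tau$, $u_j\leftrightarrow\val^j_\tau$, and $\set^j_L\rightarrow(u_j\leftrightarrow\val^j_L)$ to conclude $\val^j_L\leftrightarrow\val^j_\tau$. This is a fixed-size tautology and presents no genuine obstacle for Frege; everything surrounding it is mechanical induction and case reasoning.
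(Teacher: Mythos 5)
Your proposal is correct and follows essentially the same approach as the paper: the core per-position case analysis (a $\set$-mismatch contradicts $\set^j_L\leftrightarrow\set^j_\tau$, while a $\val$-mismatch under $\set^j_\tau$ contradicts $u_j\leftrightarrow\val^j_\tau$ once $L$ forces $u_j\leftrightarrow\val^j_L$) is exactly the paper's argument. The only difference is cosmetic: the paper aggregates over positions via the disjunction from Lemma~\ref{lem:chain} rather than by direct induction on the definition of $\diff_L^i$, and you note that alternative yourself; both give linear-size Frege derivations.
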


\begin{proof}
	We primarily use the disjunction in Lemma~\ref{lem:chain}
	$\diff_L^j \rightarrow \bigvee_{i=1}^j \diff_L^i \wedge \neg \diff_L^{i-1}$.
	
	Each individual disjunct $\diff_L^i \wedge \neg \diff_L^{i-1}$ is saying the difference triggers at that point. We can represent that in a proposition that can be proven in Extended Frege:
	$\diff_L^i \wedge \neg \diff_L^{i-1} \rightarrow ((\set^i_L\oplus \set^i_\tau)\vee (\set^i_\tau\wedge(\val^i_L\oplus \val^i_\tau)))$. We want to show that this also triggers the negation of $\ann_{x,L}(\tau)$.
	If $L$ differs from $\tau$ on a $\set^i_L$ value we contradict $\ann_{x,L}(\tau)$ in one of two ways:
	$L\wedge (\set^i_L\oplus \set^i_\tau)\wedge \set^i_L \rightarrow \neg \set^i_\tau$ or 
	$L\wedge (\set^i_L\oplus \set^i_\tau)\wedge \neg \set^i_L \rightarrow \set^i_\tau$.
	
	If $L$ differs from $\tau$ on a $\val^i_L$ value when $\set_L^i=\set_\tau^i=1$   we contradict $\ann_{x,L}(\tau)$ in one of two ways:
	\begin{itemize}
		\item $L\wedge \set^i_L \wedge \set^i_\tau \wedge  (\set^i_\tau\rightarrow(\val^i_L\oplus \val^i_\tau))\wedge \val^i_L \rightarrow \neg \val^i_\tau \wedge u_i$
		\item $L\wedge \set^i_L \wedge \set^i_\tau \wedge (\set^i_\tau\rightarrow(\val^i_L\oplus \val^i_\tau))\wedge \neg \val^i_L \rightarrow  \val^i_\tau \wedge \neg u_i$.
	\end{itemize}
Each disjunct is a constant size Frege derivation 
	When put together with the big disjunction this lends itself to a linear-size (in $m$) Frege derivation which is also symmetric for $R$.
\end{proof}

\begin{lem}\label{lem:nLnR}
	For any $1\leq j\leq m$
	the following propositions are true and have a short Extended Frege proof.
	\begin{itemize}
		\item $\neg \diff_L^j \wedge \neg \diff_R^j \rightarrow \equ^j_{L=\tau}$ 
		\item $\neg \diff_L^j \wedge \neg \diff_R^j \rightarrow \equ^j_{R=\tau}$
		\item $\neg \diff_L^j \wedge \neg \diff_R^j \rightarrow (\set^j_B \leftrightarrow \set^j_L)$
		\item $\neg \diff_L^j \wedge \neg \diff_R^j \rightarrow \set^j_B\rightarrow(\val^j_B \leftrightarrow \val^j_L)$
		\item $\neg \diff_L^j \wedge \neg \diff_R^j \rightarrow (\set^j_B \leftrightarrow \set^j_R)$
		\item $\neg \diff_L^j \wedge \neg \diff_R^j \rightarrow \set^j_B\rightarrow(\val^j_B \leftrightarrow \val^j_R)$
		
	\end{itemize} 
	
\end{lem}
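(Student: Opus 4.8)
The plan is to split the six implications into two groups. The first two, concerning $\equ^j_{L=\tau}$ and $\equ^j_{R=\tau}$, express that if neither policy has yet deviated from $\tau$ then both policies still agree with $\tau$ up to level $j$; I would prove these by a single \emph{simultaneous} induction on $j$. The remaining four then follow by a constant-size case analysis, since once we know that $L$, $R$ and $\tau$ coincide at level $j$, the definition of $B$ forces $B$ to coincide with them as well.

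For the induction, I take as hypothesis that for all $i<j$ we have both $\neg\diff_L^i\wedge\neg\diff_R^i\rightarrow\equ^i_{L=\tau}$ and $\neg\diff_L^i\wedge\neg\diff_R^i\rightarrow\equ^i_{R=\tau}$. The base case $i=0$ is immediate because $\equ^0_{L=\tau}=\equ^0_{R=\tau}=1$ by definition. For the step, unfolding the definition of $\diff$ shows that $\neg\diff_L^j$ is equivalent to $\neg\diff_L^{j-1}\wedge\neg(\equ^{j-1}_{R=\tau}\wedge D_L^j)$, where $D_L^j:=(\set^j_L\oplus\set^j_\tau)\vee(\set^j_\tau\wedge(\val^j_L\oplus\val^j_\tau))$, and symmetrically for $R$. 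From $\neg\diff_L^j\wedge\neg\diff_R^j$ I read off $\neg\diff_L^{j-1}\wedge\neg\diff_R^{j-1}$ directly and apply the induction hypothesis to obtain $\equ^{j-1}_{L=\tau}$ and $\equ^{j-1}_{R=\tau}$. Feeding $\equ^{j-1}_{R=\tau}$ into the disjunction $\neg\equ^{j-1}_{R=\tau}\vee\neg D_L^j$ extracted from $\neg\diff_L^j$ leaves $\neg D_L^j$, which is precisely $(\set^j_L\leftrightarrow\set^j_\tau)\wedge(\set^j_\tau\rightarrow(\val^j_L\leftrightarrow\val^j_\tau))$; combining this with $\equ^{j-1}_{L=\tau}$ yields $\equ^j_{L=\tau}$ straight from the recursive definition of $\equ$. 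The derivation of $\equ^j_{R=\tau}$ is identical after swapping $L$ and $R$, using $\equ^{j-1}_{L=\tau}$ to cancel the corresponding disjunct. This is genuinely simultaneous: each statement at level $j$ consumes the \emph{other} statement at level $j-1$, but there is no circularity, since both are supplied by the hypothesis.

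For the four equality statements I would now argue entirely at level $j$, using the two $\equ$ statements just proved, which guarantee that $\set^j_L$, $\set^j_R$, $\set^j_\tau$ agree and that wherever they are set the corresponding $\val$ values agree with $\val^j_\tau$. Since $1\le j\le m$, the values $\set^j_B$ and $\val^j_B$ are governed by the first clause of Definition~\ref{def:B}, whose selection condition simplifies under $\neg\diff_L^{j-1}\wedge\neg\diff_R^{j-1}$ to $(\neg\set^j_\tau\wedge\neg\set^j_L\wedge\set^j_R)\vee(\set^j_\tau\wedge\set^j_L\wedge(\val^j_\tau\leftrightarrow\val^j_L))$. A two-way case split on $\set^j_\tau$ then finishes the argument. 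If $\set^j_\tau$ is false, then $\set^j_L$ and $\set^j_R$ are both false, the selection condition is false, so $B$ copies $L$ and all four statements hold (two by definition of $B$, the other two because $\set^j_B$ is false). If $\set^j_\tau$ is true, then $\set^j_L$ and $\set^j_R$ are both true and $\val^j_L\leftrightarrow\val^j_R\leftrightarrow\val^j_\tau$, the selection condition is true, so $B$ copies $R$, and the agreement of $R$ with $L$ at this level yields all four conclusions. Each case is a constant-size Frege derivation.

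I expect the only delicate point to be bookkeeping rather than mathematics: one must unfold the definitions of $\diff$, $\equ$ and $B$ correctly and ensure that each cancellation of an $\equ^{j-1}$ disjunct is justified by the matching half of the induction hypothesis. Because every inductive step and every case is of constant size, the whole derivation is $O(j)$ and hence short, as required.
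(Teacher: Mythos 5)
Your proposal is correct and follows essentially the same route as the paper: a simultaneous induction on $j$ establishing the two $\equ$ implications (each level-$j$ step consuming the other side's level-$(j-1)$ hypothesis via the $\equ^{j-1}$ guard inside the definition of $\diff$), with the four $\set_B/\val_B$ equalities then extracted as corollaries from Definition~\ref{def:B}. The paper phrases the induction contrapositively ($\neg\equ^j\rightarrow\diff_L^j\vee\diff_R^j$) and leaves the corollary step implicit, whereas you argue in the direct direction and spell out the case split on $\set^j_\tau$, but the logical content is the same.
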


\begin{proof}
	\begin{sloppypar}
		We first show $\neg \equ_{L=\tau}^j \rightarrow \neg \equ_{R=\tau}^{j-1}\vee \diff_L^j\vee \diff_R^j$ and 
		$\neg \equ_{R=\tau}^j \rightarrow \neg \equ_{L=\tau}^{j-1}\vee \diff_L^j\vee \diff_R^j$.
		$\neg \equ_{R=\tau}^{j-1}$ and $\neg \equ_{L=\tau}^{j-1}$ are the problems here respectively, but they can be removed via induction to eventually get
		$\neg \diff_L^j \wedge \neg \diff_R^j \rightarrow \equ^j_{L=\tau}$  and $\neg \diff_L^j \wedge \neg \diff_R^j \rightarrow \equ^j_{R=\tau}$. 
		The remaining implications are corollaries of these and rely on the definition of $\equ$, $\set_B$ and $\val_B$.
	\end{sloppypar}
	
	\noindent\textbf{Induction Hypothesis on j:} $\neg \diff_L^j \wedge \neg \diff_R^j \rightarrow \equ^j_{L=\tau}$ and $\neg \diff_L^j \wedge \neg \diff_R^j \rightarrow \equ^j_{R=\tau}$.
	
	\noindent\textbf{Base Case $j=0$:} $ \equ_{L=\tau}^j$ and $ \equ_{R=\tau}^j$ are both true by definition so the implications automatically hold.
	
	\begin{sloppypar}
	\noindent\textbf{Inductive Step $j$:} 
	$\neg \equ_{L=\tau}^{j} \rightarrow \neg \equ_{L=\tau}^{j-1} \vee (\set^j_L\oplus\set^j_\tau ) \vee (\set^j_L\wedge (\val^j_L \oplus \val^j_\tau) )$ and  $(\set^j_L\oplus\set^j_\tau ) \vee (\set^j_L\wedge (\val^j_L \oplus \val^j_\tau) )\rightarrow \diff_L^j \vee \neg \equ^{j-1}_{R=\tau}$ so we get $\neg \equ_{L=\tau}^{j} \rightarrow \neg \equ_{L=\tau}^{j-1} \vee \diff_L^j \vee \neg \equ^{j-1}_{R=\tau}$, which using the induction hypothesis to remove 
	$\neg \equ^{j-1}_{L=\tau}$ and $\neg \equ^{j-1}_{R=\tau}$ gives us $\neg \equ_{L=\tau}^{j} \rightarrow \diff_R^{j-1} \vee \diff_L^{j-1}$ which 
	can be weakened to $\neg \equ_{L=\tau}^{j} \rightarrow \diff_R^j \vee \diff_L^j$ which is equivalent to $\neg \diff_L^j \wedge \neg \diff_R^j \rightarrow \equ^j_{L=\tau}$. This is done similarly when swapping $L$ and $R$. 
	\end{sloppypar}
	
	We can obtain the remaining propositions as corollaries by using the definition of $\equ$.
\end{proof}

\begin{lem}\label{lem:LR}
	For any $0\leq i\leq m$
	the following propositions are true and have short Extended Frege proofs.
	\begin{itemize}
		\item $\diff_L^i \rightarrow (\val^i_B \leftrightarrow \val^i_L)\wedge (\set^i_B \leftrightarrow \set^i_L)$
		\item $\neg \diff_L^i\wedge \diff_R^i \rightarrow (\val^i_B \leftrightarrow \val^i_R)\wedge (\set^i_B \leftrightarrow \set^i_R)$
	\end{itemize}
\end{lem}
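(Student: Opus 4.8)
The plan is to reduce both implications to a single propositional fact about the selector condition of Definition~\ref{def:B}. Write $\Phi^i$ for the condition $\neg \diff_L^{i-1} \wedge (\diff_R^{i-1} \vee (\neg\set_{\tau}^i \wedge \neg\set_L^i \wedge \set_R^i) \vee (\set_{\tau}^i \wedge \set_L^i \wedge (\val_\tau^i \leftrightarrow \val_L^i)))$ under which Definition~\ref{def:B} assigns $\val^i_B=\val^i_R$ and $\set^i_B=\set^i_R$. Since the selector is defined as an if-then-else, \eFrege proves $\Phi^i \rightarrow (\val^i_B \leftrightarrow \val^i_R)\wedge(\set^i_B \leftrightarrow \set^i_R)$ and $\neg\Phi^i \rightarrow (\val^i_B \leftrightarrow \val^i_L)\wedge(\set^i_B \leftrightarrow \set^i_L)$ in a constant number of steps. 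Hence it suffices to derive $\diff_L^i \rightarrow \neg\Phi^i$ for the first bullet and $\neg\diff_L^i \wedge \diff_R^i \rightarrow \Phi^i$ for the second. The cases $i=0$ are vacuous, since $\diff_L^0$ and $\diff_R^0$ are defined to be $0$.

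For the first bullet I would case on $\diff_L^{i-1}$. If $\diff_L^{i-1}$ holds then $\neg\Phi^i$ is immediate, as $\neg\diff_L^{i-1}$ is a conjunct of $\Phi^i$. Otherwise, unfolding $\diff_L^i$ yields its second disjunct $\equ_{R=\tau}^{i-1}\wedge((\set^i_L\oplus \set^i_\tau)\vee(\set^i_\tau\wedge(\val^i_L\oplus \val^i_\tau)))$; from $\equ_{R=\tau}^{i-1}$ and Lemma~\ref{lem:rel} I get $\neg\diff_R^{i-1}$, refuting the first inner disjunct of $\Phi^i$. A short split on whether $L$ deviates from $\tau$ in its ``set'' bit ($\set^i_L\oplus\set^i_\tau$) or only in its value ($\set^i_\tau\wedge(\val^i_L\oplus\val^i_\tau)$) refutes the other two disjuncts $\neg\set^i_\tau\wedge\neg\set^i_L\wedge\set^i_R$ and $\set^i_\tau\wedge\set^i_L\wedge(\val^i_\tau\leftrightarrow\val^i_L)$: in each branch one of $\set^i_\tau$, $\set^i_L$, or $(\val^i_\tau\leftrightarrow\val^i_L)$ is forced false. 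Thus the inner disjunction is false and $\neg\Phi^i$ follows.

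For the second bullet I first apply the monotonicity of Lemma~\ref{lem:impl} to obtain $\neg\diff_L^{i-1}$ from $\neg\diff_L^i$, which establishes the leading conjunct of $\Phi^i$; it remains to establish the inner disjunction. I case on $\diff_R^{i-1}$: if it holds, the first disjunct is immediate. Otherwise $\diff_R^i\wedge\neg\diff_R^{i-1}$ unfolds to $\equ_{L=\tau}^{i-1}\wedge((\set^i_R\oplus\set^i_\tau)\vee(\set^i_\tau\wedge(\val^i_R\oplus\val^i_\tau)))$, and Lemma~\ref{lem:nLnR} applied to $\neg\diff_L^{i-1}\wedge\neg\diff_R^{i-1}$ supplies $\equ_{R=\tau}^{i-1}$. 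Combined with $\neg\diff_L^i$, whose second disjunct must then fail, this forces $\set^i_L\leftrightarrow\set^i_\tau$ together with $\set^i_\tau\rightarrow(\val^i_L\leftrightarrow\val^i_\tau)$, i.e.\ $L$ agrees with $\tau$ at index $i$. A final split on $\set^i_\tau$ closes the argument: if $\set^i_\tau$ holds then $\set^i_L$ holds with matching value, giving the third disjunct; if $\set^i_\tau$ fails then $\set^i_L$ fails while the deviation condition on $R$ forces $\set^i_R$, giving the first disjunct.

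I expect the main obstacle to be this middle case of the second bullet, where the data pinning down $\set^i_L$ and $\val^i_L$ relative to $\tau$ becomes accessible only after ``unlocking'' the second disjunct of $\diff_L^i$ via $\equ_{R=\tau}^{i-1}$, which itself requires routing $\neg\diff_L^{i-1}\wedge\neg\diff_R^{i-1}$ through Lemma~\ref{lem:nLnR}. Tracking which deviation bit (set versus value) is responsible in each branch, and checking that every branch is a genuinely constant-size Frege derivation so the whole family over $0\le i\le m$ stays linear, is where the bookkeeping is most delicate.
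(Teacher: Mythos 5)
Your proposal is correct and follows essentially the same route as the paper's proof: reduce each bullet to falsifying/satisfying the selector condition of Definition~\ref{def:B}, obtain $\neg\diff_R^{i-1}$ via Lemma~\ref{lem:rel} for the first bullet, and route $\neg\diff_L^{i-1}\wedge\neg\diff_R^{i-1}$ through Lemmas~\ref{lem:impl} and~\ref{lem:nLnR} to pin $L$ to $\tau$ at index $i$ for the second, finishing with a constant-size case split. The only differences are cosmetic groupings of the final cases (and a harmless mislabel of which inner disjunct is produced when $\set^i_\tau$ fails), so the argument matches the paper's.
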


\begin{proof}
	
	Suppose we want to prove $\diff_L^i \rightarrow (\val^i_B \leftrightarrow \val^i_L)\wedge (\set^i_B \leftrightarrow \set^i_L)$. 
	We will assume the definition 
	$$\textstyle\diff_L^i:= \diff_L^{i-1} \vee (\equ_{R=\tau}^{i-1}\wedge ((\set^i_L\oplus \set^i_\tau)\vee (\set^i_\tau\wedge(\val^i_L\oplus \val^i_\tau))))$$
	and show that following proposition (that determines $B$) is falsified
	$$\textstyle
	\neg \diff_L^{i-1} \wedge (\diff_R^{i-1} \vee (\neg\set_{\tau}^i \wedge \neg\set_L^i \wedge \set_R^i) \vee (\set_{\tau}^i \wedge \set_L^i \wedge (\val_\tau^i \leftrightarrow \val_L^i)))
	$$
	
	The first thing is that we only need to consider $\diff_L^i\wedge \neg \diff_L^{i-1}$ as $\diff_L^{i-1}$ already falsifies our proposition.
	Next we show $\neg \diff_R^{i-1}$ is forced to be true in this situation. To do this we need Lemma~\ref{lem:rel} for
	$\diff_L^i\wedge \neg\diff_L^{i-1} \rightarrow \neg \diff_R^{i-1}$.

	Now  we use $\diff_L^i\wedge \neg \diff_L^{i-1}\rightarrow ((\set^i_L\oplus \set^i_\tau)\vee (\set^i_\tau\wedge(\val^i_L\oplus \val^i_\tau)))$, we break this down into three cases 
	
	\begin{enumerate}
		\item $\diff_L^i\wedge\neg \diff_L^{i-1} \wedge \neg \set^i_L\wedge \set^i_\tau$
		\item $\diff_L^i\wedge\neg \diff_L^{i-1} \wedge  \set^i_L\wedge \neg\set^i_\tau$
		\item $\diff_L^i\wedge\neg \diff_L^{i-1}\wedge (\set^i_\tau\wedge(\val^i_L\oplus \val^i_\tau))$
	\end{enumerate}
	
	\begin{enumerate}
		\item $\diff_L^i\wedge\neg \diff_L^{i-1}$ contradicts $\diff_R^{i-1}$,
		$\set^i_\tau$ contradicts $(\neg\set_{\tau}^i \wedge \neg\set_L^i \wedge \set_R^i)$,
		and 
		$\neg \set^i_L$ contradicts $(\set_{\tau}^i \wedge \set_L^i \wedge (\val_\tau^i \leftrightarrow \val_L^i))$.
		
		\item $\diff_L^i\wedge\neg \diff_L^{i-1}$ contradicts $\diff_R^{i-1}$,
		$\set^i_L$ contradicts $(\neg\set_{\tau}^i \wedge \neg\set_L^i \wedge \set_R^i)$,
		and 
		$\neg \set^i_\tau$ contradicts $(\set_{\tau}^i \wedge \set_L^i \wedge (\val_\tau^i \leftrightarrow \val_L^i))$.

		\item \begin{sloppypar}$\diff_L^i\wedge\neg \diff_L^{i-1}$ contradicts $\diff_R^{i-1}$,
		$\set^i_\tau$ contradicts $(\neg\set_{\tau}^i \wedge \neg\set_L^i \wedge \set_R^i)$ and
		$(\val^i_L\oplus \val^i_\tau)$ contradicts $(\set_{\tau}^i \wedge \set_L^i \wedge (\val_\tau^i \leftrightarrow \val_L^i))$.
		\end{sloppypar}
	\end{enumerate}
	\begin{sloppypar}
		Since in all cases we contradict $ \neg \diff_L^{i-1} \wedge (\diff_R^{i-1} \vee (\neg\set_{\tau}^i \wedge \neg\set_L^i \wedge \set_R^i) \vee (\set_{\tau}^i \wedge \set_L^i \wedge (\val_\tau^i \leftrightarrow \val_L^i)))$ then as per definition ($\val_B,\set_B$)=($\val_L,\set_L$).
		Using $\diff_L^i\rightarrow (\diff_L^i\wedge\neg \diff_L^{i-1})\vee \diff_L^{i-1}$ we get $\diff_L^i \rightarrow (\val^i_B \leftrightarrow \val^i_L)\wedge (\set^i_B \leftrightarrow \set^i_L)$, in a polynomial number of Frege lines.

		Now we suppose we want to prove the second proposition $\neg \diff_L^i \wedge \diff_R^i\rightarrow (\val^i_B \leftrightarrow \val^i_R)\wedge (\set^i_B \leftrightarrow \set^i_R)$. 
		We need $\neg \diff_L^i \wedge \diff_R^i$ to satisfy 
		$
		\neg \diff_L^{i-1} \wedge 
		(
		\diff_R^{i-1} \vee 
		(
		\neg \set_{\tau}^i \wedge \neg\set_L^i \wedge \set_R^i
		) 
		\vee 
		(
		\set_{\tau}^i \wedge \set_L^i \wedge 
		(
		\val_{\tau}^i \leftrightarrow \val_L^i
		)
		)
		)
		$ instead.
	\end{sloppypar}	
	
	Lemma~\ref{lem:impl} gives us that $\neg \diff_L^i\rightarrow \neg \diff_L^{i-1}$. $\diff_R^{i-1}$ is enough to satisfy the formula, so the case we need to explore is when $\diff_R^{i-1}$ is false.
	We can show that $\neg \diff_L^{i-1}\wedge\neg \diff_R^{i-1} \rightarrow \equ_{L=\tau}^{i-1}$ using Lemma~\ref{lem:nLnR}. This allows us to examine just the part where $\diff_R$ is being triggered to be true by definition: $\neg \diff_L^i \wedge \neg \diff_R^{i-1} \rightarrow (\set_{\tau}^i \leftrightarrow \set_L^i)\wedge (\set_{\tau}^i\rightarrow (\val_\tau^i \leftrightarrow \val_L^i))$.
	
	Suppose the term $(\neg\set_{\tau}^i \wedge \neg\set_L^i \wedge \set_R^i)$ is false, assuming $\diff_R^{i-1}$ is also false,  we have to show that $(\set_{\tau}^i \wedge \set_L^i \wedge (\val_\tau^i \leftrightarrow \val_L^i)$ will be satisfied. We look at the three ways the term $(\neg\set_{\tau}^i \wedge \neg\set_L^i \wedge \set_R^i)$ can be falsified and show that all the parts of the remaining term must be satisfied when assuming $\neg \diff_L^{i}\wedge \diff_R^{i}\wedge \neg \diff_R^{i-1}$.
	
	\begin{enumerate}
		\item $\set_{\tau}^i$, in this case $(\val_\tau^i \leftrightarrow \val_L^i)$ is active and $\set_L^i$ is implied by $(\set_{\tau}^i \leftrightarrow \set_L^i)$.
		\item  $\set_{L}^i$, $\set_\tau^i$ is implied by $(\set_{\tau}^i \leftrightarrow \set_L^i)$, then $(\val_\tau^i \leftrightarrow \val_L^i)$ is active.
		\item $\neg \set_{R}^i$, then using $\diff_{R}^i$ and $\neg \diff_{R}^{i-1}$ we must have $\set_{\tau}^i$ (as this is the only allowed way $\diff$ can trigger). Once again, $(\val_\tau^i \leftrightarrow \val_L^i)$ is active and $\set_L^i$ is implied by $(\set_{\tau}^i \leftrightarrow \set_L^i)$.
	\end{enumerate}
	
	\begin{sloppypar}
	Since our trigger formula is always satisfied when $\neg \diff_L^{i}\wedge \diff_R^{i}\wedge \neg \diff_R^{i-1}$, it means that $(\val^i_B,\set^i_B)=(\val^i_R,\set^i_R)$.
	Using $\diff_R^i\rightarrow (\diff_R^i\wedge\neg \diff_R^{i-1})\vee \diff_R^{i-1}$ we get $\neg \diff_L^i\wedge \diff_R^i \rightarrow (\val^i_B \leftrightarrow \val^i_R)\wedge (\set^i_B \leftrightarrow \set^i_R)$, in a polynomial number of Frege lines.
	\end{sloppypar}
\end{proof}

\begin{lem}\label{lem:Bdiff}
	The following propositions are true and have short Extended Frege proofs.
	\begin{itemize}
		\item $B\wedge \diff_L^m\rightarrow B_L$
		\item $B\wedge \neg \diff_L^m\wedge \diff_R^m\rightarrow B_R$
	\end{itemize}
	
\end{lem}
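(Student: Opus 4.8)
The plan is to prove both implications conjunct-by-conjunct. Since $B_L=\bigwedge_{i=1}^n (\set^i_B\leftrightarrow \set^i_L)\wedge(\set^i_B\rightarrow (\val^i_B\leftrightarrow \val^i_L))$, it suffices to establish for each index $i$ the single conjunct $(\set^i_B\leftrightarrow \set^i_L)\wedge(\set^i_B\rightarrow (\val^i_B\leftrightarrow \val^i_L))$ and then take the conjunction over all $n$ indices; the analogous reduction applies to $B_R$. Conveniently, the conclusions of Lemma~\ref{lem:LR} (unconditional value- and set-equivalence) and Lemma~\ref{lem:nLnR} (set-equivalence and the conditional value-equivalence) already deliver exactly such per-index conjuncts, so the task reduces to a case analysis that, under each hypothesis, routes every index to the appropriate lemma. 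I would split the index range into $i\leq m$ and $i>m$ and treat the two regimes with different tools.

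For the indices $i>m$ I would appeal directly to Definition~\ref{def:B}, where $B$ follows $R$ exactly when $\neg \diff_L^m \wedge (\diff_R^m \vee \bar x)$ holds and follows $L$ otherwise. Under the hypothesis $\diff_L^m$ of the first bullet the conjunct $\neg\diff_L^m$ is false, so the condition fails and $B$ copies $L$, yielding the $B_L$-conjunct in a constant number of Frege steps per index. Under the hypothesis $\neg\diff_L^m\wedge\diff_R^m$ of the second bullet both $\neg\diff_L^m$ and $\diff_R^m$ hold, so the condition is satisfied, $B$ copies $R$, and the matching $B_R$-conjunct follows, again in constant size per index.

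The indices $i\leq m$ are where the earlier lemmas do the work. For the first bullet I must rule out, under $\diff_L^m$, the only bad case $\neg\diff_L^i\wedge\diff_R^i$, in which Lemma~\ref{lem:LR} would make $B$ copy $R$ rather than $L$. I would establish the auxiliary implication $\diff_L^m\wedge\neg\diff_L^i\rightarrow\neg\diff_R^i$: expand $\diff_L^m$ via Lemma~\ref{lem:chain} into $\bigvee_{l=1}^m \diff_L^l\wedge\neg\diff_L^{l-1}$, discard every disjunct with $l\leq i$ using $\neg\diff_L^i$ together with the monotonicity of Lemma~\ref{lem:impl}, and for each surviving disjunct ($l>i$) apply $\diff_L^l\wedge\neg\diff_L^{l-1}\rightarrow\neg\diff_R^{l-1}$ from Lemma~\ref{lem:rel} followed by monotonicity to pull $\neg\diff_R^{l-1}$ down to $\neg\diff_R^i$. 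With this bad case excluded, every position $i\leq m$ falls under either $\diff_L^i$ (Lemma~\ref{lem:LR}, first bullet) or $\neg\diff_L^i\wedge\neg\diff_R^i$ (Lemma~\ref{lem:nLnR}, third and fourth bullets), and both make $B$ copy $L$, so the $B_L$-conjunct holds throughout $1\leq i\leq m$.

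The second bullet is easier at $i\leq m$: the monotonicity $\diff_L^i\rightarrow\diff_L^m$ of Lemma~\ref{lem:impl} together with $\neg\diff_L^m$ yields $\neg\diff_L^i$ for every such $i$ outright, so only the cases $\neg\diff_L^i\wedge\diff_R^i$ (Lemma~\ref{lem:LR}, second bullet) and $\neg\diff_L^i\wedge\neg\diff_R^i$ (Lemma~\ref{lem:nLnR}, fifth and sixth bullets) remain, and both make $B$ copy $R$. Taking the conjunction of the per-index derivations over $1\leq i\leq n$ then assembles $B_L$ and $B_R$; since each invoked lemma has an $O(m)$-size derivation and there are $O(n)$ indices, the whole argument stays polynomial. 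The main obstacle is the first bullet at small indices: getting the interplay of Lemmas~\ref{lem:chain}, \ref{lem:impl} and~\ref{lem:rel} to cleanly exclude $\neg\diff_L^i\wedge\diff_R^i$ without blow-up, since this is the one place where we must reason about the global trigger point of $L$ rather than purely local information.
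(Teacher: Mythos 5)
Your proposal is correct and follows essentially the same route as the paper's proof: both split the indices at $m$, handle $i>m$ directly from Definition~\ref{def:B}, and for $i\leq m$ combine Lemma~\ref{lem:chain} (to locate the trigger point), Lemma~\ref{lem:rel} with monotonicity (to exclude the case $\neg\diff_L^i\wedge\diff_R^i$ under $\diff_L^m$), and Lemmas~\ref{lem:LR} and~\ref{lem:nLnR} for the per-index conjuncts. The only difference is presentational — you phrase the exclusion as a per-index auxiliary implication $\diff_L^m\wedge\neg\diff_L^i\rightarrow\neg\diff_R^i$, whereas the paper fixes the trigger index $j$ and splits into $i<j$ and $j\leq i\leq m$ — but the underlying argument is identical.
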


\begin{proof}
	We use the disjunction
	$\diff_L^m \rightarrow \bigvee_{j=1}^m \diff_L^j \vee \neg \diff_L^{j-1}$ from Lemma~\ref{lem:chain}.
	So there is some $j$ where this is the case. $i$ can be looked at in cases, where $(\val_B^i,\set_B^i)$ is determined by Definition~\ref{def:B}.
	\begin{itemize}
		\item For $1\leq i< j$ observe that
		$\diff_L^j \vee \neg \diff_L^{j-1} \rightarrow \neg \diff_R^{j-1}$.
		Now these negative literals propagate downwards.
		$\neg \diff_L^{j-1} \wedge \neg \diff_R^{j-1} \rightarrow \neg \diff_L^{i} \wedge \neg \diff_R^{i}$ for $0\leq i<j$
		and $\neg \diff_L^{i} \wedge \neg \diff_R^{i}$ means that $B$ and $L$ are consistent  for those $i$ as proven in Lemma~\ref{lem:nLnR}.
		
		\item For $j\leq i\leq m$, $\diff_L^j \rightarrow \diff_L^i$ and $\diff_L^i$ means $B$ and $L$ are consistent on those $i$ as proven in Lemma~\ref{lem:LR}.
		
		\item For indices greater than $m $, $B\wedge \diff_L^m$ falsifies $\neg \diff_L^m \wedge (\diff_R^m \vee \bar x)$, so $B$ and $L$ are consistent on those indices.
		
	\end{itemize}
	With the second proposition $\diff_R^m \rightarrow \bigvee_{j=1}^m \diff_R^j \vee \neg \diff_R^{j-1}$ once again. 
	So there is some $j$ where this is the case.
	Note that $\neg \diff_L^m \rightarrow \neg \diff_L^i$ for $i\leq m$.
	\begin{itemize}
		\item For $1\leq i< j$, both $\neg \diff_L^i$ and $\neg \diff_R^i$ occur so then $B$ and $R$ are consistent for these values.
		\item For $j\leq i\leq m$, $\diff_R^j \rightarrow \diff_R^i$ and $\diff_R^i\wedge  \neg \diff_L^i$ means $B$ and $R$ are consistent on those $i$ as proven in Lemma~\ref{lem:LR}.
		\item For indices greater than $ m $, $B\wedge \diff_R^m \wedge \neg \diff_L^m$ satisfies $\neg \diff_L^m \wedge (\diff_R^m \vee \bar x)$, so $B$ and $R$ are consistent on those indices.
	\end{itemize} 
	
	Each of these use a polynomial number of Frege steps and uses of previous lemmas (each of which consist of a polynomial number of Frege steps).
\end{proof}

\begin{lem}\label{lem:Bndiff}
	The following propositions are true and have short Extended Frege proofs.
	\begin{itemize}
		\item $B\wedge \neg \diff_L^m\wedge\neg \diff_R^m \rightarrow B_L \vee \neg x$
		\item $B\wedge \neg \diff_L^m\wedge\neg \diff_R^m \rightarrow B_R \vee  x$
	\end{itemize}
	
\end{lem}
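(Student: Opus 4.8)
The plan is to verify both target implications by splitting each of the conjunctions that define $B_L$ and $B_R$ at the pivot level $m$ and treating the two halves separately. Recall $B_L=\bigwedge_{i=1}^n (\set^i_B\leftrightarrow \set^i_L)\wedge(\set^i_B\rightarrow (\val^i_B\leftrightarrow \val^i_L))$ and symmetrically for $B_R$; so under the standing hypotheses $\neg\diff_L^m$ and $\neg\diff_R^m$ it is enough to derive each index-$i$ conjunct, first for $1\le i\le m$ and then for $m<i\le n$, and assemble them.

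For the low range $1\le i\le m$ I would argue that $B$ already agrees with \emph{both} $L$ and $R$, with no dependence on $x$. The monotonicity of Lemma~\ref{lem:impl} turns the hypotheses $\neg\diff_L^m$ and $\neg\diff_R^m$ into $\neg\diff_L^i\wedge\neg\diff_R^i$ for every $i\le m$. Feeding this into Lemma~\ref{lem:nLnR} immediately yields $(\set^i_B\leftrightarrow\set^i_L)$, $\set^i_B\rightarrow(\val^i_B\leftrightarrow\val^i_L)$ and the analogous statements with $R$ in place of $L$. Thus all $B_L$-conjuncts \emph{and} all $B_R$-conjuncts for $i\le m$ are derivable outright.

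The value of $x$ only matters on the high range $m<i\le n$, which I would read straight off the second clause of Definition~\ref{def:B}. Under $\neg\diff_L^m\wedge\neg\diff_R^m$ the guard $\neg\diff_L^m\wedge(\diff_R^m\vee\bar x)$ simplifies to $\bar x$, so when $x$ is true the ``otherwise'' branch fires and $(\val^i_B,\set^i_B)=(\val^i_L,\set^i_L)$, supplying the missing $B_L$-conjuncts; when $x$ is false the first branch fires and $(\val^i_B,\set^i_B)=(\val^i_R,\set^i_R)$, supplying the missing $B_R$-conjuncts. Combining the ranges, the hypotheses together with $x$ give every conjunct of $B_L$ (hence $B_L$), and together with $\neg x$ give every conjunct of $B_R$ (hence $B_R$). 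Case-splitting on $x$ then delivers $B_L\vee\neg x$ and $B_R\vee x$.

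The argument is essentially bookkeeping, and the only place needing a little care is justifying the guard simplification \emph{propositionally}: Extended Frege must collapse $\neg\diff_L^m\wedge(\diff_R^m\vee\bar x)$ to $\bar x$ from the antecedent $\neg\diff_R^m$, which is a constant-size Frege step. Since every lemma invoked has a short Extended Frege proof and the case split costs only constant overhead per index, the whole derivation remains polynomial, so I anticipate no genuine obstacle beyond keeping the $i\le m$ versus $i>m$ split clean.
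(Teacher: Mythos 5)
Your proposal is correct and follows essentially the same route as the paper: the low indices $1\le i\le m$ are handled by combining the monotonicity of Lemma~\ref{lem:impl} with Lemma~\ref{lem:nLnR} to show $B$ agrees with both $L$ and $R$ there, and the high indices $i>m$ are read directly off Definition~\ref{def:B} after simplifying the guard under $\neg\diff_L^m\wedge\neg\diff_R^m$, followed by the case split on $x$. No gaps.
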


\begin{proof}
	For indices $1\leq i\leq m$, since $\neg \diff_{L}^m \rightarrow \neg \diff_{L}^i$ and $\neg \diff_{R}^m \rightarrow \neg \diff_{R}^i$, Lemma~\ref{lem:nLnR} can be used to show that $B\wedge \neg\diff_{L}^m \wedge \neg\diff_{R}^m$ leads to $\set_B^i=\set_L^i=\set_R^i$ and $\val_B^i=\val_L^i=\val_R^i$ whenever $\set_B^i$ is also true. Extended Frege can prove the  $O(m)$ propositions that show these equalities for $1\leq i \leq m$.
	
	For $i>m$, by definition $B\wedge \neg \diff_L^m\wedge\neg \diff_R^m \wedge x$ gives $\set_B^i=\set_L^i$ and $\val_B^i=\val_L^i$. And $B\wedge \neg \diff_L^m\wedge\neg \diff_R^m \wedge \neg x$ gives $\set_B^i=\set_R^i$ and $\val_B^i=\val_R^i$.
	The sum of this is that $B\wedge \diff_{L}^m \wedge \diff_{R}^m \wedge x \rightarrow B_L$ and $B\wedge \diff_{L}^m \wedge \diff_{R}^m \wedge \neg x \rightarrow B_R$.
\end{proof}

\begin{lem}\label{lem:ir1}
	The following proposition is true and has a short Extended Frege proof.
	$B\rightarrow B_L \vee B_R$
\end{lem}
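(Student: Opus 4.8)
The plan is to prove $B \rightarrow B_L \vee B_R$ by a three-way case analysis on the top-level trigger variables $\diff_L^m$ and $\diff_R^m$, routing each case into the two preceding lemmas. The hypotheses $\diff_L^m$; $\neg\diff_L^m \wedge \diff_R^m$; and $\neg\diff_L^m \wedge \neg\diff_R^m$ are jointly exhaustive (the first absorbs the situation where both triggers fire), so it suffices to derive $B_L \vee B_R$ under each of them and then assemble the three implications by a single disjunction elimination, which Frege carries out in a constant number of steps.

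The first two cases are immediate once the relevant implication is cited. When $\diff_L^m$ holds, Lemma~\ref{lem:Bdiff} gives $B \wedge \diff_L^m \rightarrow B_L$, so $B_L \vee B_R$ follows by weakening. When $\neg\diff_L^m \wedge \diff_R^m$ holds, the second clause of Lemma~\ref{lem:Bdiff} gives $B \wedge \neg\diff_L^m \wedge \diff_R^m \rightarrow B_R$, and again $B_L \vee B_R$ follows by weakening.

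The only case with genuine content is the third, where neither policy has deviated from $\tau$ by position $m$. Here I would invoke both clauses of Lemma~\ref{lem:Bndiff}, namely $B \wedge \neg\diff_L^m \wedge \neg\diff_R^m \rightarrow B_L \vee \neg x$ and $B \wedge \neg\diff_L^m \wedge \neg\diff_R^m \rightarrow B_R \vee x$, and resolve them on the pivot $x$. The step $(B_L \vee \neg x) \wedge (B_R \vee x) \rightarrow B_L \vee B_R$ is a propositional tautology with a constant-size Frege proof: if $x$ holds then $\neg x$ is false and the first clause forces $B_L$, while if $x$ fails the second clause forces $B_R$. This is exactly where the construction of $B$ pays off --- when neither policy has yet contradicted $\tau$, the value the existential player assigns to $x$ selects which policy $B$ follows, and that selection is witnessed propositionally by resolution on $x$.

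I do not expect a real obstacle here: every ingredient is already packaged as an earlier lemma, and the combining step is a single resolution on the pivot. The one point to keep in mind is simply that the three case hypotheses cover all assignments, so the final disjunction elimination is sound; the resulting derivation adds only a constant number of Frege lines on top of the polynomially many lines supplied by Lemmas~\ref{lem:Bdiff} and~\ref{lem:Bndiff}.
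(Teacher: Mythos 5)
Your proposal is correct and follows essentially the same route as the paper, which simply states that the lemma follows by combining Lemmas~\ref{lem:Bdiff} and~\ref{lem:Bndiff} via a case analysis; you have merely spelled out the three cases on $\diff_L^m$ and $\diff_R^m$ and the final resolution on $x$ explicitly.
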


\begin{proof}
	This roughly says that $B$ either is played entirely as $L$ or is played as $R$. We can prove this by combining Lemmas~\ref{lem:Bdiff}~and~\ref{lem:Bndiff}, it essentially is a case analysis in formal form. 
\end{proof}

\begin{lem}\label{lem:ir2}
	The following propositions are true and have short Extended Frege proofs.
	\begin{itemize}
		\item $B\wedge \ann_{x,B}(\tau) \wedge x\rightarrow B_L$,
		
		\item $B\wedge \ann_{x,B}(\tau) \wedge \neg x\rightarrow B_R$
	\end{itemize}
\end{lem}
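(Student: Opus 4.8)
The plan is to prove both implications at once by a three-way case split on the deviation flags $\diff_L^m$ and $\diff_R^m$, using the exhaustive cases $\diff_L^m$, $\neg\diff_L^m \wedge \diff_R^m$, and $\neg\diff_L^m \wedge \neg\diff_R^m$. In each case I reduce the goal to one of the already-established Lemmas~\ref{lem:Bdiff} and~\ref{lem:Bndiff}, which pin down whether $B$ copies $L$ or $R$. All steps are ordinary Boolean reasoning over the extension variables, so each is a short \eFrege derivation, and assembling the three cases costs only a constant number of additional Frege steps.

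The case $\neg\diff_L^m \wedge \neg\diff_R^m$, in which both policies still agree with $\tau$ up to the pivot, is handled directly by Lemma~\ref{lem:Bndiff}: the implication $B \wedge \neg\diff_L^m \wedge \neg\diff_R^m \rightarrow B_L \vee \neg x$ yields $B_L$ once the hypothesis $x$ is added (for the first goal), and symmetrically $B_R \vee x$ yields $B_R$ under $\neg x$ (for the second goal). The genuinely new work lies in the two \emph{mismatched} cases. For the first goal, the case $\neg\diff_L^m \wedge \diff_R^m$ is awkward, because Lemma~\ref{lem:Bdiff} there produces $B_R$ rather than the desired $B_L$; for the second goal, the case $\diff_L^m$ is awkward for the same reason. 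The idea is to show that these configurations are excluded by the hypothesis $\ann_{x,B}(\tau)$: if $B$ follows a policy that has already deviated from $\tau$, then $B$ cannot be consistent with $\tau$.

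To make this precise I will record two routine auxiliary observations, each provable by a short \eFrege derivation straight from the definitions. First, $B \wedge B_L \rightarrow L$ and $B \wedge B_R \rightarrow R$: wherever $B_L$ equates the $\set$- and $\val$-variables of $B$ and $L$, the clause $\set^i_B \rightarrow (u_i \leftrightarrow \val^i_B)$ of $B$ transfers to the corresponding clause of $L$. Second, since $\ann_{x,S}(\tau)$ depends on the policy $S$ only through its variables $\set^i_S$, in the equivalent form $\bigwedge_{u_i <_\Pi x}(\set^i_S \leftrightarrow \set^i_\tau) \wedge (\set^i_\tau \rightarrow (u_i \leftrightarrow \val^i_\tau))$, the equivalence $\set^i_B \leftrightarrow \set^i_L$ supplied by $B_L$ gives $B_L \rightarrow (\ann_{x,B}(\tau) \leftrightarrow \ann_{x,L}(\tau))$, and likewise for $R$. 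With these in hand the mismatched case for the second goal is dispatched thus: from $B \wedge \diff_L^m$ Lemma~\ref{lem:Bdiff} gives $B_L$, hence $L$, and then Lemma~\ref{lem:tau} gives $\neg\ann_{x,L}(\tau)$; but $B_L$ together with the hypothesis $\ann_{x,B}(\tau)$ forces $\ann_{x,L}(\tau)$, a contradiction, so $B_R$ follows ex falso. The mismatched case $\neg\diff_L^m \wedge \diff_R^m$ for the first goal is handled symmetrically through $R$ and $\ann_{x,R}(\tau)$. I expect the main obstacle to be not any single long calculation but the correct bookkeeping of which flag configuration forces which policy, together with the observation that Lemma~\ref{lem:tau} is precisely the tool that rules out the two inconsistent configurations; everything else is short propositional reasoning.
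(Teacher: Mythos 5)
Your proposal is correct and follows essentially the same route as the paper's proof: both arguments dispose of the $\neg\diff_L^m\wedge\neg\diff_R^m$ case via Lemma~\ref{lem:Bndiff} and rule out the deviated cases by combining Lemma~\ref{lem:Bdiff} with Lemma~\ref{lem:tau} through the transfer facts $B\wedge B_L\rightarrow L$ and $B_L\rightarrow(\ann_{x,B}(\tau)\leftrightarrow\ann_{x,L}(\tau))$, which the paper uses implicitly and you usefully make explicit. Your explicit three-way case split is just the case-analysis reformulation of the paper's step of deriving $B\rightarrow\neg\ann_{x,B}(\tau)\vee(\neg\diff_L^m\wedge\neg\diff_R^m)$ and resolving.
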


\begin{proof}
	We start with 
	$B\wedge \neg \diff_L^m\wedge\neg \diff_R^m \rightarrow B_L \vee \neg x$
	and  $B\wedge \neg \diff_L^m\wedge\neg \diff_R^m \rightarrow B_R \vee  x$.
	It remains to remove $\neg \diff_L^m\wedge\neg \diff_R^m$ from the left hand side. This is where we use $L\wedge \diff_L^i \rightarrow \neg \ann_{x,L}(\tau)$
	and  $R\wedge \diff_R^i \rightarrow \neg \ann_{x,R}(\tau)$ from Lemma~\ref{lem:tau}. These can be simplified to $B\wedge B_L\wedge \diff_L^m \rightarrow \neg \ann_{x,B}(\tau)$
	and  $B\wedge B_R\wedge \diff_R^m \rightarrow \neg \ann_{x,B}(\tau)$. 
	The $B_L$ and $B_R$ can be removed by using $B\wedge \diff_L^m\rightarrow B_L$
	and $B\wedge \neg \diff_L^m\wedge \diff_R^m\rightarrow B_R$ and we can end up with $B \rightarrow \neg \ann_{x,B}(\tau)\vee (\neg\diff_R^m\wedge \neg\diff_L^m)$. We can use this to resolve out $(\neg\diff_R^m\wedge \neg\diff_L^m)$ and get $B\wedge \ann_{x,B}(\tau) \wedge x\rightarrow B_L$ and 
	$B\wedge \ann_{x,B}(\tau) \wedge \neg x\rightarrow B_R$.
\end{proof}

\begin{proof}[Proof of Lemma~\ref{thm:res}]
	Since $B\wedge B_L\rightarrow L $ and $B\wedge B_R\rightarrow R $,
	$L\rightarrow \con_{L}(C_1 \vee \neg x^\tau)$ and $R\rightarrow \con_{R}(C_2 \vee x^\tau)$ imply
	$B\wedge B_L \rightarrow \con_{B}(C_1 \vee C_2)\vee \ann_{x,B}(\tau)$, 
	$B\wedge B_R \rightarrow \con_{B}(C_1 \vee C_2)\vee \ann_{x,B}(\tau)$,
	$B\wedge B_L \rightarrow \con_{B}(C_1 \vee C_2)\vee \neg x$ and
	$B\wedge B_R \rightarrow \con_{B}(C_1 \vee C_2)\vee x$.
	
	We combine $B \rightarrow B_L \vee B_R$
	with $B\wedge B_L \rightarrow \con_{B}(C_1 \vee C_2)\vee \ann_{x,B}(\tau)$ (removing $B_L$) and
	$B\wedge B_R \rightarrow \con_{B}(C_1 \vee C_2)\vee \ann_{x,B}(\tau)$ (removing $B_R$) to gain 
	$B \rightarrow \con_{B}(C_1 \vee C_2)\vee \ann_{x,B}(\tau)$.
	Next, we aim to derive $B \rightarrow \con_{B}(C_1 \vee C_2)\vee \neg \ann_{x,B}(\tau)$.
        Policy $B$ is set up so that $B\wedge \ann_{x,B}(\tau) \wedge x \rightarrow B_L$ and $B \wedge \ann_{x,B}(\tau) \wedge \neg x \rightarrow B_R$ have short proofs (Lemma~\ref{lem:ir2}).
         We resolve these, respectively, with
       	$B\wedge B_R \rightarrow \con_{B}(C_1 \vee C_2)\vee x$ (on $x$) to obtain
        $B \wedge \ann_{x,B}(\tau) \wedge B_R \rightarrow B_L\vee \con_B(C_1 \vee C_2)$, and with
        $B \wedge B_L \rightarrow \con_{B}(C_1 \vee C_2)\vee \neg x$ (on $\neg x$) to obtain
        $B \wedge \ann_{x,B}(\tau) \wedge B_L \rightarrow B_R \vee \con_B(C_1 \lor C_2)$.
        Putting these together allows us to remove $B_L$ and $B_R$, deriving $B \wedge \ann_{x,B}(\tau) \rightarrow \con_B(C_1 \vee C_2)$, which can be rewritten as $B \rightarrow \con_B(C_1 \vee C_2) \vee \neg \ann_{x,B}(\tau)$.

        We now have two formulas $B \rightarrow \con_{B}(C_1 \vee C_2)\vee \neg \ann_{x, B}(\tau)$  and $B \rightarrow \con_{B}(C_1 \vee C_2)\vee  \ann_{x, B}(\tau)$, which resolve to get $B \rightarrow \con_{B}(C_1 \vee C_2)$.
\end{proof}

\begin{thm}\label{thm:irc}
	\eFregeRed p-simulates \irc.
\end{thm}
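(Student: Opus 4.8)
The plan is to mirror the proof of Theorem~\ref{thm:mrc}: induct over the lines of a given \irc refutation of $\Pi\phi$, attaching to each annotated clause $C$ a policy $S$ — encoded by the extension variables $\set^i_S,\val^i_S$ of Section~\ref{sec:ir} — together with a short \eFrege derivation of the invariant $S\wedge\phi\rightarrow\con_S(C)$ from the clauses of $\phi$. When the induction reaches the empty clause we obtain $S\wedge\phi\rightarrow\con_S(\emptyset)$; since $\con_S(\emptyset)$ is the empty disjunction $\bot$, this reads $\phi\rightarrow\neg S$, so $S$ is a winning universal policy and the refutation is completed by $\forall$-reduction, exactly as in the \mrc closing argument.

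For the base case, an axiom downloads a matrix clause $C$ with $\tau=\comprehension{0/u}{u\text{ universal in }C}$. I take $S$ to set precisely the universal variables of $C$ to the value falsifying their literal (i.e.\ to $\tau$) and to leave every other $\set^i_S$ false. For each surviving existential literal $l$, the guard $\ann_{x,S}(\restr{l}{\tau})$ then holds, since among the variables left of $l$ the policy sets exactly those occurring in $C$; as $C$ is a clause of $\phi$ with all its universal literals falsified, some existential $l$ must be true, giving $\con_S(C)$ in a constant number of \eFrege steps. The resolution case is precisely Lemma~\ref{thm:res}: from the policies $L,R$ for the premises $C_1\vee\neg x^\tau$ and $C_2\vee x^\tau$, the \texttt{Combine}-style resolvent policy $B$ of Definition~\ref{def:B} supplies $B\wedge\phi\rightarrow\con_B(C_1\vee C_2)$.

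The instantiation rule replaces each annotation $\sigma$ by the completion $\complete{\sigma}{\alpha}$, so I extend the current policy to $S'$, which additionally sets each variable of $\domain(\alpha)$ left unset by $S$ to its (constant) $\alpha$-value, leaving all already-set variables untouched. Because $\alpha$ carries no variables, $\set^i_{S'},\val^i_{S'}$ stay left of $u_i$, and the updated guards $\ann_{x,S'}(\complete{\sigma}{\alpha})$ follow from $\ann_{x,S}(\sigma)$: at each position that $\alpha$ newly fixes, the guard's forbidding conjunct $\neg\set^i_S$ is replaced by a set-value conjunct that the extension satisfies, so $\con_{S'}(\instantiate(\alpha,C))$ is derivable from $\con_S(C)$ in a short \eFrege proof. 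Finally, reading $\phi\vdash\neg S$ as $\bigvee_i(\set^i_S\wedge(u_i\oplus\val^i_S))$, I reduce the universal variables rightmost-first: reducing $u_i$ to $\bot$ and to $\top$ yields the terms $\set^i_S\wedge\val^i_S$ and $\set^i_S\wedge\neg\val^i_S$, which \eFrege resolves away — they cannot both hold — dropping the $u_i$-term and, iterating, reaching the empty clause. Each $\set^i_S,\val^i_S$ sits left of $u_i$, so every reduction is legal.

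Since the load-bearing step — merging two policies at a resolution while respecting the prefix — is already discharged by Lemma~\ref{thm:res} and its supporting Lemmas~\ref{lem:chain}--\ref{lem:ir2}, the main obstacle in the theorem proper is the instantiation rule: one must verify that completing the annotations really is a prefix-respecting extension of the policy and that the guard bookkeeping in $\ann_{x,S}$ (in particular the over-assignment conjuncts) survives it. The remaining concern is polynomial size: each resolution builds $B$ from $L$ and $R$ using only $O(m)$ fresh gates per universal variable, and as these are extension variables the total proof stays polynomial in the length of the \irc refutation, establishing the p-simulation.
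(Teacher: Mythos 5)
Your proposal is correct and follows essentially the same route as the paper: the same invariant $S\wedge\phi\rightarrow\con_S(C)$, the same policy constructions for axiom and instantiation steps, Lemma~\ref{thm:res} for resolution, and the same rightmost-first $\forall$-reduction at the end. The only (immaterial) difference is that the paper first completes the final policy to a total strategy (forcing every $\set^i$ to $1$) so that $\neg S$ is exactly $\bigvee_i(u_i\oplus\val^i_S)$ before reducing, whereas you keep the $\set^i_S$ guards and argue the two reduced disjuncts are jointly contradictory, which works equally well.
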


\begin{proof}
	We prove by induction that every annotated clause $C$ appearing in an \irc proof has a local policy $S$ such that $\phi \vdash_{\eFrege} S\rightarrow \con_S(C)$ and this can be done in a polynomial-size proof.
	
	\textbf{Axiom:}
	Suppose $C\in \phi$ and $D= \instantiate(C, \tau)$ for partial annotation $\tau$.
	We construct policy~$B$ such that $B\rightarrow \con_B(D)$ can be derived from $C$.

\begin{center}{$\set^j_B= \begin{cases} 1 & \text{if }u_j\in \domain(\tau)\\ 0& u_j\notin \domain(\tau)\end{cases}$, $\val^j_B= \begin{cases} 1 & \text{ if }1/u_j\in\tau\\ 0& \text{ if }0/u_j\in\tau\end{cases}$}
	
\end{center}

	\textbf{Instantiation:}
	Suppose we have an instantiation step for $C$ on a single universal variable $u_i$ using instantiation $0/u_i$, so the new annotated clause is $D= \instantiate(C, 0/u_i)$.
	From the induction hypothesis $T\rightarrow \con_T(C)$ we will develop $B$ such that $B\rightarrow \con_B(D)$.

\begin{center}{$\set^j_B= \begin{cases} 1 & \text{if }j=i\\ \set^j_T & \text{if }j\neq i\end{cases}$, $\val^j_B= \begin{cases} \val^j_T \wedge \set^j_T & \text{if }j=i\\ \val^j_T & \text{if }j\neq i\end{cases}$}
\end{center}

$\val^j_T \wedge \set^j_T$ becomes $\val^j_T \vee \neg \set^j_T$ for instantiation by $1/u_j$. Either case means $B$ satisfies the matching annotations $\ann$ as $T$ appearing in our converted clauses $\con_B(C)$ and  $\con_B(D)$, proving the rule as an inductive step.  
	
	\textbf{Resolution:}
	See Lemma~\ref{thm:res}.
	
	\textbf{Contradiction:}
	At the end of the proof we have $T\rightarrow \con_T(\bot)$.
	$T$ is a policy, so we turn it into a full strategy $B$ by having for each $i$: $\val^i_B \leftrightarrow (\val^i_T \wedge \set^i_T)$ and $\set^i_B= 1$.
	Effectively this instantiates $\bot$ by the assignment that sets everything to $0$ and we can argue that $B\rightarrow \con_B(\bot)$ although $\con_B(\bot)$ is just the empty clause. So we have $\neg B$. 
	But $\neg B$ is just $\bigvee_{i=1}^n (u_i \oplus  \val_B^i)$.
	Furthermore, just as in Schlaipfer et al.'s work~\cite{SSWZ20}, we have been careful with the definitions of the extension variables $\val_B^i$ so that they are left of $u_i$ in the prefix. 
	In \eFregeRed we can use the reduction rule (this is the first time we use the reduction rule). 
	We show an inductive proof of $\bigvee_{i=1}^{n-k} (u_i\oplus  \val_B^i)$ for increasing $k$  eventually leaving us with the empty clause. This essentially is where we use the \Ared rule. Since we already have $\bigvee_{i=1}^{n} (u_i\oplus  \val_B^i)$ we have the base case and we only need to show the inductive step.
	
	We derive from $\bigvee_{i=1}^{n+1-k} (u_i\oplus  \val_B^i)$ both $(0\oplus  \val_B^{n-k+1})\vee\bigvee_{i=1}^{n-k} (u_i\oplus  \val_B^i)$ and $(1\oplus  \val_B^{n-k+1})\vee\bigvee_{i=1}^{n-k} (u_i\oplus  \val_B^i)$ from reduction.
	We can resolve both to derive $\bigvee_{i=1}^{n-k} (u_i\oplus  \val_B^i)$.
	
	We continue this until we reach the empty disjunction.
\end{proof}
\begin{cor}
	\eFregeRed p-simulates \ecalculus. 
\end{cor}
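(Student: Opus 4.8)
The plan is to obtain the corollary by transitivity, composing Theorem~\ref{thm:irc} with the known fact that \irc p-simulates \ecalculus~\cite{BCJ19}. Since p-simulations compose --- the composite of two polynomial-time, theorem-preserving proof translations is again such a translation --- it suffices to exhibit a polynomial-time map from \ecalculus refutations to \irc refutations; running the output through the algorithm underlying the proof of Theorem~\ref{thm:irc} then produces the desired \eFregeRed refutation of the same QBF.

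To translate \ecalculus into \irc I would process the refutation line by line. The only essential difference between the two calculi is that \ecalculus annotates existential literals with the restriction of a \emph{total} assignment $\tau$ to the universal variables, whereas \irc builds annotations incrementally via the instantiation rule. An \ecalculus axiom that downloads a non-tautological matrix clause $C$ under $\tau$ yields $\comprehension{l^{\restr{l}{\tau}}}{l\in C, l\text{ existential}}$. I would reproduce this in \irc by first applying the \irc axiom rule to $C$, obtaining the clause annotated by $\tau_0=\comprehension{0/u}{u\text{ universal in }C}$, and then applying a single instantiation step with $\tau$. The resolution rules of the two systems are literally identical once all annotations are total, because the \irc side condition that missing annotation values be missing in both pivots is vacuous in that case. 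Hence each \ecalculus line maps to at most two \irc lines, giving a linear-size \irc proof computable in polynomial time.

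The one point that needs checking is that $\instantiate(\tau,\cdot)$ applied to the \irc axiom reproduces exactly the \ecalculus annotation, i.e.\ that $\restr{l}{(\complete{\restr{l}{\tau_0}}{\tau})}=\restr{l}{\tau}$ for each existential literal $l$ in a non-tautological instance. This holds because on the universals occurring in $C$ the falsifying assignment $\tau_0$ already agrees with any $\tau$ that keeps $C$ non-tautological, while the completion fills in the remaining universals directly from $\tau$; restricting to the variables left of $l$ then returns $\restr{l}{\tau}$. This is routine annotation bookkeeping, so I do not anticipate a genuine obstacle: all the substance of the corollary is already carried by Theorem~\ref{thm:irc}. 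As an alternative self-contained route, one could simply rerun the induction in the proof of Theorem~\ref{thm:irc} specialised to total annotations, where every policy $S$ is in fact a full strategy, every $\set^i_S$ is constantly true, and the resolvent construction of Definition~\ref{def:B} collapses accordingly; but the transitivity argument is cleaner and I would present that.
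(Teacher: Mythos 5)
Your proposal matches the paper's own treatment: the paper derives the corollary exactly as you do, by composing Theorem~\ref{thm:irc} with the known p-simulation of \ecalculus by \irc, and it likewise notes the alternative direct route in which the $\set^i$ variables are taken to be constantly true. Your annotation bookkeeping for the axiom step is sound (modulo the minor wording point that the relevant condition is that $\tau$ falsifies the universal literals of $C$, so that $\tau$ agrees with $\tau_0$ on $\domain(\tau_0)$), so there is nothing to add.
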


	While this can be proven as a corollary of the simulation of \irc, a more direct simulation can be achieved by defining the resolvent strategy by removing the $\set^i$ variables (i.e. by considering them as always true).

\section{Extended Frege+\texorpdfstring{\Ared}{∀-Red} p-simulates \irmc}\label{sec:irm}

\subsection{\irmc}
\irmc was designed to compress annotated literals in clauses in order to simulate \lqrc \cite{BCJ14}. Like that system it uses the $*$ symbol, but since universal literals do not appear in an annotated clause, the $*$ value is added to the annotations, $0/u, 1/u, */u$ being the first three possibilities in an extended annotation (we can consider the fourth to be when $u$ does not appear in the annotation).

\begin{figure}[h]
			Axiom and instantiation rules as in \irc in Figure~\ref{fig:IR}.
			\begin{prooftree}
				\AxiomC{$x^{\tau\cup\xi}\lor C_1 $ }
				\AxiomC{$\lnot x^{\tau\cup\sigma}\lor C_2 $}
				\RightLabel{(Resolution)}
				\BinaryInfC{$\instantiate(\sigma,C_1)\cup\instantiate(\xi,C_2)$}
			\end{prooftree}
			\text{$\domain(\tau)$, $\domain(\xi)$ and $\domain(\sigma)$ are mutually disjoint.}\\\text{$\tau$ is a partial assignment to the universal variables with $\mathsf{codomain}(\tau)=\{0,1\}$. }\\\text{$\sigma$ and $\xi$ are extended partial assignments with $\mathsf{codomain}(\sigma)=\mathsf{codomain}(\xi)=\{0,1, *\}$.}
			\begin{prooftree}
				\AxiomC{$C\lor b^\mu\lor b^\sigma$}
				\RightLabel{(Merging)}
				\UnaryInfC{$C\lor b^\xi$}
			\end{prooftree}
			\text{$\domain(\mu)=\domain(\sigma)$. $\xi=\comprehension{c/u}{c/u\in\mu,c/u\in\sigma}\cup
				\comprehension{*/u}{c/u\in\mu,d/u\in\sigma,c\neq d}$ }.
			\caption{ The rules of \irmc \cite{BCJ19}.
			\label{fig:IRM}}
\end{figure}

The rules of \irmc as given in Figure~\ref{fig:IRM}, become more complicated as a result of the $*/u$ annotations. In particular resolution is no longer done  between matching pivots but matching is done internally in the resolution steps.
$*/u$ annotations are meant to represent ambiguous annotations so it could mean a  pair of pivots literals that each have a $*/u$ annotation do not actually match on $u$. The solution to this is to allow compatibility where one pivot has a $*/u$ annotation where the other has no annotation in $u$. The idea is that the blank annotation is instantiated on-the-fly with the correct function for $*/u$ so that the annotations truly match. The resolvent takes this into account by joining the instantiated clauses minus the pivot. 

Additionally in order to introduce $*$ annotations a merge rule is used.  

It is in \irmc where the positive $\set$ literals introduced in the simulation of \irc become useful. In most ways $\set^i_S$ asserts the same things as $*/u_i$, that $u_i$ is given a value, but this value does not have to be specified.

\subsection{Policies and Simulating \irmc}
\subsubsection{Conversion}

The first major change from \irc is that while $\ann_S$ worked on three values in \irc, in \irmc we effectively run in four values $\set^i_S, \neg\set^i_S, \set^i_S\wedge u_i$ and $\set^i_S\wedge \neg u_i$. $\set^i_S$ is the new addition deliberately ambiguous as to whether $u_i$ is true or false. Readers familiar with the $*$ used in \irmc may notice why $\set^i_S$ works as a conversion of $*/u_i$, as $\set^i_S$ is just saying our policy has given a value but it may be different values in different circumstances. 

\noindent$\ann_{x,S}(\tau)= \bigwedge_{1/u_i\in \tau} (\set_S^i \wedge u_i) \wedge \bigwedge_{0/u_i\in \tau} (\set_S^i \wedge\bar u_i) \wedge \bigwedge_{*/u_i\in \tau} (\set_S^i) \wedge\bigwedge_{u_i\notin \domain(\tau)} (\neg \set_S^i)$.

\noindent$\con_{S}(x^\tau)= x \wedge \ann_{x,S}(\tau)$, $\con_{S}(C_1)= \bigvee_{x^\tau\in C_1} \con(x^\tau)$

\subsubsection{Policies}
Like in the case of \irc, most work needs to be done in the \irmc resolution steps, although here it is even more complicated. A resolution step in \irmc is in two parts. 
Firstly $C_1 \vee \neg x^{\tau \sqcup \sigma}$, $C_2 \vee x^{\tau \sqcup \xi}$ are both instantiated (but by $*$ in some cases), secondly they are resolved on a matching pivot. We simplify the resolution steps so that  $\sigma$ and $\xi$ only contain $*$ annotations, for the other constant annotations that would normally be found in these steps suppose we have already instantiated them in the other side so that they now appear in $\tau$ (this does not affect the resolvent). 

Again we assume that there are policies $L$ and $R$ such that $L\rightarrow \con_L(C_1 \vee \neg x^{\tau \sqcup \sigma})$ and  $R\rightarrow \con_R(C_2 \vee x^{\tau \sqcup \xi})$.
We know that if $L$ falsifies $\ann_{x,L}({\tau \sqcup \sigma})$ then $\con_L(C_1)$ and likewise if $R$ falsifies $\ann_{x,R}({\tau \sqcup \xi})$ then $\con_R(C_2)$ is satisfied.
These are the safest options, however this leaves cases when $L$ satisfies  $\ann_{x,L}({\tau \sqcup \sigma})$ and $R$ satisfies $\ann_{x,R}({\tau \sqcup \xi})$ but $L$ and $R$ are not equal. 
This happens either when $\set^i_L$ and $\neg \set^i_R$ both occur for $*/u_i\in \sigma$ or  when $\neg \set^i_L$ and $ \set^i_R$ both occur for $*/u_i\in \xi$.

This would cause an issue if $B$ had to choose between $L$ and $R$ to satisfy $\con_B(C_1\vee C_2)$, as previously in \irc we would be able to be agreeable to both $L$ and $R$ and defer our choice later down the prefix (which could be necessary). 
Fortunately, we are not trying to satisfy $\con_B(C_1\vee C_2)$ but $\con_B(\instantiate(\xi, C_1)\vee\instantiate(\sigma, C_2))$, so we have to choose between a policy that will satisfy $\con_B(\instantiate(\xi, C_1))$ and a policy that will satisfy $\con_B(\instantiate(\sigma, C_2))$. This is similar to doing the internal instantiation steps separately from the resolution steps, but the instantiation step need a slight bit more care as they instantiate by functions rather than constants. What this looks like is that in addition to $L$ we will occasionally borrow values from $R$ and vice versa. 
By borrowing values from the opposite policy we obtain a working new policy that does not have to choose between left and right any earlier than we would have for \irc. 

\subsubsection{Difference and Equivalence Variables}

We update our functions to take into account the 4 values. Note here again we assume $\sigma$ and $\xi$ only contain $*$ annotations.

\noindent$\equ^{0}_{f=g}:=1$

\noindent $\equ^{i}_{f=g}:=\equ^{i-1}_{f=g}\wedge (\set^i_f\leftrightarrow \set^i_g)\wedge (\set^i_f\rightarrow(\val^i_f \leftrightarrow \val^i_g) )$ when $*/u_i\notin g$

\noindent $\equ^{i}_{f=g}:=\equ^{i-1}_{f=g}\wedge \set^i_f$ when $*/u_i\in g$

\noindent$\diff_L^0:= 0$ and 
\noindent$\diff_R^0:= 0$

\noindent{For $u_i\notin \domain(\tau\sqcup \sigma \sqcup \xi)$},
$\diff_L^i:= \diff_L^{i-1} \vee (\equ_{R=\tau\sqcup \xi}^{i-1}\wedge \set^i_L)$,

\noindent{For $u_i\in \domain(\tau)$}, $\diff_L^i:= \diff_L^{i-1} \vee (\equ_{R=\tau\sqcup \xi}^{i-1}\wedge (\neg\set^i_L\vee (\set^i_\tau\wedge(\val^i_L\oplus \val^i_\tau))))$

\noindent {For $u_i\in \domain(\sigma)$},
$\diff_L^i:= \diff_L^{i-1} \vee (\equ_{R=\tau\sqcup \xi}^{i-1}\wedge \neg\set^i_L)$

\noindent{For $u_i\in \domain(\xi)$}, $\diff_L^i:= \diff_L^{i-1} \vee (\equ_{R=\tau\sqcup \xi}^{i-1}\wedge \set^i_L)$

\subsubsection{Policy Variables}
We define the policy variables $\val^i_B$ and $\set^i_B$ based on a number of cases, in all cases $\val_B^i$ and $\set_B^i$ are defined on variables left of $u_i$.

{For $u_i\notin \domain(\tau\sqcup \sigma \sqcup \xi)$, $u_i<x$},

\noindent$(\val^i_B, \set^i_B)=
\begin{cases} 
	(\val^i_R, \set^i_R)& \text{if } \neg \diff_{L}^{i-1}\wedge(\diff_R^{i-1} \vee\neg \set_L^{i})\\ 
	(\val^i_L, \set^i_L) & \text{otherwise. }
\end{cases}$

{For $u_i\in\domain(\tau)$},

\noindent$(\val^i_B, \set^i_B)= 
\begin{cases} 
	(\val^i_R, \set^i_R)& \text{if } \neg \diff_{L}^{i-1}\wedge(\diff_R^{i-1} \vee(\set_L^{i} \wedge (\val_L^{i}\leftrightarrow \val_\tau^{i})))\\ 
	(\val^i_L, \set^i_L) & \text{otherwise. }\end{cases}$

{For $*/u_i\in\sigma$},

\noindent$(\val^i_B, \set^i_B)= 
\begin{cases}
	(0,1)  & \text{if }\neg \diff_L^{i-1}\wedge\diff_R^{i-1}\wedge \neg \set_R^i\\
	(\val^i_R, \set^i_R)& \text{if } \neg \diff_{L}^{i-1}\wedge \set_R^{i}\wedge(\diff_R^{i-1} \vee\set_L^{i})\\
	(\val^i_L, \set^i_L) & \text{otherwise. }
\end{cases}$

{For $*/u_i\in\xi$},

\noindent$(\val^i_B, \set^i_B)= 
\begin{cases}
	(0,1)  & \text{if }\diff_L^{i-1}\wedge \neg \set_L^i\\
	(\val^i_R, \set^i_R)& \text{if } \neg \diff_{L}^{i-1}\wedge(\diff_R^{i-1} \vee\neg\set_L^{i})\\
	(\val^i_L, \set^i_L) & \text{otherwise. }
\end{cases}$

For $u_i>x$,

\noindent$(\val^i_B, \set^i_B)= 
\begin{cases}
	(\val^i_R, \set^i_R)& \text{if } \neg \diff_{L}^{m}\wedge(\diff_R^{m} \vee \neg x)\\
	(\val^i_L, \set^i_L) & \text{otherwise. }
\end{cases}$

The idea for the policy $B$ is to stick to $\tau\sqcup \sigma \sqcup \xi$ until either $L$ or $R$ differ, then commit to whichever strategy that is differing (and default to $L$ when both start to differ at the same time). However there are cases where a $\set_L$ or $\set_R$ value may differ from  $\tau\sqcup \sigma \sqcup \xi$ but it should not be counted as a true difference for $L$ or $R$. An example is when $*/u_i\in \sigma$ and $\set_R$ is false, we should not commit to $R$ here, but instead borrow the set and value pair from $L$ for this case.
Once we commit to $L$ or $R$ we may still have make sure $B$ satisfies the instantiated resolvent so a few cases where we have force $\set_B$ to be true and we set $\val_B$ to be false.
Finally if no difference is found along $\tau\sqcup \sigma \sqcup \xi$ we surely have to commit to either $L$ or $R$ depending on the value of the existential literal $x$.

\subsection{Proof in \texorpdfstring{\eFregeRed}{eFrege+∀red}}

\begin{lem}\label{lem:irm:chain}
	For $0<j\leq m$ the following propositions have short derivations in Extended Frege:
	\begin{itemize}
		\item $\diff_L^j \rightarrow \bigvee_{i=1}^j \diff_L^i \wedge \neg \diff_L^{i-1}$
		\item $\diff_R^j \rightarrow \bigvee_{i=1}^j \diff_R^i \wedge \neg \diff_R^{i-1}$
		\item $\neg \equ_{L=\tau\sqcup\sigma}^j \rightarrow \bigvee_{i=1}^j \neg \equ_{L=\tau\sqcup\sigma}^i \wedge  \equ_{L=\tau\sqcup\sigma}^{i-1}$
		\item $\neg \equ_{R=\tau\sqcup\xi}^j \rightarrow \bigvee_{i=1}^j \neg \equ_{R=\tau\sqcup\xi}^i \wedge  \equ_{R=\tau\sqcup\xi}^{i-1}$
		
	\end{itemize}
\end{lem}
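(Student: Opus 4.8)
The plan is to observe that this is the \irmc counterpart of Lemma~\ref{lem:chain} and that its proof transfers essentially verbatim, because the only features of the sequences that the earlier argument exploited are preserved by the new \irmc definitions.

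First I would record the two structural facts that drive everything. By definition $\diff_L^0 = \diff_R^0 = 0$ and $\equ^0_{f=g} = 1$. Moreover, inspecting the case split in the \irmc definitions, every defining equation for $\diff_L^i$ (and symmetrically $\diff_R^i$) has the shape $\diff_L^i := \diff_L^{i-1} \vee E^i$ for some expression $E^i$, regardless of whether $u_i$ lies in $\domain(\tau)$, $\domain(\sigma)$, $\domain(\xi)$, or in none of them; likewise each clause defining $\equ^i_{f=g}$ has the shape $\equ^i_{f=g} := \equ^{i-1}_{f=g} \wedge F^i$, so that $\neg \equ^i_{f=g}$ is again a single-step disjunctive extension $\neg \equ^{i-1}_{f=g} \vee \neg F^i$. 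Hence each of the four sequences appearing in the statement, namely $\diff_L^\bullet$, $\diff_R^\bullet$, $\neg \equ_{L=\tau\sqcup\sigma}^\bullet$ and $\neg \equ_{R=\tau\sqcup\xi}^\bullet$, is a disjunctive chain that is false at index $0$; the particular contents of $E^i$ and $F^i$ are irrelevant to what follows.

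Then I would run the same induction on $j$ as in Lemma~\ref{lem:chain}, uniformly for a generic chain $D_0 = 0$ with $D_i = D_{i-1} \vee E^i$. In the base case $j = 1$, the tautology $\diff_L^1 \rightarrow \diff_L^1$ together with the constant $\neg \diff_L^0$ yields $\diff_L^1 \rightarrow \diff_L^1 \wedge \neg \diff_L^0$ in a constant number of Frege steps. In the inductive step, Frege first derives the tautology $D_{j+1} \rightarrow (D_{j+1} \wedge \neg D_j) \vee D_j$, and then rewrites the trailing $D_j$ using the induction hypothesis $D_j \rightarrow \bigvee_{i=1}^j (D_i \wedge \neg D_{i-1})$, telescoping into $D_{j+1} \rightarrow \bigvee_{i=1}^{j+1} (D_i \wedge \neg D_{i-1})$. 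Each step adds only $O(1)$ lines, so the derivation for a fixed chain has size $O(j)$; applying this four times with $D \in \{\diff_L, \diff_R, \neg\equ_{L=\tau\sqcup\sigma}, \neg\equ_{R=\tau\sqcup\xi}\}$ still yields short proofs, since $j \leq m \leq n$.

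I do not expect a genuine obstacle here: the argument is oblivious to the precise, case-defined form of $\diff$ and $\equ$ in \irmc. The only point requiring attention is the first one, namely checking that every branch of the \irmc case analysis really does preserve the one-step disjunctive form for $\diff$ (and the one-step conjunctive form for $\equ$), so that the generic chain-decomposition induction applies without modification. Once that bookkeeping is confirmed, the proof is identical in shape to that of Lemma~\ref{lem:chain}, with the two new $\equ$ chains handled exactly like the $\diff$ chains via the substitution $D \mapsto \neg \equ$.
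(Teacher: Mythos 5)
Your proposal is correct and follows the same route as the paper, which simply observes that the proof of Lemma~\ref{lem:chain} carries over unchanged; you additionally make explicit the structural invariant (each $\diff$ step is a one-step disjunctive extension of the previous, each $\equ$ step a one-step conjunctive extension, with the right base values) that justifies the transfer, which is exactly the ``bookkeeping'' the paper leaves implicit.
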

\begin{proof}
	The proof of Lemma~\ref{lem:chain} still works despite the modifications to definition.
\end{proof}

\begin{lem}\label{lem:irm:impl}
	For $0\leq i \leq j\leq m$ the following propositions that describe the monotonicity of $\diff$ and $\equ$ have short derivations in Extended Frege:
	\begin{itemize}
		\item $\diff_L^i \rightarrow \diff_L^j$
		\item $\diff_R^i \rightarrow \diff_R^j$
		\item $\neg \equ_{f=g}^i \rightarrow \neg \equ_{f=g}^j$
	\end{itemize}
\end{lem}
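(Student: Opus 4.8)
The plan is to transfer the proof of Lemma~\ref{lem:impl} essentially verbatim, since the only thing that proof exploited was the \emph{definitional shape} of $\diff$ and $\equ$, and that shape is preserved by the \irmc redefinitions. Concretely, I would first record the structural observation that, across every case of the \irmc definition, $\diff_L^i$ is of the form $\diff_L^{i-1}\vee A_i$ and $\diff_R^i$ is of the form $\diff_R^{i-1}\vee A_i'$ for some expressions $A_i, A_i'$ (the disjunct depending on whether $u_i$ lies in $\domain(\tau)$, $\domain(\sigma)$, $\domain(\xi)$, or none of them), and that $\equ^i_{f=g}$ is of the form $\equ^{i-1}_{f=g}\wedge B_i$ for some expression $B_i$ (with the two sub-cases $*/u_i\in g$ and $*/u_i\notin g$ giving different $B_i$). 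The point is that for the monotonicity claims only the top-level connective matters, not the internal structure of $A_i$ or $B_i$.

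Given this, I would run the same induction on $j$ as in Lemma~\ref{lem:impl}. For $\diff_L^i\rightarrow\diff_L^j$, the base case $j=i$ is the tautology $\diff_L^i\rightarrow\diff_L^i$ with a constant-size Frege proof, and the inductive step uses that $\diff_L^j\rightarrow\diff_L^{j+1}$ is a constant-size consequence of $\diff_L^{j+1}:=\diff_L^j\vee A_{j+1}$ (disjunction weakening), which composes with the induction hypothesis $\diff_L^i\rightarrow\diff_L^j$ to give $\diff_L^i\rightarrow\diff_L^{j+1}$ in $O(j)$ total steps. The argument for $\diff_R$ is symmetric. For $\neg\equ^i_{f=g}\rightarrow\neg\equ^j_{f=g}$ I would dualise: the base case is again a tautology, and since $\equ^{j+1}_{f=g}:=\equ^j_{f=g}\wedge B_{j+1}$ yields the constant-size step $\neg\equ^j_{f=g}\rightarrow\neg\equ^{j+1}_{f=g}$ (contrapositive of conjunction elimination), the induction hypothesis $\neg\equ^i_{f=g}\rightarrow\neg\equ^j_{f=g}$ extends to $j+1$.

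There is essentially no hard step here; the content is the structural observation, not the Frege bookkeeping, which is verbatim that of Lemma~\ref{lem:impl}. The one point demanding genuine (if routine) care is verifying the shape claim uniformly: I would check each of the four case-branches in the definition of $\diff_L^i$ (and the mirror cases for $\diff_R^i$), confirming that none of the newly introduced four-valued or $*$-annotation branches accidentally breaks the leading $\diff^{i-1}\vee\cdots$ form, and likewise that the $*/u_i\in g$ branch of $\equ$ still has the leading $\equ^{i-1}_{f=g}\wedge\cdots$ form. Once that is confirmed, the lemma follows exactly as before, and I would simply remark, as is done for Lemma~\ref{lem:irm:chain} relative to Lemma~\ref{lem:chain}, that the proof of Lemma~\ref{lem:impl} carries over unchanged.
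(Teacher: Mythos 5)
Your proposal is correct and matches the paper's approach exactly: the paper's proof of Lemma~\ref{lem:irm:impl} simply states that the proof of Lemma~\ref{lem:impl} carries over despite the modified definitions, which is precisely the structural observation (leading $\diff^{i-1}\vee\cdots$ and $\equ^{i-1}\wedge\cdots$ shape in every case branch) plus the verbatim induction that you spell out.
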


\begin{proof} The proofs of Lemma~\ref{lem:impl} still work despite the modifications to definition.
\end{proof}

\begin{lem}\label{lem:irm:rel}
	For $0\leq i \leq j\leq m$ the following propositions describe the relationships between the different extension variables
	
	\begin{itemize}
		\item $\equ^{i}_{L=\tau\sqcup\sigma}\rightarrow \neg \diff^i_L$
		\item $\diff_L^i \wedge \neg \diff_L^{i-1}\rightarrow \equ^{i-1}_{R=\tau\sqcup\xi}$
		\item $\diff_L^i \wedge \neg \diff_L^{i-1}\rightarrow \neg \diff_R^{i-1}$
		\item $\equ^{i}_{R=\tau\sqcup\xi}\rightarrow \neg \diff^i_R$
		\item $\diff_R^i \wedge \neg \diff_R^{i-1}\rightarrow \equ^{i-1}_{L=\tau\sqcup\xi}$
		\item $\diff_R^i \wedge \neg \diff_R^{i-1}\rightarrow \neg \diff_L^{i-1}$
	\end{itemize}
\end{lem}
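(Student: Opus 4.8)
The plan is to treat this as the \irmc counterpart of Lemma~\ref{lem:rel} and to reuse its two-stage structure essentially verbatim at the level of strategy, changing only the internal case analysis to reflect the richer four-valued definitions. Concretely, I would first establish the two self-consistency implications $\equ^{i}_{L=\tau\sqcup\sigma}\rightarrow \neg \diff^i_L$ and $\equ^{i}_{R=\tau\sqcup\xi}\rightarrow \neg \diff^i_R$ by induction on $i$, and then read off the remaining four bullets as immediate corollaries of the inductive definition of $\diff$ chained with these two implications.

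For the induction establishing $\equ^{i}_{L=\tau\sqcup\sigma}\rightarrow \neg \diff^i_L$, the base case $i=0$ is trivial since $\diff^0_L:=0$. For the inductive step I would split on where $u_i$ lies, matching the four clauses of the definition of $\diff^i_L$ (namely $u_i\notin\domain(\tau\sqcup\sigma\sqcup\xi)$, $u_i\in\domain(\tau)$, $u_i\in\domain(\sigma)$, $u_i\in\domain(\xi)$). In each clause $\diff^i_L=\diff^{i-1}_L\vee(\equ^{i-1}_{R=\tau\sqcup\xi}\wedge D_i)$ for a deviation term $D_i$, and the point to verify is that the per-step conjunct added to $\equ^{i}_{L=\tau\sqcup\sigma}$ is, modulo $\equ^{i-1}_{L=\tau\sqcup\sigma}$, exactly $\neg D_i$. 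For instance, when $u_i\in\domain(\sigma)$ the annotation carries $*/u_i$, so the $*$-branch of $\equ$ gives $\equ^{i}_{L=\tau\sqcup\sigma}=\equ^{i-1}_{L=\tau\sqcup\sigma}\wedge\set^i_L$, which negates the matching trigger $D_i=\neg\set^i_L$; when $u_i\in\domain(\tau)$ the plain branch forces $\set^i_L$ and $\val^i_L\leftrightarrow\val^i_\tau$, negating $D_i=\neg\set^i_L\vee(\set^i_\tau\wedge(\val^i_L\oplus\val^i_\tau))$. Having $\neg D_i$ kills the new disjunct, and the induction hypothesis $\equ^{i-1}_{L=\tau\sqcup\sigma}\rightarrow\neg\diff^{i-1}_L$ kills the old one, yielding $\neg\diff^i_L$ in a constant number of Frege steps per $i$. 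The statement $\equ^{i}_{R=\tau\sqcup\xi}\rightarrow \neg \diff^i_R$ is the mirror image, swapping $L\leftrightarrow R$ and $\sigma\leftrightarrow\xi$.

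The four remaining bullets then fall out cheaply. Since every clause of the definition gates its new disjunct by $\equ^{i-1}_{R=\tau\sqcup\xi}$, the implication $\diff_L^i\wedge\neg\diff_L^{i-1}\rightarrow\equ^{i-1}_{R=\tau\sqcup\xi}$ is immediate from the definition, and composing it with the already-proved $\equ^{i-1}_{R=\tau\sqcup\xi}\rightarrow\neg\diff^{i-1}_R$ gives $\diff_L^i\wedge\neg\diff_L^{i-1}\rightarrow\neg\diff_R^{i-1}$; the two $\diff_R$ statements are obtained symmetrically (gating on $\equ^{i-1}_{L=\tau\sqcup\sigma}$ instead). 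As in Lemmas~\ref{lem:irm:chain} and~\ref{lem:irm:impl}, all of these are $O(i)$-size Frege derivations, so the whole lemma is short.

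The main obstacle I anticipate is purely the bookkeeping of the enlarged case split, and in particular the one case where the deviation of $L$ is measured against a target that does not mention $u_i$: when $u_i\in\domain(\xi)$, the trigger for $\diff^i_L$ is $\set^i_L$, yet $u_i$ is absent from $L$'s target $\tau\sqcup\sigma$. To close this case I would lean on the hypothesis that $\domain(\tau)$, $\domain(\sigma)$ and $\domain(\xi)$ are mutually disjoint, which forces $\set^i_{\tau\sqcup\sigma}$ to be false and hence selects the plain (non-$*$) branch of $\equ^{i}_{L=\tau\sqcup\sigma}$, delivering $\neg\set^i_L$ as required. Keeping straight which of the two $\equ$-branches is active in each of the four regions, and confirming that the deviation term negated in that branch is precisely the one appearing in $\diff$, is the only genuinely delicate part; everything else is a faithful transcription of the proof of Lemma~\ref{lem:rel}.
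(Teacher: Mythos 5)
Your proposal matches the paper's proof essentially step for step: the same induction on $i$ for the two $\equ\rightarrow\neg\diff$ implications, the same case split on which domain $u_i$ falls into with the observation that the conjunct added to $\equ$ negates the trigger term added to $\diff$, and the same derivation of the remaining four bullets as corollaries of the $\equ^{i-1}_{R=\tau\sqcup\xi}$ gating in the definition of $\diff_L^i$. Your explicit handling of the $u_i\in\domain(\xi)$ case via disjointness of the domains is a point the paper glosses over by folding it into the $u_{i+1}\notin\domain(\tau\sqcup\sigma)$ case, but the argument is the same.
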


\begin{proof}
	\noindent\textbf{Induction Hypothesis on $i$: }
	$\equ^{i}_{L=\tau\sqcup\sigma}\rightarrow \neg \diff^i_L$ in an $O(i)$-size eFrege proof.
	
	\noindent\textbf{Base Case $i=0$: } $\diff^i_L$ is defined as $0$ so $\neg \diff^i_L$ is true and trivially implied by 
	$\equ^{i}_{L=\tau\sqcup\sigma}$. 
	Frege can manage this.
	
	\begin{sloppypar}
		\noindent\textbf{Inductive Step $i+1$:}
		This breaks into cases depending on the domains of $u_{i+1}$.
		If $u_{i+1}\notin\domain(\sigma)$,
		$\equ^{i+1}_{L=\tau\sqcup\sigma}:=\equ^{i}_{L=\tau\sqcup\sigma}\wedge (\set^{i+1}_L\leftrightarrow \set^{i+1}_{\tau\sqcup\sigma})\wedge (\set^{i+1}_L\rightarrow(\val^{i+1}_L \leftrightarrow \val^{i+1}_{\tau\sqcup\sigma}) )$
		further if 
		$u_{i+1}\notin\domain(\tau\sqcup \sigma) $ 
		then
		$\diff_L^{i+1}:= \diff_L^{i} \vee (\equ_{R=\tau\sqcup \xi}^{i}\wedge \set^{i+1}_L)$.
		Note that here $\set^{i+1}_{\tau\sqcup\sigma}$ is defined as $0$ so 
		$\equ^{i+1}_{L=\tau\sqcup\sigma}\rightarrow(\equ^{i}_{L=\tau\sqcup\sigma}\wedge \neg \set^{i+1}_L)$.
		Adding the induction hypothesis gives 
		$\equ^{i+1}_{L=\tau\sqcup\sigma}\rightarrow \neg \diff^{i}_L \wedge \neg \set^{i+1}_L$. 
		Note that because $\neg \diff^{i}_L \wedge \neg \set^{i+1}_L$ directly refutes  $\diff_L^{i} \vee (\equ_{R=\tau\sqcup \xi}^{i}\wedge \set^{i+1}_L)$
		we get
		$\equ^{i+1}_{L=\tau\sqcup\sigma}\rightarrow\neg \diff^{i+1}_L$.
		Now if $u_{i+1}\in \domain (\tau) $ 
		then 
		\begin{center}$\diff_L^i:= \diff_L^{i-1} \vee (\equ_{R=\tau\sqcup \xi}^{i-1}\wedge (\neg\set^i_L\vee (\set^i_\tau\wedge(\val^i_L\oplus \val^i_\tau))))$
		\end{center}
		Now $\set^{i+1}_{\tau\sqcup\sigma}$ is defined as $1$.
		If $1/u_{i+1}\in \tau$,
		$\val^{i+1}_{\tau\sqcup\sigma}:=1$ so 
		$\diff_L^{i+1}:= \diff_L^{i} \vee (\equ_{R=\tau\sqcup \xi}^{i-1}\wedge (\neg\set^{i+1}_L\vee \val^{i+1}_L))$
		and 
		$\equ^{i+1}_{L=\tau\sqcup\sigma}
		\rightarrow\equ^{i}_{L=\tau\sqcup\sigma}\wedge \set^{i+1}_{L}\wedge\val^{i+1}_L$.
		Adding the induction hypothesis gives 
		$\equ^{i+1}_{L=\tau\sqcup\sigma}\rightarrow \neg \diff_L^i\wedge \set^{i+1}_L\wedge\val^{i+1}_L$.
		But $\neg \diff^{i}_{L}\wedge\set^{i+1}_L\wedge\val^{i+1}_L$ falsifies $\diff_L^{i} \vee (\equ_{R=\tau\sqcup \xi}^{i}\wedge (\neg\set^{i+1}_L\vee (\set^{i+1}_L\wedge(\val^{i+1}_L))))$.
		So $\equ^{i+1}_{L=\tau\sqcup\sigma}\rightarrow \neg \diff_L^{i+1}$.
		This works similarly if $0/u_{i+1}\in \tau$.
		If $u_{i+1}\in\domain(\sigma)$,
		$\equ^{i+1}_{L=\tau\sqcup\sigma}:=\equ^{i}_{L=\tau\sqcup\sigma}\wedge \set^{i+1}_L$
		and
		$\diff_L^{i+1}:= \diff_L^{i} \vee (\equ_{R=\tau\sqcup \xi}^{i}\wedge\neg \set_L^{i+1})$.
		But adding from the induction hypothesis we can have 
		$\equ^{i+1}_{L=\tau\sqcup\sigma} \rightarrow \neg \diff_{L}^i \wedge \set^{i+1}_L$
			and $\neg \diff_{L}^i \wedge \set^i_L$ directly contradicts 
		$\diff_L^{i} \vee (\equ_{R=\tau\sqcup \xi}^{i}\wedge\neg \set_L^{i+1})$
		so then 
		$\equ^{i+1}_{L=\tau\sqcup\sigma} \rightarrow \neg \diff_L^i$.
		Each case require a constant number of Frege steps.
	\end{sloppypar}
	
	In every case $\diff_L^{i}= \diff_L^{i-1} \vee (\equ^i_{R=\tau\sqcup \xi}\wedge A)$	where $A$ is a formula dependent on the domain of $u_i$
	$\neg \diff_L^{i-1} \wedge \diff_L^{i}$
	means that $\equ^i_{R=\tau\sqcup \xi}$ must be true.
	So we have $\diff_L^i \wedge \neg \diff_L^{i-1}\rightarrow \equ^{i-1}_{R=\tau\sqcup\xi}$ in a constant size eFrege proof.
	
	If we combine the above we have a linear size proof of 
	$\diff_L^i \wedge \neg \diff_L^{i-1}\rightarrow \diff_R^{i-1}$.
	The same proofs symmetrically work for $R$.
\end{proof}

\begin{lem}\label{lem:irm:tau}
	For any $0\leq i\leq m$
	the following propositions are true and have short Extended Frege proofs.
	
	\begin{itemize}
		\item $L\wedge \diff_L^i \rightarrow \neg \ann_{x,L}(\tau\sqcup\sigma)$
		\item $R\wedge \diff_R^i \rightarrow \neg \ann_{x,R}(\tau\sqcup\xi)$
	\end{itemize}
\end{lem}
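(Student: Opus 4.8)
The plan is to follow the same strategy as the proof of Lemma~\ref{lem:tau}, adapting the case analysis to the richer family of definitions for $\diff_L^i$ (and, symmetrically, $\diff_R^i$) that now branch on the domain of $u_i$. By symmetry I only treat the first proposition $L\wedge \diff_L^i \rightarrow \neg \ann_{x,L}(\tau\sqcup\sigma)$, since the argument for $R$ is identical after replacing $L,\sigma$ by $R,\xi$.

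First I would invoke Lemma~\ref{lem:irm:chain} to rewrite $\diff_L^i$ as the disjunction $\bigvee_{k=1}^i (\diff_L^k\wedge\neg\diff_L^{k-1})$ of ``first trigger at $k$'' events, reducing the goal to showing, for each $k$, that $L\wedge \diff_L^k\wedge\neg\diff_L^{k-1}\rightarrow\neg\ann_{x,L}(\tau\sqcup\sigma)$ with a constant-size Frege derivation; assembling the $O(m)$ disjuncts then yields a linear-size proof. Since $\diff_L^{k-1}$ is false at a first trigger, the defining clause of $\diff_L^k$ forces its domain-dependent term to hold, and I would split into the four cases appearing in the definition of $\diff_L^k$.

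Then I would verify that in each case the forced term clashes with exactly one conjunct of $\ann_{x,L}(\tau\sqcup\sigma)$. If $u_k\notin\domain(\tau\sqcup\sigma\sqcup\xi)$ or $u_k\in\domain(\xi)$, the trigger forces $\set^k_L$; as the three domains are disjoint, $u_k$ lies outside $\domain(\tau\sqcup\sigma)$, so the annotation carries the conjunct $\neg\set^k_L$, which is contradicted. If $u_k\in\domain(\sigma)$, the trigger forces $\neg\set^k_L$, contradicting the conjunct $\set^k_L$ coming from $*/u_k\in\sigma$. If $u_k\in\domain(\tau)$, the trigger forces $\neg\set^k_L\vee(\set^k_\tau\wedge(\val^k_L\oplus\val^k_\tau))$: whenever $\set^k_L$ fails, the $\set^k_L$ conjunct of the annotation is contradicted, and otherwise $L$ supplies $u_k\leftrightarrow\val^k_L$, so the mismatch $\val^k_L\oplus\val^k_\tau$ forces $u_k$ to disagree with the literal demanded by $\tau$, contradicting the conjunct $\set^k_L\wedge u_k$ (or $\set^k_L\wedge\bar u_k$). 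Each such step is constant size in Frege.

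The main obstacle is less any individual case than the bookkeeping that ties the domain-split used to define $\diff_L^k$ to the matching conjunct of $\ann_{x,L}(\tau\sqcup\sigma)$. In particular one must use that $\domain(\tau)$, $\domain(\sigma)$ and $\domain(\xi)$ are mutually disjoint to be sure a variable in $\domain(\xi)$ really falls under the ``$u_k\notin\domain(\tau\sqcup\sigma)$'' branch of the annotation, so that the $\set^k_L$ forced by the $\xi$-trigger clashes with a genuine $\neg\set^k_L$ conjunct rather than with a set-and-value requirement. Once this correspondence is pinned down, every disjunct closes immediately and the whole derivation remains linear in $m$.
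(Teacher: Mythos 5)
Your proposal matches the paper's own proof: both decompose $\diff_L^i$ into first-trigger events via Lemma~\ref{lem:irm:chain}, case-split on the domain of $u_k$ exactly as in the definition of $\diff_L^k$, and show each forced term contradicts the corresponding conjunct of $\ann_{x,L}(\tau\sqcup\sigma)$ (using $L$ to link $\val^k_L$ with $u_k$ in the $\tau$ case), yielding a linear-size derivation. The argument is correct and essentially identical to the paper's.
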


\begin{proof}
	
	If $u_i\notin \domain(\tau\sqcup \sigma)$, then
	$\diff_L^i \wedge \neg \diff_L^{i-1} \rightarrow \set^i_{L}$ is a simple corollary of the definition line
	$\diff_L^i \leftrightarrow \diff_L^{i-1} \vee (\equ_{R=\tau\sqcup \xi}^{i-1}\wedge \set^i_L)$.
	But as $\ann_{x,L}(\tau\sqcup\sigma)$ insists on $\neg\set^i_{L}$,
	we can get 
	$\diff_L^i \wedge \neg \diff_L^{i-1} \rightarrow \neg\ann_{x,L}(\tau\sqcup\sigma)$.
	
	If $1/u_i\in \tau$, then 
	$\diff_L^i \wedge \neg \diff_L^{i-1} \rightarrow \neg \set^i_{L} \vee \neg \val^i_L$ is a simple corollary of the definition lines
	$\diff_L^i\leftrightarrow\diff_L^{i-1} \vee (\equ_{R=\tau\sqcup \xi}^{i-1}\wedge (\neg\set^i_L\vee (\set^i_\tau\wedge(\val^i_L\oplus \val^i_\tau))))$, $\set_\tau^i$ and $\val_\tau^i$
	But as $\ann_{x,L}(\tau\sqcup\sigma)$ insists on $\set^i_{L}\wedge u_i$,
	and $L$ insists on $\val_L^i\leftrightarrow u_i$
	we get
	$L \wedge \diff_L^i \wedge \neg \diff_L^{i-1} \rightarrow \neg\ann_{x,L}(\tau\sqcup\sigma) $.
	
	Similarly, if $0/u_i\in \tau$, then 
	$\diff_L^i \wedge \neg \diff_L^{i-1} \rightarrow \neg \set^i_{L} \vee  \val^i_L$ is a simple corollary of the definition lines
	$\diff_L^i\leftrightarrow \diff_L^{i-1} \vee (\equ_{R=\tau\sqcup \xi}^{i-1}\wedge (\neg\set^i_L\vee (\set^i_\tau\wedge(\val^i_L\oplus \val^i_\tau))))$, $\set_\tau^i$ and $\neg \val_\tau^i$
	But as $\ann_{x,L}(\tau\sqcup\sigma)$ insists on $\set^i_{L}\wedge \neg u_i$,
	and $L$ insists on $\val_L^i\leftrightarrow u_i$
	we get
	$\diff_L^i \wedge \neg \diff_L^{i-1} \rightarrow \neg\ann_{x,L}(\tau\sqcup\sigma) $.
	
	Finally if $*/u_i\in \sigma $, then 
	$\diff_L^i \wedge \neg \diff_L^{i-1} \rightarrow \neg \set^i_{L}$
	is a corollary of the definition line 
	$\diff_L^i\leftrightarrow \diff_L^{i-1} \vee (\equ_{R=\tau\sqcup \xi}^{i-1}\wedge \neg\set^i_L)$.
	But as $\ann_{x,L}(\tau\sqcup\sigma)$ insists on $\set^i_{L}$.
	we get
	$\diff_L^i \wedge \neg \diff_L^{i-1} \rightarrow \neg\ann_{x,L}(\tau\sqcup\sigma)$.

	$L\wedge \diff_L^i \wedge \neg \diff_L^{i-1} \rightarrow \neg\ann_{x,L}(\tau\sqcup\sigma)$ is not quite as strong as 
	$L\wedge \diff_L^i \wedge  \rightarrow \neg\ann_{x,L}(\tau\sqcup\sigma) $.
	However here we can use 
	$\diff_L^j \rightarrow \bigvee_{i=1}^j \diff_L^i \wedge \neg \diff_L^{i-1}$
	which will give us 
	$L\wedge \diff_L^j\rightarrow \neg\ann_{x,L}(\tau\sqcup\sigma)$ in a linear size proof which is also symmetric for $R$.
\end{proof}

\begin{lem}\label{lem:irm:nLnR}
	For any $0\leq j\leq m$
	the following propositions are true and have a short Extended Frege proof.
	\begin{itemize}
		\item $\neg \diff_L^j \wedge \neg \diff_R^j \rightarrow \equ^j_{L=\tau\sqcup\sigma}$ 
		\item $\neg \diff_L^j \wedge \neg \diff_R^j \rightarrow \equ^j_{R=\tau\sqcup\xi}$
		\item $\neg \diff_L^j \wedge \neg \diff_R^j \rightarrow (\neg \set_B^j \wedge \neg \set_L^j\wedge \neg \set_R^j)$ when $u_j\notin\domain(\tau\sqcup\sigma\sqcup\xi)$.
		\item $\neg \diff_L^j \wedge \neg \diff_R^j \rightarrow (\set_B^j \wedge \set_L^j\wedge \set_R^j \wedge (\val_B^j\leftrightarrow\val_L^j )\wedge(\val_B^j\leftrightarrow\val_R^j ))$ when $u_j\in\domain(\tau)$.
		\item $\neg \diff_L^j \wedge \neg \diff_R^j \rightarrow (\set_B^j \wedge \set_L^j\wedge \neg \set_R^j \wedge (\val_B^j\leftrightarrow\val_L^j ))$ when $*/u_j\in\sigma$.
		\item $\neg \diff_L^j \wedge \neg \diff_R^j \rightarrow (\set_B^j \wedge \neg \set_L^j\wedge \set_R^j \wedge (\val_B^j\leftrightarrow\val_R^j ))$ when $*/u_j\in\xi$.
		
	\end{itemize} 
	
\end{lem}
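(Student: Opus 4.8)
The plan is to treat the first two implications as the real content and the last four as corollaries, exactly as in the proof of Lemma~\ref{lem:nLnR}. The backbone is a single induction on $j$ that establishes $\neg \diff_L^j \wedge \neg \diff_R^j \rightarrow \equ^j_{L=\tau\sqcup\sigma}$ and $\neg \diff_L^j \wedge \neg \diff_R^j \rightarrow \equ^j_{R=\tau\sqcup\xi}$ simultaneously. First I would derive, purely from the defining equations, the implication
\[
\neg \equ^j_{L=\tau\sqcup\sigma} \rightarrow \neg \equ^{j-1}_{L=\tau\sqcup\sigma} \vee \diff_L^j \vee \neg \equ^{j-1}_{R=\tau\sqcup\xi},
\]
together with its mirror image under the substitution $L\leftrightarrow R$, $\sigma\leftrightarrow\xi$. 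The induction hypothesis absorbs the two $\neg\equ^{j-1}$ disjuncts into $\diff_L^{j-1}\vee\diff_R^{j-1}$, and monotonicity of $\diff$ (Lemma~\ref{lem:irm:impl}) weakens this up to $\diff_L^j\vee\diff_R^j$; contraposing gives the two $\equ^j$ conclusions. The base case $j=0$ is immediate since both $\equ$ variables are $1$ by definition.

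The one place where the \irc argument must genuinely be reworked is the step asserting that the local $L$-difference at index $j$, combined with $\equ^{j-1}_{R=\tau\sqcup\xi}$, forces $\diff_L^j$, because now $\equ$ and $\diff$ are each defined by a four-way split on the location of $u_j$. I would check the four domain cases in turn. When $u_j\notin\domain(\tau\sqcup\sigma\sqcup\xi)$ or $*/u_j\in\xi$ the annotation constant $\set^j_{\tau\sqcup\sigma}$ is $0$, so the $L$-difference collapses to $\set^j_L$, which is precisely the trigger in the matching clause $\equ^{j-1}_{R=\tau\sqcup\xi}\wedge\set^j_L$ of $\diff_L^j$. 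When $u_j\in\domain(\tau)$ the constant $\set^j_{\tau\sqcup\sigma}$ is $1$ with $\val^j_{\tau\sqcup\sigma}=\val^j_\tau$, so the difference becomes $\neg\set^j_L \vee (\set^j_L\wedge(\val^j_L\oplus\val^j_\tau))$, which lands inside the $\diff_L^j$ clause $\neg\set^j_L \vee (\set^j_\tau\wedge(\val^j_L\oplus\val^j_\tau))$. When $*/u_j\in\sigma$ the $\equ$ definition takes its second form $\equ^{j-1}\wedge\set^j_L$, so the difference is exactly $\neg\set^j_L$, matching the clause $\equ^{j-1}_{R=\tau\sqcup\xi}\wedge\neg\set^j_L$ of $\diff_L^j$. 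Each case is a constant-size Frege derivation, and the cases for $R$ are identical up to swapping $\sigma$ and $\xi$, so the whole induction is linear in $m$.

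Finally I would read off the remaining four implications as corollaries. Under $\neg\diff_L^j\wedge\neg\diff_R^j$ the two $\equ$ conclusions pin down $\set^j_L,\val^j_L$ (via $\equ^j_{L=\tau\sqcup\sigma}$) and $\set^j_R,\val^j_R$ (via $\equ^j_{R=\tau\sqcup\xi}$): in each domain case they force the asserted truth values of $\set^j_L,\set^j_R$ and, where relevant, $\val^j_L\leftrightarrow\val^j_\tau$ and $\val^j_R\leftrightarrow\val^j_\tau$. Feeding these, together with $\neg\diff_L^{j-1}$ and $\neg\diff_R^{j-1}$ (obtained from $\neg\diff_L^j,\neg\diff_R^j$ by Lemma~\ref{lem:irm:impl}), into the defining case split for $(\val^j_B,\set^j_B)$ selects a unique branch: the $R$-branch in the empty-domain, $\tau$- and $\xi$-cases, and the default $L$-branch in the $\sigma$-case, which produces exactly the claimed equalities between $\set^j_B,\val^j_B$ and their $L$- or $R$-counterparts. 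I expect the main obstacle to be purely organisational rather than mathematical, namely keeping the four-way domain splits of $\equ$, $\diff$, and $(\val_B,\set_B)$ aligned so that in every case the local difference really does fire the corresponding $\diff$ clause and the $B$-definition collapses to the intended side; each individual case, however, is an elementary constant-size Frege step.
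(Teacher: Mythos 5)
Your proof is correct and follows exactly the strategy the paper uses for the analogous \irc statement (Lemma~\ref{lem:nLnR}): derive $\neg \equ_{L=\tau\sqcup\sigma}^{j} \rightarrow \neg \equ_{L=\tau\sqcup\sigma}^{j-1} \vee \diff_L^j \vee \neg \equ^{j-1}_{R=\tau\sqcup\xi}$ by a domain case split, absorb the $\neg\equ^{j-1}$ disjuncts via the induction hypothesis and monotonicity, and read the last four items off the definitions of $\equ$ and $(\val_B^j,\set_B^j)$. The paper in fact states Lemma~\ref{lem:irm:nLnR} without proof, and your case analysis (including which branch of the $B$-definition fires in each of the four domain cases) correctly supplies the missing details.
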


\begin{lem}\label{lem:irm:Bdiff2}
	Suppose $L\rightarrow \con_{L}(C_1 \vee \neg x^{\tau\cup\sigma})$ and $R\rightarrow \con_{R}(C_1 \vee x^{\tau\cup\xi})$
	The following propositions are true and have short Extended Frege proofs.
	\begin{itemize}
		\item $B\wedge \diff_L^m\rightarrow L$
		\item $B\wedge \neg \diff_L^m\wedge \diff_R^m\rightarrow R$
		\item $B\wedge \diff_L^m\rightarrow \con_B(\instantiate(\xi,C_1))$
		\item $B\wedge \neg \diff_L^m\wedge \diff_R^m\rightarrow \con_B(\instantiate(\sigma,C_2))$
	\end{itemize}
	
\end{lem}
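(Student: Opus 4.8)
The plan is to follow the template of the \irc argument, where Lemma~\ref{lem:Bdiff} was obtained from the per-index agreement statement Lemma~\ref{lem:LR} by a chain argument, but to carry out the per-index step directly in the four-valued setting of \irmc (since no separate \irmc analogue of Lemma~\ref{lem:LR} is available). First I would establish, for each $0\le i\le m$, the structural claims that $\diff_L^i$ forces $B$ to agree with $L$ in the sense $\set_L^i\rightarrow(\set_B^i\wedge(\val_B^i\leftrightarrow\val_L^i))$, and that $\neg\diff_L^i\wedge\diff_R^i$ forces the analogous agreement with $R$. I would prove these by unfolding the definition of $\diff_L^i$ together with the case split of $(\val_B^i,\set_B^i)$ according to whether $u_i$ lies in $\domain(\tau)$, $\domain(\sigma)$, $\domain(\xi)$, or none of them, checking in each case that the branch selecting $R$ is falsified (every such branch requires $\neg\diff_L^{i-1}$, which is incompatible with the triggering decomposition) and that the deliberately inserted $(0,1)$ branches only fire when $\set_L^i$ is false, so the implication $\set_L^i\rightarrow\cdots$ stays vacuous there. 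At the trigger index $j$ itself I must additionally use that the triggering condition forces $\set_L^j$ to a known value (e.g.\ $\equ_{R=\tau\sqcup\xi}^{j-1}\wedge\neg\set_L^j$ when $*/u_j\in\sigma$), which again makes the chosen $B$-branch consistent with $L$.

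With the per-index claims in hand I would chain them up exactly as in Lemma~\ref{lem:Bdiff}, using the decomposition $\diff_L^m\rightarrow\bigvee_{j=1}^m\diff_L^j\wedge\neg\diff_L^{j-1}$ from Lemma~\ref{lem:irm:chain} to locate a trigger index $j$. For $i<j$ I would use Lemma~\ref{lem:irm:rel} to obtain $\neg\diff_R^{j-1}$ from $\diff_L^j\wedge\neg\diff_L^{j-1}$ and Lemma~\ref{lem:irm:impl} to push both negations down, then invoke Lemma~\ref{lem:irm:nLnR} to conclude that $B$ agrees with $L$ on these indices; for $j\le i\le m$ monotonicity gives $\diff_L^i$ and hence the per-index agreement; for $i>m$ the definition of $(\val_B^i,\set_B^i)$ shows that $B\wedge\diff_L^m$ falsifies the $R$-selecting condition $\neg\diff_L^m\wedge(\diff_R^m\vee\bar x)$. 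Assembling these yields $B\wedge\diff_L^m\rightarrow L$, and the mirror argument gives $B\wedge\neg\diff_L^m\wedge\diff_R^m\rightarrow R$, which are the first two bullets.

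For the last two bullets I would combine $B\wedge\diff_L^m\rightarrow L$ with Lemma~\ref{lem:irm:tau}, namely $L\wedge\diff_L^m\rightarrow\neg\ann_{x,L}(\tau\sqcup\sigma)$, and the hypothesis, which unfolds to $L\rightarrow\con_L(C_1)\vee(\neg x\wedge\ann_{x,L}(\tau\sqcup\sigma))$; since the second disjunct's annotation part is contradicted, this gives $B\wedge\diff_L^m\rightarrow\con_L(C_1)$. It then remains to convert $\con_L(C_1)$ into $\con_B(\instantiate(\xi,C_1))$ literal by literal: for a literal $l^\rho$ of $C_1$, instantiation adds $*$-annotations on $\domain(\xi)\setminus\domain(\rho)$, and in $\con_B$ such a $*/u_i$ only demands $\set_B^i$. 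On the $\rho$-variables $B$ matches $L$ by the structural claim, while precisely on a $\xi$-variable with $\neg\set_L^i$ (which is exactly what $\con_L(l^\rho)$ asserts there) the forced branch $\diff_L^{i-1}\wedge\neg\set_L^i$ of the $*/u_i\in\xi$ case sets $(\val_B^i,\set_B^i)=(0,1)$, supplying the required $\set_B^i$; the remaining $\neg\set$ demands carry over from $\set_B^i=\set_L^i$. This produces $B\wedge\diff_L^m\rightarrow\con_B(\instantiate(\xi,C_1))$, and the symmetric argument with $R$, $\sigma$ and $C_2$ gives the fourth bullet.

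The hard part will be the bookkeeping of the four-way domain case analysis: unlike the \irc situation, where each index was simply set-with-a-value or unset, here I must track $\set_B^i$, $\neg\set_B^i$, $\set_B^i\wedge u_i$ and $\set_B^i\wedge\neg u_i$ simultaneously and verify that the inserted $(0,1)$ branches genuinely realise the internal instantiations $\instantiate(\xi,\cdot)$ and $\instantiate(\sigma,\cdot)$ rather than corrupting the agreement with $L$ or $R$. Each individual check is a constant-size Frege derivation, but confirming consistency at the trigger index across all four domain cases, and then composing these along the linear chain so that the total proof stays polynomial, is where the length and subtlety of the argument concentrate.
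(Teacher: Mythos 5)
Your proposal is correct and follows essentially the same route as the paper: locate the trigger index via Lemma~\ref{lem:irm:chain}, use Lemmas~\ref{lem:irm:rel}, \ref{lem:irm:impl} and \ref{lem:irm:nLnR} to handle the indices before, at, and after the trigger, discharge the pivot literal with Lemma~\ref{lem:irm:tau}, and then convert $\con_L$ into $\con_B(\instantiate(\xi,\cdot))$ component by component, with the $(0,1)$ branches supplying the $\set_B^i$ demanded by the on-the-fly $*$-instantiation. The only difference is organizational: you factor the per-index agreement into an explicit \irmc analogue of Lemma~\ref{lem:LR}, whereas the paper inlines exactly that case analysis (by index relative to the trigger, by annotation component, and by domain of $u_i$) into the appendix proof.
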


\begin{proof}[Sketch Proof]
	
	\begin{sloppypar}
		
		We break each of these statements up into constituent parts that we will prove individually and piece together through conjunction.
		
		Take $B\wedge \diff_L^m\rightarrow L$, we can prove this by showing for each index $i$ that  $(\diff^m_L\wedge(\set_B^i\rightarrow (u_i\leftrightarrow \val_B^i)))\rightarrow (\set_L^i\rightarrow (u_i\leftrightarrow \val_L^i))$. We can split up $B\wedge \neg \diff_L^m\wedge \diff_R^m\rightarrow R$ similarly.
		
		For $B\wedge \diff_L^m\rightarrow \con_B(\instantiate(\xi,C_1))$, 
		we first have to derive  $(L\rightarrow \con_{L}(C_1))\rightarrow(B\wedge L \wedge \diff^m_L \rightarrow \con_{B}(\instantiate(\xi,C_1))$. 
		We can cut out the $L$ with $B\wedge \diff_L^m\rightarrow L$.
		We will also remove $(L\rightarrow \con_{L}(C_1))$.
		By using the premise $(L\rightarrow \con_{L}(C_1\vee \neg x^{\tau\sqcup\sigma}))$ and crucially Lemma~\ref{lem:irm:tau}. $L \wedge \diff^m_L\rightarrow \neg \ann_{x,L}(\tau\sqcup \sigma)$, so $L \wedge \diff^m_L\rightarrow \neg \con_{L}(\neg x^{\tau\sqcup\sigma})$, and thus $(L \wedge \diff^m_L\rightarrow \con_{L}(C_1))$.
		
		We want to again split this up to the component parts.
		We first split by
		individual literals of $C_1$ as a proof of  $(L\rightarrow \con_{L}(l^\alpha))\rightarrow(B\wedge L \wedge \diff^m_L \rightarrow \con_{B}(\instantiate(\xi,l^\alpha))$ for each 
		literal $l^\alpha\in C_1$. We then split this between existential literal 
		$(L\rightarrow l)\rightarrow (B\wedge L \wedge \diff^m_L\rightarrow l)$ (which is a basic tautology) and universal annotation 
		$(L\rightarrow \ann_{l,B}(\alpha))\rightarrow(B\wedge L \wedge \diff^m_L \rightarrow \ann_{l,B}(\complete{\alpha}{\restr{l}{\xi}}))$. 
		
		The latter part splits further. A maximum of one of  $\neg\set_B^i$, $\set_B^i$, $\set_B^i\wedge u_i$ and $\set_B^i\wedge \neg u_i$ appears in $\ann_{l,B}(\complete{\alpha}{\restr{l}{\xi}})$, we treat $\ann_{l,B}(\complete{\alpha}{\restr{l}{\xi}})$ as a set containing these subformulas. 	We show that if formula $c_i\in \ann_{l,B}(\complete{\alpha}{\restr{l}{\xi}})$, when $c_i$ is equal to  $\neg\set_B^i$, $\set_B^i$, $\set_B^i\wedge u_i$ or  $\set_B^i\wedge \neg u_i$ then  $(L\rightarrow \ann_{l,B}(\alpha))\rightarrow(B\wedge L \wedge \diff^m_L \rightarrow c_i)$.  A similar breakdown happens for $B\wedge \neg \diff_L^m\wedge \diff_R^m\rightarrow \con_B(\instantiate(\sigma,C_2))$.

		Each of these individual cases is a constant size proof. You need to multiply for the length of each annotation (including missing values) and then do this again for each annotated literal in the clause.
		The proof size will be $O(wm)$ where $w$ is the width or number of literals in $\instantiate(\xi,C_1)\sqcup\instantiate(\sigma,C_2)$ and $m$ is the number of universal variables in the prefix.
	\end{sloppypar}
		
		We detail all cases for $L$ and $R$ in the Appendix.
\end{proof}

\begin{lem}\label{lem:irm:Bndiff}
	Suppose $L\rightarrow \con_{L}(C_1 \vee \neg x^{\tau \sqcup \sigma})$ and $R\rightarrow \con_{R}(C_2 \vee x^{\tau \sqcup \xi})$.
	The following propositions are true and have short Extended Frege proofs.
	\begin{itemize}
		\item $B\wedge \neg \diff_L^m\wedge\neg \diff_R^m \rightarrow \con_B(\instantiate(\xi, C_1)) \vee \neg x$
		\item $B\wedge \neg \diff_L^m\wedge\neg \diff_R^m \rightarrow \con_B(\instantiate(\sigma, C_2)) \vee  x$
	\end{itemize}	
\end{lem}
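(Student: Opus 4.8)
The plan is to treat Lemma~\ref{lem:irm:Bndiff} as the ``no difference triggered'' case $\neg\diff_L^m\wedge\neg\diff_R^m$ left uncovered by Lemma~\ref{lem:irm:Bdiff2}, playing the role that Lemma~\ref{lem:Bndiff} plays in the \irc argument (though here the clause conversion is folded in, rather than routed through auxiliary variables $B_L,B_R$). Since each conclusion carries a disjunct in the pivot $x$, I would first split on $x$: the disjunct $\neg x$ discharges the sub-case $\neg x$ of the first statement outright, reducing it to $B\wedge\neg\diff_L^m\wedge\neg\diff_R^m\wedge x\rightarrow\con_B(\instantiate(\xi,C_1))$, and symmetrically the second reduces to $B\wedge\neg\diff_L^m\wedge\neg\diff_R^m\wedge\neg x\rightarrow\con_B(\instantiate(\sigma,C_2))$. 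These two reduced goals are mirror images under the exchange $L\leftrightarrow R$, $\sigma\leftrightarrow\xi$, $C_1\leftrightarrow C_2$, $x\leftrightarrow\neg x$, so I would prove the first and obtain the second by symmetry.

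For the first reduced goal I would follow the skeleton of Lemma~\ref{lem:irm:Bdiff2}, beginning by establishing $B\wedge\neg\diff_L^m\wedge\neg\diff_R^m\wedge x\rightarrow L$. Using Lemma~\ref{lem:irm:impl} to propagate $\neg\diff_L^m$ and $\neg\diff_R^m$ down to every index $i\le m$, Lemma~\ref{lem:irm:nLnR} pins $B$ to $L$ at each such index: they agree outside $\domain(\xi)$, while on the $*/u_i\in\xi$ indices $B$ merely sets $u_i$ to $R$'s value where $L$ leaves it unset, which only strengthens $L$'s then-vacuous conjunct. For $i>m$ the policy definition makes $B$ copy $L$ exactly, since $\neg\diff_L^m\wedge(\diff_R^m\vee\neg x)$ is false under our hypotheses. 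Thus $B$'s conjunct $\set_B^i\rightarrow(u_i\leftrightarrow\val_B^i)$ implies $L$'s at every $i$, and $B$ refines $L$.

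The second step extracts the clause and rewrites it. From the premise $L\rightarrow\con_L(C_1\vee\neg x^{\tau\sqcup\sigma})$, the hypothesis $x$ falsifies $\con_L(\neg x^{\tau\sqcup\sigma})=\neg x\wedge\ann_{x,L}(\tau\sqcup\sigma)$, so the pivot literal drops and $L\wedge x\rightarrow\con_L(C_1)$; combined with $B\rightarrow L$ this yields $\con_L(C_1)$ under our hypotheses. It then remains to convert $\con_L(C_1)$ into $\con_B(\instantiate(\xi,C_1))$, which I would carry out literal by literal and, within the surviving literal $l^\alpha$, entry by entry. The entries already present in $\alpha$ transfer via the shared game variables $u_i$ and the relationships between $\set_B^i$ and $\set_L^i$ supplied by Lemma~\ref{lem:irm:nLnR} (for $i\le m$) or by $B=L$ (for $i>m$); the fresh entries created by instantiation, namely the $*/u_i$ with $u_i\in\domain(\xi)\setminus\domain(\alpha)$, demand only $\set_B^i$ from $\ann_{l,B}(\complete{\alpha}{\restr{l}{\xi}})$, and Lemma~\ref{lem:irm:nLnR} supplies $\set_B^i$ at precisely those indices.

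I expect the main obstacle to be this final conversion, where the cross-instantiation by $\xi$ enters for the first time. One must verify, across the four annotation regimes $u_i\notin\domain(\tau\sqcup\sigma\sqcup\xi)$, $u_i\in\domain(\tau)$, $*/u_i\in\sigma$, and $*/u_i\in\xi$, that each $\set$- and value-constraint demanded by $\ann_{l,B}(\complete{\alpha}{\restr{l}{\xi}})$ is met by the matching bullet of Lemma~\ref{lem:irm:nLnR}; the decisive point is that the all-$*$ content of $\xi$ makes $\ann_{l,B}$ ask only for $\set_B^i$ with no value constraint, so soundness survives even though $B$ borrows $R$'s value rather than $L$'s on those indices. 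Each entry is a constant-size derivation and there are $O(wm)$ of them, with $w$ the width of the instantiated resolvent and $m$ the number of universal variables, keeping the proof polynomial; as with Lemma~\ref{lem:irm:Bdiff2}, I would relegate the exhaustive per-regime case list to the Appendix.
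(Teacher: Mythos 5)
Your proposal is correct and follows essentially the same route as the paper's proof: split on the pivot $x$, establish $B\wedge\neg\diff_L^m\wedge\neg\diff_R^m\wedge x\rightarrow L$ index by index via Lemma~\ref{lem:irm:nLnR} (for $i\le m$) and the policy definition (for $i>m$), drop the pivot literal because $x$ falsifies its conversion, and then transfer $\con_L(C_1)$ to $\con_B(\instantiate(\xi,C_1))$ literal by literal and annotation-entry by annotation-entry, with the fresh $*/u_i\in\xi$ entries demanding only $\set_B^i$. The only (immaterial) divergence is your $O(wm)$ size estimate versus the paper's more conservative $O(wn^2)$.
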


\begin{proof}
	\begin{sloppypar}
	Suppose that $L\rightarrow \con_L(l^\alpha)$,
	we will show that $B\wedge \neg \diff_L^m\wedge \neg \diff_R^m\rightarrow \con_R(\instantiate(\xi, l^\alpha))$.
	\end{sloppypar}
	
	We observe first that $\set^i_L\wedge \neg \diff^i_L\wedge \neg \diff^i_R\rightarrow \set^i_B \wedge (\val_B^i\leftrightarrow \val_L^i)$
	this is true in each $i:1\leq i\leq m$ by observing each case in Lemma~\ref{lem:irm:nLnR}.
	For $i>m$, $ \neg \diff^m_L\wedge \neg \diff^m_R\wedge x \rightarrow ((\set_L^i \leftrightarrow \set_B^i) \wedge (\val_B^i\leftrightarrow\val_L^i))$.
	So for all $i $, either $\neg \set^i_L$ or $\set^i_B \wedge(\val_B^i\leftrightarrow \val_L^i)$ when $\neg \diff^i_L\wedge \neg \diff^i_R$.
	
	This we can use to show $B\wedge \neg \diff^i_L\wedge \neg \diff^i_R \wedge x\rightarrow L$ by taking a conjunction of all these. We then can derive 
	$(L\rightarrow l) \rightarrow (B\wedge \neg \diff_L^m\wedge \neg \diff_R^m\wedge x \rightarrow l)$ for existential literal $l$.
	
	We still have to show that $(L\rightarrow \ann_{l,L}(\alpha)) \rightarrow (B\wedge \neg \diff_L^m\wedge \neg \diff_R^m\wedge x \rightarrow \ann_{l,B}(\complete{\alpha}{\xi}))$ for $l$'s annotation $\alpha$. 
	We can do this via cases, but we have already done all cases when $\set^i_L$ is true.
	We next show that $\neg \set^i_L\wedge \neg \diff^i_L\wedge \neg \diff^i_R\rightarrow \neg \set^i_B$ when $u_i\notin\domain(\xi)$. We can do this by simply observing the lines in Lemma~\ref{lem:irm:nLnR} when $\neg \set^i_L$ is permitted. 
	And in the final case we show $\neg \set^i_L\wedge \neg \diff^i_L\wedge \neg \diff^i_R\rightarrow \set^i_B$ when $u_i\in\domain(\xi)$.

	Remembering that $\neg \diff^m_S\rightarrow \neg \diff^i_S$ for $S\in\{L,R\}$ and $1\leq i\leq m$.
	We can now know that if $L$ satisfies  $\ann_{l,L}(\alpha)$ then $\neg \diff_{L}^m\wedge \neg \diff_{R}^m\wedge x$ will force $B$ to satisfy $\ann_{l,L}(\complete{\alpha}{\xi})$ and we can prove this in \eFrege as 
	\begin{center}$(L\rightarrow \ann_{l,L}(\alpha))\rightarrow (B\wedge \neg \diff_{L}^m\wedge \neg \diff_{R}^m\wedge x\rightarrow \ann_{l,B}(\complete{\alpha}{\xi}))$\end{center}
	
	Adding $(L\rightarrow l)\rightarrow (B\wedge \neg \diff_{L}^m\wedge \neg \diff_{R}^m\wedge x\rightarrow l)$ and for every literal $l^\alpha\in C_1$ and annotation in $C_1$ we can assemble 
	
	\begin{center}$(L\rightarrow \con_{L}(l^\alpha))\rightarrow (B\wedge \neg \diff_{L}^m\wedge \neg \diff_{R}^m\wedge x\rightarrow \con_{B}(\instantiate(\xi,l^\alpha)))$\end{center}
	
	Using $\con_{B}(\neg x^{\tau\sqcup\sigma\sqcup \xi})\rightarrow \neg x$ we can get 
	
	\begin{center}$(L\rightarrow \con_{L}(C_1\vee \neg x^{\tau \sqcup \sigma})\rightarrow (B\wedge \neg \diff_{L}^m\wedge \neg \diff_{R}^m\wedge x\rightarrow \con_{B}(\instantiate(\xi,C_1)))$\end{center}
	
	And symmetrically we can make a derivation of 
	
	\begin{center}$(R\rightarrow \con_{R}(C_2\vee  x^{\tau \sqcup \xi})\rightarrow (B\wedge \neg \diff_{L}^m\wedge \neg \diff_{R}^m\wedge  \neg x\rightarrow \con_{B}(\instantiate(\sigma,C_2)))$\end{center}
	
	The proofs here are polynomial, in this proof section we argue for each literal in the clause, and for each universal variable, but also refer to  Lemmas~\ref{lem:irm:nLnR}~and~\ref{lem:irm:impl} which have linear proofs.
	So we have cubic size proofs in the worst case or more specifically $O(wn^2)$, where $w$ is the number of literals in the derived clause $\instantiate(\sigma,C_2)\cup \instantiate(\xi,C_2)$.
\end{proof}

\begin{lem}\label{lem:irm:res}
	Suppose $L\rightarrow \con_{L}(C_1 \vee \neg x^{\tau\sqcup\sigma})$ and $R\rightarrow \con_{R}(C_1 \vee x^{\tau\sqcup\xi})$ then $B \rightarrow \con_{B}(\instantiate(\xi, C_1) \vee \instantiate(\sigma, C_2))$ has a short \eFrege proof.
\end{lem}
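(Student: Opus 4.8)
The plan is to follow the same endgame as the \irc simulation in the proof of Lemma~\ref{thm:res}, but to replace its $B_L$/$B_R$ bookkeeping with a direct three-way case split on the top-level trigger variables $\diff_L^m$ and $\diff_R^m$. All of the hard work relating the resolvent policy $B$ to the two instantiated clauses has already been carried out in Lemmas~\ref{lem:irm:Bdiff2} and~\ref{lem:irm:Bndiff}, so what remains is purely to glue the three regimes together. Throughout I use that $\con_B$ distributes over the union of annotated clauses, so that the target $\con_B(\instantiate(\xi, C_1)\vee\instantiate(\sigma, C_2))$ is literally the disjunction $\con_B(\instantiate(\xi, C_1))\vee\con_B(\instantiate(\sigma, C_2))$.

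First I would dispatch the two ``committed'' cases. When $\diff_L^m$ holds, Lemma~\ref{lem:irm:Bdiff2} gives $B\wedge\diff_L^m\rightarrow\con_B(\instantiate(\xi, C_1))$, and one weakening step extends this to $B\wedge\diff_L^m\rightarrow\con_B(\instantiate(\xi, C_1))\vee\con_B(\instantiate(\sigma, C_2))$. Symmetrically, when $\neg\diff_L^m\wedge\diff_R^m$ holds, the same lemma yields $B\wedge\neg\diff_L^m\wedge\diff_R^m\rightarrow\con_B(\instantiate(\sigma, C_2))$, which weakens to the same disjunction. Each is a constant number of \eFrege steps layered on top of the cited lemma.

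Next I would treat the ``undecided'' case $\neg\diff_L^m\wedge\neg\diff_R^m$, where $B$ must still defer to the value of the pivot. Here Lemma~\ref{lem:irm:Bndiff} supplies both $B\wedge\neg\diff_L^m\wedge\neg\diff_R^m\rightarrow\con_B(\instantiate(\xi, C_1))\vee\neg x$ and $B\wedge\neg\diff_L^m\wedge\neg\diff_R^m\rightarrow\con_B(\instantiate(\sigma, C_2))\vee x$. A two-case argument on $x$ (equivalently, a single resolution on the existential pivot) removes $x$ and $\neg x$ and leaves $B\wedge\neg\diff_L^m\wedge\neg\diff_R^m\rightarrow\con_B(\instantiate(\xi, C_1))\vee\con_B(\instantiate(\sigma, C_2))$. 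This inference is valid even if $x$ occurs inside either converted clause, since from $P\rightarrow Q\vee\neg x$ and $P\rightarrow Q'\vee x$ one derives $P\rightarrow Q\vee Q'$ outright.

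Finally, the three guards $\diff_L^m$, $\neg\diff_L^m\wedge\diff_R^m$, and $\neg\diff_L^m\wedge\neg\diff_R^m$ are exhaustive, and \eFrege proves the corresponding tautology in constant size; combining the three guarded implications by this case split discharges every guard and yields $B\rightarrow\con_B(\instantiate(\xi, C_1))\vee\con_B(\instantiate(\sigma, C_2))$, i.e.\ $B\rightarrow\con_B(\instantiate(\xi, C_1)\vee\instantiate(\sigma, C_2))$, as required. Since each invoked lemma has a polynomial-size proof and the gluing contributes only a constant number of additional steps, the whole derivation is short. The main obstacle is not in this assembly at all but in the cross-instantiation recorded in Lemma~\ref{lem:irm:Bdiff2}; within the present proof the only point that demands care is checking that these three guards genuinely partition the behaviour of $B$ as fixed by the case definition of $(\val^i_B,\set^i_B)$, so that no regime is left unaccounted for when the guards are eliminated.
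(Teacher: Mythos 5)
Your proposal is correct and follows essentially the same route as the paper: the paper's proof also assembles the two committed implications from Lemma~\ref{lem:irm:Bdiff2}, obtains the undecided case by resolving the two statements of Lemma~\ref{lem:irm:Bndiff} on the pivot $x$, and then resolves on $\diff_L^m$ and $\diff_R^m$ to discharge the guards. You have merely spelled out the weakening steps and the exhaustiveness of the three guards, which the paper leaves implicit.
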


\begin{proof}
	$B \wedge \diff^m_L\rightarrow \con_{B}(\instantiate(\xi,C_1))$,
	$B \wedge \neg \diff^m_L\wedge \diff_R \rightarrow \con_{B}(\instantiate(\sigma,C_2))$, and
	$B \wedge \neg \diff^m_L\wedge \neg \diff_R \rightarrow \con_{B}(\instantiate(\xi, C_1) \vee \instantiate(\sigma, C_2))$
	and we can resolve on $\diff^m_L$ and $\diff^m_R$.
\end{proof}
 
\begin{thm}\label{thm:irmc}
	\eFregeRed simulates \irmc.
\end{thm}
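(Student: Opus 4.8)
The plan is to repeat the inductive template of Theorem~\ref{thm:irc} almost verbatim, proving by induction over the lines of an \irmc refutation of $\Pi\phi$ that every annotated clause $C$ carries a local policy $S$ with a polynomial-size derivation $\phi \vdash_{\eFrege} S \rightarrow \con_S(C)$. Since \irmc shares its axiom and instantiation rules with \irc (Figure~\ref{fig:IRM}), the base case and the instantiation step are literally the \irc cases: the explicit policy definitions for axiom download and for instantiation by $0/u_i$ or $1/u_i$ given in the proof of Theorem~\ref{thm:irc}, together with their short \eFrege justifications, carry over unchanged under the (now four-valued) \irmc conversion $\con_S$.

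The resolution rule of \irmc, which fuses on-the-fly instantiation with a pivot match, is exactly what Lemma~\ref{lem:irm:res} discharges: from left and right policies $L,R$ with $L\rightarrow \con_{L}(C_1 \vee \neg x^{\tau\sqcup\sigma})$ and $R\rightarrow \con_{R}(C_2 \vee x^{\tau\sqcup\xi})$ it constructs the resolvent policy $B$ and a short proof of $B \rightarrow \con_{B}(\instantiate(\xi,C_1)\vee\instantiate(\sigma,C_2))$. So the inductive step for resolution is a single appeal to that lemma, which itself rests on the difference/equivalence machinery (Lemmas~\ref{lem:irm:chain}--\ref{lem:irm:nLnR}) and the case analyses describing $B$ (Lemmas~\ref{lem:irm:Bdiff2} and \ref{lem:irm:Bndiff}).

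The one genuinely new rule to handle is \textbf{merging}, which rewrites $C \vee b^\mu \vee b^\sigma$ to $C \vee b^\xi$, where $\xi$ keeps the common value on coordinates on which $\mu$ and $\sigma$ agree and installs $*$ where they differ. Here I would \emph{not} change the policy: I keep $S$ and derive $S\rightarrow\con_S(C\vee b^\xi)$ in a constant number of Frege steps per annotated coordinate. The point is that $\xi$ is uniformly weaker than both $\mu$ and $\sigma$ under the conversion, since a $*/u_i$ coordinate produces the bare conjunct $\set^i_S$ in place of $\set^i_S\wedge u_i$ or $\set^i_S\wedge \neg u_i$. Hence each coordinate of $\ann_{b,S}(\mu)\rightarrow\ann_{b,S}(\xi)$ (and of $\ann_{b,S}(\sigma)\rightarrow\ann_{b,S}(\xi)$) is either an identity or the trivial weakening $\set^i_S\wedge(\neg)u_i\rightarrow\set^i_S$, giving $\con_S(b^\mu)\rightarrow\con_S(b^\xi)$ and $\con_S(b^\sigma)\rightarrow\con_S(b^\xi)$, and weakening then yields $\con_S(C\vee b^\mu\vee b^\sigma)\rightarrow\con_S(C\vee b^\xi)$.

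The proof concludes exactly as for \irc: the final line is the empty annotated clause with some policy $T$, giving $T\rightarrow\con_T(\bot)$; I promote $T$ to a full strategy $B$ (every $\set^i_B$ true, $\val^i_B\leftrightarrow\val^i_T\wedge\set^i_T$), obtain $\neg B=\bigvee_{i} (u_i\oplus\val^i_B)$ with each $\val^i_B$ defined left of $u_i$, and peel off the universal variables one at a time with the \Ared rule, resolving the two reducts to shrink the disjunction until the empty clause remains. The main conceptual obstacle is really already behind us: merging looks threatening but is cheap precisely because the conversion is monotone in annotation specificity, whereas the substantive difficulty lives inside Lemma~\ref{lem:irm:res}, namely building a prefix-respecting resolvent policy that, by borrowing $(\val,\set)$ values across $L$ and $R$, correctly realises the two separate instantiations $\instantiate(\xi,C_1)$ and $\instantiate(\sigma,C_2)$ that the \irmc resolution rule demands.
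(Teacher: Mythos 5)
Your proposal is correct and follows essentially the same route as the paper: axiom and instantiation are inherited from the \irc simulation, resolution is discharged by a single appeal to Lemma~\ref{lem:irm:res}, merging is handled by leaving the policy unchanged and observing that the conversion of $*/u_i$ to the bare conjunct $\set^i_S$ makes the merged annotation a pure weakening, and the final reduction phase peels off the universal variables exactly as in Theorem~\ref{thm:irc}. The paper's merge case is stated in one sentence with the same justification you give, so there is no substantive difference.
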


\begin{proof}
	For each line $C$ we create a policy $S$ such that $S\rightarrow \con_S(C)$.
	
	\noindent\textbf{Axiom}
	Suppose $C\in \phi$ and it  is downloaded as $D= \instantiate(C, \tau)$ for partial annotation $\tau$.
	We construct strategy $B$ so that
	$B\rightarrow \con_B(D)$.
	
	\begin{itemize}
		\item $\set^j_B=1$ if $u_j\in \domain(\tau)$
		\item $\set^j_B=0$ if $u_j\notin \domain(\tau)$
		\item $\val^j_B=1$ if $1/u_j\in\tau$
		\item $\val^j_B=0$ if $0/u_j\in\tau$
	\end{itemize}
	
	\noindent\textbf{Instantiation}
	Suppose we have instantiation step on $C$ on a single universal variable $u_i$ using instantiation $0/u_i$. 
	So the new annotated clause is $D= \instantiate(C, 0/u)$.
	
	From the induction hypothesis $T\rightarrow \con_T(C)$ we will develop $B$ such that $B\rightarrow \con_B(D)$.
	
	\begin{itemize}
		\item $\val^i_B \leftrightarrow \val^i_T \wedge \set^i_T$ (for instantiation by $1$ we use a disjunction instead)
		\item $\set^i_B= 1$
		\item $\val^j_B \leftrightarrow \val^j_T $, 	for $j\neq i$
		\item $\set^j_B\leftrightarrow \set^j_T$, 	for $j\neq i$
	\end{itemize}
	
	\noindent\textbf{Merge}
	When merging the local strategy need not change. When literals $l^\alpha$ and $l^\beta$ are merged the strategy only has to occasionally satisfy a $\set_B^i$ variable instead of a  $\set_B^i\wedge u_i$ or $\set_B^i\wedge \neg u_i$, so the condition that needs to be satisfied is weaker. 
	
	\noindent\textbf{Resolution}
	See the definition of $B$ and Lemma~\ref{lem:irm:res}.
	
	\noindent\textbf{Contradiction}
	At the end of the proof we have $T\rightarrow \con_T(\bot)$.
	$T$ is a policy, so we turn it into a strategy $B$ by having for each $i$
	\begin{itemize}
		\item $\val^i_B \leftrightarrow (\val^i_T \wedge \set^i_T)$
		\item $\set^i_B= 1$.
	\end{itemize}
	Effectively this instantiates $\bot$ by the assignment that sets everything to $0$ and we can argue that $B\rightarrow \con_B(\bot)$ although $\con_B(\bot)$ is just the empty clause. so we have $\neg B$. 
	But $\neg B$ is just $\bigvee_{i=1}^n (u_i \oplus  \val_B^i)$.
	In \eFregeRed we can use the reduction rule (this is the first time we use the reduction rule). The proof follows from \cite{ChewSat21}.
	We show an inductive proof of $\bigvee_{i=1}^{n-k} (u_i\oplus  \val_B^i)$ for increasing $k$  eventually leaving us with the empty clause. This essentially is where we use the \Ared rule. Since we already have $\bigvee_{i=1}^{n} (u_i\oplus  \val_B^i)$ we have the base case and we only need to show the inductive step.
	
	We derive from $\bigvee_{i=1}^{n+1-k} (u_i\oplus  \val_B^i)$ both $(0\oplus  \val_B^{n-k+1})\vee\bigvee_{i=1}^{n-k} (u_i\oplus  \val_B^i)$ and $(1\oplus  \val_B^{n-k+1})\vee\bigvee_{i=1}^{n-k} (u_i\oplus  \val_B^i)$ from reduction.
	We can resolve both with the easily proved tautology $(0\leftrightarrow   \val_B^{n-k+1})\vee(1\leftrightarrow  \val_B^{n-k+1})$ which allows us to derive $\bigvee_{i=1}^{n-k} (u_i\oplus  \val_B^i)$.
	
	We continue this until we reach the empty disjunction.
\end{proof}

\begin{cor}
	\eFregeRed simulates \lqrc.
\end{cor}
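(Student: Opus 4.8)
The plan is to derive the corollary by composing two simulations rather than reasoning about \lqrc directly. The first ingredient is the known fact that \irmc p-simulates \lqrc, which is essentially the design goal of \irmc: its $*/u$ annotations together with the merge rule were introduced precisely to encode the merged universal literals produced by long-distance resolution (see \cite{BCJ14}). Concretely, I would translate an \lqrc refutation clause by clause, sending a universal literal $u$ (resp.\ $\bar u$) occurring in a clause to the annotation $0/u$ (resp.\ $1/u$) and a merged literal to a $*/u$ annotation; a long-distance resolution step that creates a merged literal is then simulated by an \irmc resolution followed by an application of the merge rule, while ordinary (non-merging) steps map to plain \irmc resolution.

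The second ingredient is Theorem~\ref{thm:irmc}, which already provides a simulation of \irmc by \eFregeRed. Since simulation is transitive --- the composition of two polynomially bounded size increases is again polynomially bounded --- the \lqrc-to-\irmc translation composed with the \irmc-to-\eFregeRed simulation yields a simulation of \lqrc by \eFregeRed, which is exactly the statement of the corollary.

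I do not expect a genuine obstacle here, since all the hard work is already contained in Theorem~\ref{thm:irmc} and its supporting lemmas (the construction of the resolvent policy $B$ and the chain of difference and equivalence lemmas). The only point meriting a line of care is checking that the \lqrc-to-\irmc translation incurs at most polynomial overhead, so that the composed map stays polynomial; this follows immediately because the translation is local, processing each \lqrc line with a constant number of \irmc lines.
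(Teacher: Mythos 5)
Your proposal is correct and matches the paper's intended argument: the corollary is stated without proof precisely because it follows by composing the known simulation of \lqrc by \irmc (the design goal of \irmc, per the cited work) with Theorem~\ref{thm:irmc} and the transitivity of p-simulation. Your additional detail on the clause-by-clause translation is consistent with that cited simulation and adds nothing that would change the argument.
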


\section{Extended Frege+\texorpdfstring{\Ared}{∀-Red} p-simulates \lquprc} \label{sec:lquprc}
\subsection{QCDCL Resolution Systems}

 The most basic and important CDCL
system is \emph{Q-resolution (\qrc)} by
Kleine B{\"u}ning~et~al.~\cite{KBKF95}. 
\emph{Long-distance resolution (\lqrc)} appears originally in the work of Zhang and
Malik~\cite{DBLP:conf/iccad/ZhangM02}
and was formalised into a calculus by Balabanov and Jiang~\cite{DBLP:journals/fmsd/BalabanovJ12}.
It merges complementary literals of a universal variable~$u$
into the special literal~$u^*$.
These special literals prohibit certain resolution steps.
\emph{QU-resolution (\qurc)} \cite{Gelder12} removes the restriction from \qrc that the resolved variable must be an existential variable and allows resolution of universal variables.
\emph{\lquprc}~\cite{BWJ14} extends \lqrc by allowing short and long distance resolution pivots to be universal, however, the pivot is never a merged literal $z^*$. \lquprc encapsulates \qrc, \lqrc and \qurc. 
\begin{figure}[h]
			\begin{prooftree}
				\AxiomC{}
				\RightLabel{(Axiom)}
				\UnaryInfC{$C$}
				\DisplayProof\hspace{2cm}
				\AxiomC{$D\cup\{u\}$}
				\RightLabel{($\forall$-Red)}
				\UnaryInfC{$D$}
				\DisplayProof\hspace{2cm}
				\AxiomC{$D\cup\{u^*\}$}
				\RightLabel{($\forall$-Red$^*$)}
				\UnaryInfC{$D$}
			\end{prooftree}
			\begin{minipage}{0.99\linewidth}
				$C$ is a clause in the original matrix. Literal $u$ is universal and $\lev(u)\geq\lev(l)$ for all $l\in D$.
			\end{minipage}
			\begin{prooftree}
				\AxiomC{$C_1\cup U_1\cup\{\lnot{x}\}$}
				\AxiomC{$C_2 \cup U_2\cup\{x\}$}
				\RightLabel{(Res)}
				\BinaryInfC{$C_1\cup C_2\cup U^*$}
			\end{prooftree}
			\begin{minipage}{0.99\linewidth}
				We consider two settings of the Res-rule:\\
				\textbf{SR:} If $z\in C_1$, then $\lnot{z}\notin C_2$. $U_1=U_2=U^*=\emptyset$.\\
				\textbf{LR:} If  $l_1\in C_1, l_2\in C_2$, and $\var(l_1)=\var(l_2)=z$ then $l_1=l_2\neq z^*$.
				$U_1, U_2$ contain only universal literals with $\var(U_1)=\var(U_2)$.
				$\ind(x)<\ind(u)$ for each $u\in\var(U_1)$.\\
				If $w_1\in U_1, w_2\in U_2$, $\var(w_1)=\var(w_2)=u$ then $w_1=\lnot w_2$ or $w_1=u^*$ or $w_2=u^*$. $U^*=\{u^* \mid u\in \var(U_1)\}$.\\
				
				For $b=\{1,2\}$, define $V_b=\{u^*\mid u^* \in C_b \}$. In other words $V_b$ is the subclause of $C_b \vee U_b$ of starred literals left of $x$.
			\end{minipage}
			\caption{The rules of \lquprc \cite{BWJ14}.}
			\label{fig:lquprc}
\end{figure}

\subsection{Conversion to Propositional Logic and Simulation}
\lquprc and \irmc are mutually incomparable in terms of proof strength, however both share enough similarities to get the simulation working. Once again we can use $\set^i_S$ variables to represent an $u_i^*$, and a $\neg \set_S^i$ to represent that policy $S$ chooses not to issue a value to $u_i$. 

For any set of universal variables  $Y$, let $\ann_{x,S}(Y)= \bigwedge_{u_j<x}^{u_j\notin Y} \neg\set^j_S \wedge \bigwedge_{u_j<x}^{u_j\in Y} \set^j_S$. 
Note that we do not really need to add polarities to the annotations, these are taken into account by the clause literals. 
Literals $u$ and $\bar u$ do not need to be assigned by the policy, they are now treated as a consequence of the CNF.
Because they can be resolved we treat them like existential variables in the conversion.
For universal variable $u_i$,
$\con_{S,C}(u_i)= u_i \wedge \neg \set^i_S \wedge \ann_{x,S}(\{u\mid u^*\in C\})$ and $\con_{S,C}(\neg u_i)= \neg u_i \wedge \neg \set^i_S \wedge \ann_{x,S}(\{u\mid u^*\in C\})$.
We reserve $\set^j_S$ for starred literals as they cannot be removed.
For existential literal~$x$,
$\con_{S,C}(x)= x \wedge \ann_{x,S}(\{u\mid u^*\in C\})$.
Finally, $\con_{S,C}(u^*)= \bot$, because we do not treat $u^*$ as a literal but part of the ``annotation'' to literals right of it.
Also, $u^*$ cannot be resolved but it is automatically reduced when no more literals are to the right of it. 
For clauses in \lquprc, we let $\con_S(C)=\bigvee_{l\in C}\con_{S,C}(l) $. In summary, in comparison to \irmc the conversion now includes universal variables and gives them annotations, but removes polarities from the annotations. 
Policies still remain structured as they were for \irc, with extension variables $\val^i_S$ and $\set^i_S$, where $S=\bigwedge_{i=1}^{n} \set_S^i \rightarrow(u_i \leftrightarrow \val_S^i)$.

We will once again focus on the resolution case, using the notation as given in Figure~\ref{fig:lquprc}. 

\begin{observation}
	$V_1\cap V_2=\emptyset$ by definition of resolution in \lquprc (see Figure~\ref{fig:lquprc}).
\end{observation}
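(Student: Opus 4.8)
The plan is to read the claim directly off the side conditions of the (Res)-rule in Figure~\ref{fig:lquprc}. First I would unpack what $V_1$ and $V_2$ actually are: by definition $V_b$ collects the starred literals $u^*$ of $C_b\vee U_b$ that lie left of the pivot $x$. Since the LR condition requires $\ind(x)<\ind(u)$ for every $u\in\var(U_b)$, all literals of $U_b$ sit to the right of $x$, so no starred literal contributed by $U_b$ can belong to $V_b$. Consequently $V_b$ is exactly the set of starred literals of $C_b$ occurring to the left of $x$, and the question reduces to whether a single variable can contribute a starred literal to both $C_1$ and $C_2$.

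With this reformulation in hand, I would argue by contradiction. Suppose some $u^*\in V_1\cap V_2$, so that $u^*\in C_1$ and $u^*\in C_2$. The LR condition on the resolved clauses states that whenever $l_1\in C_1$ and $l_2\in C_2$ share a variable $z$, they must satisfy $l_1=l_2\neq z^*$. Instantiating this with $l_1=l_2=u^*$ and $z=u$ yields the requirement $u^*\neq u^*$, which is absurd. Hence no variable can contribute a starred literal to both premises, and therefore $V_1\cap V_2=\emptyset$.

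The argument is essentially a single step, so there is no genuine obstacle: the observation is an immediate syntactic consequence of the clause side condition that forbids two occurrences of $z^*$ across the two premises. The only subtlety worth stating explicitly is the preliminary reduction showing that the starred literals counted in $V_b$ must originate from $C_b$ rather than $U_b$, which is where the level/index constraint is used to place $U_b$ entirely to the right of the pivot $x$. Once that is noted, the emptiness of the intersection follows directly.
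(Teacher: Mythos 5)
Your proof is correct and matches the paper's intent exactly: the paper states this observation without proof as following ``by definition,'' and your argument is precisely the intended unpacking, namely that the LR side condition $l_1=l_2\neq z^*$ for literals of $C_1$ and $C_2$ sharing a variable directly forbids $u^*$ from occurring in both $C_1$ and $C_2$. Your preliminary remark reconciling the formal definition $V_b=\{u^*\mid u^*\in C_b\}$ with the informal gloss involving $C_b\vee U_b$ (via the index condition $\ind(x)<\ind(u)$) is a reasonable extra care but not needed given the formal definition.
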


We use $L$ to denote the local policy of $C_1\cup U_1\cup\{\lnot{x}\}$, $R$ to denote the local policy of $C_2\cup U_2\cup\{x\}$, and $B$ is intended to be the local policy for the resolvent $C_1\cup C_2\cup U$. Once again we use $\set^i_L, \set^i_R, \set^i_B, \val^i_L, \val^i_R, \val^i_B$ to describe the constituent parts of it. 

\subsubsection{Equivalence}
The notation for equivalence slightly changes due to the fact we are no longer working with annotations, but present starred literals. These work in much the same way. Let $b$ be in $\{1,2\}$

\noindent$\equ^{0}_{f,V_b}:=1$

\noindent $\equ^{i}_{f,V_b}:=\equ^{i-1}_{f=g}\wedge \set^i_f$ when $u_i^*\in V_b$

\noindent $\equ^{i}_{f=g}:=\equ^{i-1}_{f=g}\wedge (\neg \set^i_f)$ when $u_i^*\notin V_b$

\subsubsection{Difference}

\noindent$\diff_L^0:= 0$ and 
\noindent$\diff_R^0:= 0$

{For $u_i^*\notin C_1\cup C_2$},

\noindent $\diff_L^i:= \diff_L^{i-1} \vee (\equ_{R, V_2}^{i-1}\wedge \set^i_L)$

\noindent $\diff_R^i:= \diff_R^{i-1} \vee (\equ_{L, V_2}^{i-1}\wedge \set^i_R)$

{For $u_i^*\in C_1$},

\noindent $\diff_L^i:= \diff_L^{i-1} \vee (\equ_{R, V_2}^{i-1}\wedge \neg\set^i_L)$

\noindent $\diff_R^i:= \diff_R^{i-1} \vee (\equ_{L, V_1}^{i-1}\wedge \set^i_R)$

{For $u_i^*\in C_2$},

\noindent $\diff_L^i:= \diff_L^{i-1} \vee (\equ_{R, V_2}^{i-1}\wedge \set^i_L)$

\noindent $\diff_R^i:= \diff_R^{i-1} \vee (\equ_{L, V_1}^{i-1}\wedge \neg \set^i_R)$

\subsubsection{Policy Variables} \label{sec:pol}

{For $u_i^*\notin C_1 \cup C_2$}, $i\leq m$

\noindent$(\val^i_B, \set^i_B)=
\begin{cases} 
(\val^i_R, \set^i_R)& \text{if } \neg \diff_{L}^{i-1}\wedge(\diff_R^{i-1} \vee\neg \set_L^{i})\\ 
(\val^i_L, \set^i_L) & \text{otherwise. }
\end{cases}$

{For $u_i^*\in C_1$}, $i\leq m$

\noindent$(\val^i_B, \set^i_B)= 
\begin{cases}
(0,1)  & \text{if }\neg \diff_L^{i-1}\wedge\diff_R^{i-1}\wedge \neg \set_R^i\\
(\val^i_R, \set^i_R)& \text{if } \neg \diff_{L}^{i-1}\wedge \set_R^{i}\wedge(\diff_R^{i-1} \vee\set_L^{i})\\
(\val^i_L, \set^i_L) & \text{otherwise. }
\end{cases}$

{For $u_i^*\in C_2$}, $i\leq m$

\noindent$(\val^i_B, \set^i_B)= 
\begin{cases}
(0,1)  & \text{if }\diff_L^{i-1}\wedge \neg \set_L^i\\
(\val^i_R, \set^i_R)& \text{if } \neg \diff_{L}^{i-1}\wedge(\diff_R^{i-1} \vee\neg\set_L^{i})\\
(\val^i_L, \set^i_L) & \text{otherwise. }
\end{cases}$

For $u_i\in \domain(U)$, $i>m$

\noindent$(\val^i_B, \set^i_B)= 
\begin{cases}
(\val^i_R, \set^i_R)&\text{if } \set_R^i\wedge \neg \diff_{L}^{m}\wedge(\diff_R^{m} \vee \neg x)  \\
(0, 1)&\text{if } u_i\in U_2\text{ and }\neg \set_R^i\wedge \neg \diff_{L}^{m}\wedge(\diff_R^{m} \vee \neg x)\\
(1, 1)&\text{if } \neg u_i\in U_2\text{ and }\neg \set_R^i\wedge \neg \diff_{L}^{m}\wedge(\diff_R^{m} \vee \neg x)\\
(0, 1)&\text{if } u_i^*\in U_2\text{ and }\neg \set_R^i\wedge \neg \diff_{L}^{m}\wedge(\diff_R^{m} \vee \neg x)\\
(\val^i_L, \set^i_L) & \set_L^i\wedge \diff_{L}^{m}\vee(\neg \diff_R^{m} \wedge x) \\
(0, 1)&\text{if } u_i\in U_1\text{ and }\neg \set_L^i\wedge (\diff_{L}^{m}\vee(\neg \diff_R^{m} \wedge x))\\
(1, 1)&\text{if } \neg u_i\in U_1\text{ and }\neg \set_L^i\wedge (\diff_{L}^{m}\vee(\neg \diff_R^{m} \wedge x))\\
(0, 1)&\text{if } u_i^*\in U_1\text{ and }\neg \set_L^i\wedge( \diff_{L}^{m}\vee(\neg \diff_R^{m} \wedge x))\\
\end{cases}$

For $u_i\notin \domain(U)$, $i>m$

\noindent$(\val^i_B, \set^i_B)= 
\begin{cases}
(0,1) &\text{if } u^*\in V_2 \text{ and } \neg \set_L^i\wedge (\diff_L^m \vee (\neg \diff_R^{m} \wedge x)) \\
(\val^i_L, \set^i_L) &\text{if } u^*\in V_2 \text{ and } \set_L^i\wedge (\diff_L^m \vee (\neg \diff_R^{m} \wedge x)) \\
(\val^i_R, \set^i_R) &\text{if } u^*\in V_2 \text{ and } \neg \diff_L^m \wedge (\diff_R^{m} \vee \neg x) \\
(0,1) &\text{if } u^*\in V_1 \text{ and } \neg \set_R^i\wedge (\neg \diff_L^m \wedge (\diff_R^{m} \vee \neg x))\\
(\val^i_R, \set^i_R) &\text{if } u^*\in V_1 \text{ and } \set_R^i\wedge (\neg \diff_L^m \wedge (\diff_R^{m} \vee \neg x) )\\
(\val^i_L, \set^i_L) &\text{if } u^*\in V_1 \text{ and } \diff_L^m \vee (\neg \diff_R^{m} \wedge x) \\
(\val^i_R, \set^i_R)& \text{if } u^*\notin V_1\cup V_2 \text{ and } \neg \diff_{L}^{m}\wedge(\diff_R^{m} \vee \neg x)\\
(\val^i_L, \set^i_L) & \text{if } u^*\notin V_1\cup V_2 \text{ and } \diff_L^m \vee (\neg \diff_R^{m} \wedge x)
\end{cases}$

One may notice there are a larger number of cases for $i>m$ than in previous sections, this is because $u$ and $\neg u$ become $u^*$ and end up joining the annotation and policies. It should also be pointed out that there are cases resulting $(0,1)$ than to $(1,1)$ this a is simply matter of using $0$ as the default value when some set has to be made.

\begin{lem}\label{lem:lquprc:chain}
	For $0<j\leq m$ the following propositions have short derivations in Extended Frege:
	\begin{itemize}
		\item $\diff_L^j \rightarrow \bigvee_{i=1}^j \diff_L^i \wedge \neg \diff_L^{i-1}$
		\item $\diff_R^j \rightarrow \bigvee_{i=1}^j \diff_R^i \wedge \neg \diff_R^{i-1}$
		\item $\neg \equ_{L,V_1}^j \rightarrow \bigvee_{i=1}^j \neg \equ_{L,V_1}^i \wedge  \equ_{L,V_1}^{i-1}$
		\item $\neg \equ_{R, V_2}^j \rightarrow \bigvee_{i=1}^j \neg \equ_{R, V_2}^i \wedge  \equ_{R, V_2}^{i-1}$
		
	\end{itemize}
\end{lem}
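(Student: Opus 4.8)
The plan is to prove all four implications by a single induction on $j$, reusing verbatim the telescoping argument of Lemma~\ref{lem:chain} (as was already done for \irmc in Lemma~\ref{lem:irm:chain}). The only feature of the definitions the argument actually needs is their recursive \emph{shape}: in every one of the \lquprc case splits of Section~\ref{sec:pol}, the difference variable keeps the form $\diff_L^i = \diff_L^{i-1}\vee A_i$ (and likewise $\diff_R^i = \diff_R^{i-1}\vee A_i$), where $A_i$ is whatever guard the membership of $u_i^*$ in $C_1$ or $C_2$ dictates; dually, each equivalence variable keeps the form $\equ_{L,V_1}^i = \equ_{L,V_1}^{i-1}\wedge A_i$ and $\equ_{R,V_2}^i = \equ_{R,V_2}^{i-1}\wedge A_i$. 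Since the argument never inspects $A_i$, the proliferation of cases is irrelevant.

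Concretely, for the two $\diff$ statements I would induct on $j$. The base case $j=1$ uses that $\diff_L^0$ and $\diff_R^0$ are defined to be $0$, so $\diff_L^1\rightarrow\diff_L^1\wedge\neg\diff_L^0$ is a constant-size Frege derivation. For the step from $j$ to $j+1$, Frege proves the tautology $\diff_L^{j+1}\rightarrow(\diff_L^{j+1}\wedge\neg\diff_L^{j})\vee\diff_L^{j}$ by splitting on $\diff_L^{j}$, and the induction hypothesis $\diff_L^{j}\rightarrow\bigvee_{i=1}^{j}\diff_L^i\wedge\neg\diff_L^{i-1}$ rewrites the trailing disjunct, yielding $\bigvee_{i=1}^{j+1}\diff_L^i\wedge\neg\diff_L^{i-1}$. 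Each step adds only $O(1)$ lines, for a total $O(j)$-size proof, and the argument for $\diff_R$ is identical up to renaming.

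The two $\equ$ statements are handled dually, swapping the roles of disjunction and conjunction: here $\equ^0$ is defined to be $1$, so the base case reads $\neg\equ_{L,V_1}^1\rightarrow\neg\equ_{L,V_1}^1\wedge\equ_{L,V_1}^0$, and the inductive step first derives $\neg\equ_{L,V_1}^{j+1}\rightarrow(\neg\equ_{L,V_1}^{j+1}\wedge\equ_{L,V_1}^{j})\vee\neg\equ_{L,V_1}^{j}$ and then applies the hypothesis to the last disjunct. Since $\diff_R^0$, $\equ_{L,V_1}^0$, and $\equ_{R,V_2}^0$ are all the appropriate constant, the remaining derivations are copies up to variable names.

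I do not expect a genuine obstacle: there is no new mathematical content beyond confirming that the recursive head ($\diff^{i-1}$ as a leading disjunct, $\equ^{i-1}$ as a leading conjunct) survives every \lquprc branch. The one thing worth checking carefully is precisely this invariant across all cases for $u_i^*\in C_1$, $u_i^*\in C_2$, and $u_i^*\notin C_1\cup C_2$; once that is verified, the induction goes through exactly as for \irc and \irmc.
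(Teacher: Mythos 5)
Your proposal is correct and matches the paper's approach exactly: the paper's proof of this lemma is a one-line remark that the telescoping induction of Lemma~\ref{lem:chain} still applies despite the modified definitions, which is precisely the observation you make (the argument only uses the recursive shape $\diff^i=\diff^{i-1}\vee A_i$, $\equ^i=\equ^{i-1}\wedge A_i$, and the constants $\diff^0=0$, $\equ^0=1$). Your write-up simply spells out the details that the paper leaves implicit.
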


\begin{proof}
	The proof of Lemma~\ref{lem:chain} still works despite the modifications to definition.
\end{proof}

\begin{lem}\label{lem:lquprc:impl}
	For $0\leq i \leq j\leq m$ the following propositions that describe the monotonicity of $\diff$ and $\equ$ have short derivations in Extended Frege:
	\begin{itemize}
		\item $\diff_L^i \rightarrow \diff_L^j$
		\item $\diff_R^i \rightarrow \diff_R^j$
		\item $\neg \equ_{L, V_1}^i \rightarrow \neg \equ_{L, V_1}^j$
		\item $\neg \equ_{R, V_2}^i \rightarrow \neg \equ_{R, V_2}^j$
	\end{itemize}
\end{lem}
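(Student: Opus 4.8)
The plan is to mirror the arguments for Lemma~\ref{lem:impl} and Lemma~\ref{lem:irm:impl}, since this is the exact \lquprc analogue of those two results. The key structural observation is that every case of the definitions in Section~\ref{sec:pol} and of the $\diff$ and $\equ$ variables preserves the same monotone recursive shape: each $\diff_L^i$ (and $\diff_R^i$) is defined as $\diff_L^{i-1}\vee A$ for a formula $A$ depending only on the placement of $u_i$ (whether $u_i^*\in C_1$, $u_i^*\in C_2$, or $u_i^*\notin C_1\cup C_2$), and each $\equ^i_{L,V_1}$ (and $\equ^i_{R,V_2}$) is defined as $\equ^{i-1}\wedge A$. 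Hence the $\diff$ variables only grow with $i$ and the $\equ$ variables only shrink, which is precisely the monotonicity we must certify.

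First I would handle the two $\diff$ statements by induction on $j$ with $i$ fixed. The base case $j=i$ is the tautology $\diff_L^i\rightarrow\diff_L^i$, derivable in constant size by Frege. For the step from $j$ to $j+1$, the defining equation $\diff_L^{j+1}=\diff_L^{j}\vee A$ gives $\diff_L^{j}\rightarrow\diff_L^{j+1}$ as an immediate constant-size consequence independent of which case applies; chaining with the induction hypothesis $\diff_L^i\rightarrow\diff_L^j$ yields $\diff_L^i\rightarrow\diff_L^{j+1}$ and keeps the total proof size $O(j)$. The two $\equ$ statements go analogously: since $\equ^{j+1}_{L,V_1}=\equ^{j}_{L,V_1}\wedge A$, contraposition gives $\neg\equ^{j}_{L,V_1}\rightarrow\neg\equ^{j+1}_{L,V_1}$ in constant size, and combining with the hypothesis yields $\neg\equ^i_{L,V_1}\rightarrow\neg\equ^{j+1}_{L,V_1}$. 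The arguments for $\diff_R$ and $\equ^i_{R,V_2}$ are identical up to renaming.

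The only genuine point of care — and where this differs from the earlier lemmas — is that the \lquprc definitions split into strictly more cases, because $u$ and $\neg u$ literals now participate as $u^*$ in the annotations. So the main obstacle is simply verifying \emph{uniformly across all these cases} that each defining equation really has the shape $\diff^{i-1}\vee A$, respectively $\equ^{i-1}\wedge A$. Once this structural uniformity is confirmed, which inspection of the relevant definitions makes routine, the value of $A$ is irrelevant to monotonicity and no further case analysis is needed.
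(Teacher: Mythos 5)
Your proposal is correct and matches the paper's argument: the paper proves this lemma by simply noting that the proof of Lemma~\ref{lem:impl} carries over, and that proof is exactly the induction on $j$ you describe, exploiting the uniform shapes $\diff^{j+1}=\diff^{j}\vee A$ and $\equ^{j+1}=\equ^{j}\wedge A$ across all cases. Your added remark that the only point of care is checking this structural uniformity over the larger \lquprc case split is precisely what the paper's terse justification implicitly relies on.
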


\begin{proof} The proofs of Lemma~\ref{lem:impl} still work despite the modifications to definition.
\end{proof}

\begin{lem}\label{lem:lquprc:tau}
	For any $0\leq i\leq m$
	the following propositions are true and have short Extended Frege proofs.
	
	\begin{itemize}
		\item $\diff_L^i \rightarrow \neg \ann_{x,L}(V_1)$
		\item $\diff_R^i \rightarrow \neg \ann_{x,R}(V_2)$
	\end{itemize}
\end{lem}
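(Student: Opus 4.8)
The plan is to mirror the argument used for Lemma~\ref{lem:irm:tau} (and before it Lemma~\ref{lem:tau}), but to exploit the fact that the \lquprc conversion records universal variables through $\set$ variables \emph{without polarities}. Concretely, $\ann_{x,L}(V_1)$ is a conjunction that only constrains the $\set^j_L$ for $u_j<x$: it demands $\set^j_L$ when $u_j\in V_1$ and $\neg\set^j_L$ when $u_j\notin V_1$. This is why the statement can dispense with the $L\wedge$ hypothesis present in Lemmas~\ref{lem:tau} and~\ref{lem:irm:tau}: contradicting the annotation will require no appeal to the policy semantics $\val^i_L\leftrightarrow u_i$, only to the $\set^i_L$ literals themselves.

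First I would reduce to a single ``trigger'' point using Lemma~\ref{lem:lquprc:chain}, namely $\diff_L^j\rightarrow \bigvee_{i=1}^j \diff_L^i\wedge\neg\diff_L^{i-1}$, so that it suffices to show $\diff_L^i\wedge\neg\diff_L^{i-1}\rightarrow\neg\ann_{x,L}(V_1)$ for each $i$. For such an $i$, the inductive definition of $\diff_L^i$ gives $\diff_L^i\wedge\neg\diff_L^{i-1}\rightarrow \equ_{R,V_2}^{i-1}\wedge A_i$, and hence $\rightarrow A_i$, where $A_i$ is the case-dependent conjunct: $A_i=\set^i_L$ when $u_i^*\notin C_1\cup C_2$, $A_i=\neg\set^i_L$ when $u_i^*\in C_1$, and $A_i=\set^i_L$ when $u_i^*\in C_2$. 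The $\equ_{R,V_2}^{i-1}$ factor plays no role and is simply discarded.

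The heart of the argument is then a three-way case analysis matching $A_i$ against the $i$-th conjunct of $\ann_{x,L}(V_1)$, using the observation that $V_1\cap V_2=\emptyset$. When $u_i^*\in C_1$ we have $u_i\in V_1$, so $\ann_{x,L}(V_1)$ insists on $\set^i_L$, which $A_i=\neg\set^i_L$ directly contradicts. When $u_i^*\in C_2$ we have $u_i\in V_2$, hence $u_i\notin V_1$ by disjointness, so $\ann_{x,L}(V_1)$ insists on $\neg\set^i_L$, contradicted by $A_i=\set^i_L$; and when $u_i^*\notin C_1\cup C_2$ we again have $u_i\notin V_1$, contradicted by $A_i=\set^i_L$. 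Each of these is a constant-size Frege refutation of one conjunct, so assembling them over the disjunction of Lemma~\ref{lem:lquprc:chain} yields $\diff_L^i\rightarrow\neg\ann_{x,L}(V_1)$ in size linear in $m$. The bound for $\diff_R^i\rightarrow\neg\ann_{x,R}(V_2)$ is completely symmetric, swapping the roles of $L,R$, of $V_1,V_2$, and of $C_1,C_2$ in the difference definitions.

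The routine verification that $A_i$ lines up with the correct conjunct is the only place requiring genuine care, and it is here that the disjointness observation $V_1\cap V_2=\emptyset$ is indispensable (it is what converts ``$u_i^*\in C_2$'' into ``$u_i\notin V_1$''). The main conceptual hurdle, which turns out to be mild, is recognising that no policy semantics are needed: the contradiction lives entirely at the level of $\set$ variables, which is precisely what makes this lemma strictly simpler than its \irc and \irmc predecessors.
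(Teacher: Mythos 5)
Your proposal is correct and follows essentially the same route as the paper: the paper's (much terser) proof likewise reduces to the trigger point $\diff_L^i\wedge\neg\diff_L^{i-1}$ and observes that the case-dependent conjunct of the $\diff$ definition ($\set^i_L$ when $u_i\notin V_1$, $\neg\set^i_L$ when $u_i\in V_1$) directly contradicts the corresponding conjunct of $\ann_{x,L}(V_1)$, with your three cases collapsing into these two. Your explicit remarks about the role of $V_1\cap V_2=\emptyset$ and the absence of the $L\wedge$ hypothesis are accurate elaborations of what the paper leaves implicit.
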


\begin{proof}
	If $u_i \notin V_1$ then $\diff_{L}^i\wedge \neg \diff_{L}^{i-1}\rightarrow \set^i_{L}$ but $\ann_{x,L}(V_1)$ insists on $\neg\set^i_{L}$.
	
	If $u_i \in V_1$ then $\diff_{L}^i\wedge \neg \diff_{L}^{i-1}\rightarrow \neg \set^i_{L}$ but $\ann_{x,L}(V_1)$ insists on $\set^i_{L}$.
	This is done similarly for $R$.
\end{proof}

\begin{lem}\label{lem:lquprc:nLnR}
	For any $0\leq j\leq m$
	the following propositions are true and have a short Extended Frege proof.
	\begin{itemize}
		\item $\neg \diff_L^j \wedge \neg \diff_R^j \rightarrow \equ^j_{L,V_1}$ 
		\item $\neg \diff_L^j \wedge \neg \diff_R^j \rightarrow \equ^j_{R,V_2}$
		\item $\neg \diff_L^j \wedge \neg \diff_R^j \rightarrow (\neg \set_B^j \wedge \neg \set_L^j\wedge \neg \set_R^j)$ when $u_j^*\notin C_1\vee C_2$.
		\item $\neg \diff_L^j \wedge \neg \diff_R^j \rightarrow (\set_B^j \wedge \set_L^j\wedge \neg \set_R^j \wedge (\val_B^j\leftrightarrow\val_L^j ))$ when $u_j^*\in C_1$.
		\item $\neg \diff_L^j \wedge \neg \diff_R^j \rightarrow (\set_B^j \wedge \neg \set_L^j\wedge \set_R^j \wedge (\val_B^j\leftrightarrow\val_R^j ))$ when $u_j^*\in C_2$.
		
	\end{itemize} 
	
\end{lem}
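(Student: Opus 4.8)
The plan is to mirror the proof of Lemma~\ref{lem:nLnR}: first establish the two equivalence implications $\neg\diff_L^j\wedge\neg\diff_R^j\rightarrow\equ^j_{L,V_1}$ and $\neg\diff_L^j\wedge\neg\diff_R^j\rightarrow\equ^j_{R,V_2}$ by a \emph{simultaneous} induction on $j$, and then read off the remaining three conditional statements as corollaries of the definitions of $\equ$, $\set_B$ and $\val_B$. The two equivalences must be proved together because, as in the \irc case, the chaining step for $L$ will invoke the induction hypothesis for $R$ and conversely. The base case $j=0$ is immediate, since $\equ^0_{L,V_1}$ and $\equ^0_{R,V_2}$ are both defined to be $1$.

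For the inductive step I would prove the chaining implication $\neg\equ^j_{L,V_1}\rightarrow\neg\equ^{j-1}_{L,V_1}\vee\diff_L^j\vee\neg\equ^{j-1}_{R,V_2}$ (and its mirror for $R,V_2$). This is where the case split on the membership of $u_j^*$ enters. Unfolding the definition of $\equ^j_{L,V_1}$ shows that $\neg\equ^j_{L,V_1}$ contributes, beyond $\neg\equ^{j-1}_{L,V_1}$, exactly the fresh literal $\set_L^j$ when $u_j^*\notin V_1$ and the fresh literal $\neg\set_L^j$ when $u_j^*\in V_1$; in each of the three membership cases ($u_j^*\notin C_1\cup C_2$, $u_j^*\in C_1$, $u_j^*\in C_2$) this fresh literal is precisely the one that, together with $\equ^{j-1}_{R,V_2}$, forms the defining disjunct of $\diff_L^j$. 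Hence that fresh literal implies $\diff_L^j\vee\neg\equ^{j-1}_{R,V_2}$ in a constant number of Frege steps. Applying the induction hypotheses $\neg\equ^{j-1}_{L,V_1}\rightarrow\diff_L^{j-1}\vee\diff_R^{j-1}$ and $\neg\equ^{j-1}_{R,V_2}\rightarrow\diff_L^{j-1}\vee\diff_R^{j-1}$ removes the two $\neg\equ^{j-1}$ terms, and the monotonicity of $\diff$ (Lemma~\ref{lem:lquprc:impl}) collapses $\diff_L^{j-1}\vee\diff_R^{j-1}\vee\diff_L^j$ into $\diff_L^j\vee\diff_R^j$, giving the contrapositive of the desired implication. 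The argument for $\equ^j_{R,V_2}$ is symmetric after swapping $L\leftrightarrow R$ and $V_1\leftrightarrow V_2$.

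Granting both equivalences, the three $\set_B$/$\val_B$ statements follow by a finite case analysis. From $\neg\diff_L^j\wedge\neg\diff_R^j$ and Lemma~\ref{lem:lquprc:impl} we obtain $\neg\diff_L^{j-1}\wedge\neg\diff_R^{j-1}$, and from the two established equivalences we read $\set_L^j$ and $\set_R^j$ directly off the final clauses of $\equ^j_{L,V_1}$ and $\equ^j_{R,V_2}$: when $u_j^*\notin C_1\cup C_2$ we get $\neg\set_L^j\wedge\neg\set_R^j$; when $u_j^*\in C_1$ we get $\set_L^j\wedge\neg\set_R^j$ (using $V_1\cap V_2=\emptyset$ so that $u_j^*\notin V_2$); and when $u_j^*\in C_2$ we get $\neg\set_L^j\wedge\set_R^j$. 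Feeding these, together with $\neg\diff_L^{j-1}$ and $\neg\diff_R^{j-1}$, into the policy definition of Section~\ref{sec:pol} selects a single branch in each case: the $R$-copying branch (so $\set_B^j=\set_R^j$ is false) when $u_j^*\notin C_1\cup C_2$; the ``otherwise'' branch copying $L$ when $u_j^*\in C_1$; and the $R$-copying branch when $u_j^*\in C_2$. This yields exactly the stated values $\neg\set_B^j$, respectively $\set_B^j\wedge(\val_B^j\leftrightarrow\val_L^j)$ and $\set_B^j\wedge(\val_B^j\leftrightarrow\val_R^j)$.

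I expect the main obstacle to be the cross-referenced bookkeeping between $\diff$ and $\equ$: a deviation of the \emph{left} policy, $\diff_L^j$, is gated by $\equ^{j-1}_{R,V_2}$, the consistency of the \emph{right} policy with $V_2$, and symmetrically for $\diff_R^j$. Keeping the two polarity conventions (``set'' versus ``unset'' according to whether $u_j^*$ lies in the relevant $V_b$) aligned across this cross-link, in all three membership cases and for both $L$ and $R$, is the delicate part; it relies essentially on the disjointness $V_1\cap V_2=\emptyset$ to guarantee that each $u_j^*$ falls under exactly one clause of each definition, so that the case analysis is genuinely exhaustive and the fresh literal always matches the intended triggering disjunct.
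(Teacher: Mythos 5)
Your proposal is correct and follows essentially the same route as the paper: a joint induction establishing $(\neg\equ^j_{L,V_1}\vee\neg\equ^j_{R,V_2})\rightarrow(\diff_L^j\vee\diff_R^j)$ via the cross-linked chaining implication $\neg\equ^j_{L,V_1}\rightarrow\neg\equ^{j-1}_{L,V_1}\vee\neg\equ^{j-1}_{R,V_2}\vee\diff_L^j$, followed by reading the $\set_B^j/\val_B^j$ claims off the policy definition using $V_1\cap V_2=\emptyset$. Your case analysis of which branch of the policy definition fires is in fact slightly more careful than the paper's (in the $u_j^*\notin C_1\cup C_2$ case the trigger condition selects the $R$-copy, though both readings yield $\neg\set_B^j$), so no changes are needed.
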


\begin{proof}
	\begin{sloppypar}
		We show that $\neg \equ^{j+1}_{L, V_1}\rightarrow \neg \equ^j_{L, V_1} \vee \neg \equ^j_{R, V_2} \vee \diff_L^{j+1}$ and $\neg \equ^{j+1}_{R, V_2}\rightarrow \neg \equ^j_{R, V_2} \vee \neg \equ^j_{L, V_2} \vee \diff_R^{j+1}$.
		Suppose $u_{j+1}^*\in V_1$ then $\neg \equ^{j+1}_{L, V_1}\wedge \equ^j_{L, V_1}\rightarrow \set^{j+1}_L$ and $\set^{j+1}_L\rightarrow \neg \equ^j_{R,V_2}\vee \diff_L^{j+1}$, so we have $\neg \equ^{j+1}_{L, V_1}\wedge \rightarrow \neg \equ^j_{R, V_2} \vee \neg \equ^j_{L, V_1} \vee \diff_R^{j+1}$. This is symmetric for $R$ and for $u_{j+1}^*\notin V_1$.
	\end{sloppypar}

	\noindent \textbf{Induction Hypothesis (on $j$):} $(\neg \equ_{L,V_1}^j \vee \neg \equ_{R, V_2}^j) \rightarrow (\diff_L^j\vee \diff_R^j)$.
	
	\noindent\textbf{Base Case ($j=1$):} $\neg \equ_{L,V_1}^1\wedge \equ_{L,V_1}^{0} \rightarrow \diff_L^1 \vee \neg \equ^{0}_{R, V_2}$
	, and
	$\neg \equ_{R, V_2}^1\wedge \equ_{R, V_2}^{0} \rightarrow \diff_R^1 \vee \neg \equ^{0}_{L,V_1}$.
	
	However since $\equ^{0}_{L,V_1}$ and $\equ^{0}_{R, V_2}$ are both true it simplifies to $\neg \equ_{L,V_1}^1\rightarrow \diff_L^1$ 
	and 
	$\neg \equ_{R, V_2}^1\rightarrow \diff_R^1$
	which can be combined to get 
	$(\neg \equ_{L,V_1}^1 \vee \neg \equ_{R, V_2}^1) \rightarrow (\diff_L^1\vee \diff_R^1)$.
	
	\noindent\textbf{Inductive Step ($j+1$):} 
	\begin{sloppypar}
		The Induction Hypothesis 
		$(\neg \equ_{L, V_1}^j \vee \neg \equ_{R, V_2}^j) \rightarrow (\diff_L^j\vee \diff_R^j)$
		can be weakened to 
		$(\neg \equ_{L, V_1}^j \vee \neg \equ_{R, V_2}^j) \rightarrow (\diff_L^{j+1}\vee \diff_R^{j+1})$,
		using
		$\diff_L^j\rightarrow \diff_L^{j+1}$
		and
		$\diff_R^j\rightarrow \diff_R^{j+1}$.
		
		We now need to replace $(\neg \equ_{L, V_1}^j \vee \neg \equ_{R, V_2}^j) $ with $(\neg \equ_{L, V_1}^{j+1} \vee \neg \equ_{R, V_2}^{j+1}) $. 
		Suppose $u_{j+1}\in V_1$, note that $\neg \equ_{L, V_1}^{j+1} \rightarrow \neg \equ_{L, V_1}^{j} \vee \neg \set^{j+1}_L$. 
		$\neg \set^{j+1}_L\wedge \equ_{R, V_2}^j\rightarrow \diff_R^{j+1}$.		
		We show that $\neg \equ^{j+1}_{L, V_1}\rightarrow \neg \equ^j_{L, V_1} \vee \neg \equ^j_{R, V_2} \vee \diff_L^{j+1}$ and $\neg \equ^{j+1}_{R, V_2}\rightarrow \neg \equ^j_{R, V_2} \vee \neg \equ^j_{L, V_2} \vee \diff_R^{j+1}$.
		
		Suppose $u_{j+1}^*\in V_1$ then $\neg \equ^{j+1}_{L, V_1}\wedge \equ^j_{L, V_1}\rightarrow \set^{j+1}_L$ and $\set^{j+1}_L\rightarrow \neg \equ^j_{R,V_2}\vee \diff_L^{j+1}$, so we have $\neg \equ^{j+1}_{L, V_1}\wedge \rightarrow \neg \equ^j_{R, V_2} \vee \neg \equ^j_{L, V_1} \vee \diff_R^{j+1}$. This is symmetric for $R$ and for $u_{j+1}^*\notin V_1$.
		
		We can use these formulas to show $\neg \equ^{j+1}_{L, V_1}\wedge \neg \equ^{j+1}_{R, V_2} \rightarrow \neg \equ^j_{L, V_1} \vee \neg \equ^j_{R, V_2} \vee \diff_L^{j+1}\vee \diff_R^{j+1}$ and we can simplify this to $\neg \equ^{j+1}_{L, V_1}\wedge \neg \equ^{j+1}_{R, V_2} \rightarrow \diff_L^{j+1}\vee \diff_R^{j+1}$.

		$\neg \diff^j_L\wedge \neg\diff^j_R \rightarrow \equ^j_{L, V_1}$,
		$\neg\diff^j_L\wedge \neg\diff^j_R \rightarrow \equ^j_{R, V_2}$ are corollaries of this. 
		$\neg \diff^j_L\wedge \neg\diff^j_R$ means $\neg \diff^{j-1}_L\wedge \neg\diff^{j-1}_R$.
		$u_j^*\in C_1$ implies $u_j^*\notin C_2$, so $\set^j_L$ and $\neg \set^j_R$, and that makes ($\val_B^j, \set_B^j$)=($\val_L^j, \set_L^j$).
		
		$u_j^*\in C_2$ implies $u_j^*\notin C_1$ so $\neg \set^j_L$ and $\set^j_R$, and that makes ($\val_B^j, \set_B^j$)=($\val_R^j, \set_R^j$).
			\end{sloppypar}
		$u_j^*\notin C_1\cup C_2$ implies $\neg \set_{L}^j$ and $\neg \set_{L}^j$, therefore ($\val_B^j, \set_B^j$)=($\val_L^j, \set_L^j$).
\end{proof}

\begin{lem}\label{lem:lquprc:Bdiff}
	The following propositions are true and have short Extended Frege proofs, given $(L\rightarrow \con_L(C_1\cup  U_1\vee \neg x))$ and $(R\rightarrow \con_R(C_2\cup  U_2\vee x))$
	\begin{itemize}
		\item $B\wedge \diff_L^m\rightarrow L$
		\item $B\wedge \neg \diff_L^m\wedge \diff_R^m\rightarrow R$
		\item $B\wedge \diff_L^m\rightarrow \con_B(C_1\vee V_2 \vee U^*)$
		\item $B\wedge \neg \diff_L^m\wedge \diff_R^m\rightarrow \con_B(C_2\vee V_1 \vee U^*)$
	\end{itemize}
	
\end{lem}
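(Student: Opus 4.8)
The plan is to adapt the sketch proof of Lemma~\ref{lem:irm:Bdiff2} to the \lquprc setting, replacing the $*$-annotations $\sigma,\xi$ by the starred subclauses $V_1,V_2$ and accounting for the universal blocks $U_1,U_2$ that become the starred literals $U^*$ in the resolvent. As there, I would prove each of the four implications by decomposing it into index-wise statements, each a constant-size Frege derivation, and reassemble them using the chain decomposition of Lemma~\ref{lem:lquprc:chain}.

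\textbf{The two ``$B$ follows $L$/$R$'' statements.} For $B\wedge\diff_L^m\rightarrow L$, recall $L=\bigwedge_i\big(\set_L^i\rightarrow(u_i\leftrightarrow\val_L^i)\big)$, so it is enough to show, for each index $i$, that $\diff_L^m$ forces $(\val_B^i,\set_B^i)=(\val_L^i,\set_L^i)$ wherever $\set_L^i$ holds. For $i\le m$ I split on whether $\diff_L^i$ has already triggered: when $\diff_L^i$ holds, the policy definition of Section~\ref{sec:pol} selects the $L$-branch, which is the \lquprc analogue of Lemma~\ref{lem:LR} and follows from the relations $\diff_L^i\wedge\neg\diff_L^{i-1}\rightarrow\neg\diff_R^{i-1}$ (proved exactly as in Lemma~\ref{lem:irm:rel}); when $\diff_L^i$ fails throughout, Lemma~\ref{lem:lquprc:nLnR} gives $B=L=R$ on that index. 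For $i>m$, assuming $\diff_L^m$ falsifies every guard in the definition that would select the $R$-branch, leaving the $L$-branch. The statement $B\wedge\neg\diff_L^m\wedge\diff_R^m\rightarrow R$ is symmetric.

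\textbf{The two $\con$ statements.} For $B\wedge\diff_L^m\rightarrow\con_B(C_1\vee V_2\vee U^*)$ I start from the premise $L\rightarrow\con_L(C_1\cup U_1\vee\neg x)$ and use Lemma~\ref{lem:lquprc:tau}, which gives $\diff_L^m\rightarrow\neg\ann_{x,L}(V_1)$; since the pivot's conversion $\con_L(\neg x)$ carries exactly the annotation $\ann_{x,L}(V_1)$, this cancels the pivot disjunct and yields $L\wedge\diff_L^m\rightarrow\con_L(C_1\cup U_1)$. Cutting out $L$ with the already-established $B\wedge\diff_L^m\rightarrow L$ turns this into a claim about $B$, and the rest is literal-by-literal and annotation-position-by-position as in Lemma~\ref{lem:irm:Bdiff2}. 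The new ingredient compared with \irmc is that a universal literal of $U_1$, \emph{unstarred} on the left, becomes a starred literal $u^*\in U^*$ on the right; the $i>m$ cases of the policy are designed precisely so that $\set_B^i=1$ there, which is what $\ann_{\cdot,B}$ requires of a starred variable, while the convention $\con_{B,C}(u^*)=\bot$ ensures these literals add nothing to the disjunction but correctly thicken the annotations of literals to their right. Enlarging the starred set by $V_2$ only relaxes a required $\neg\set_B^i$ to $\set_B^i$, again met by construction. The fourth implication is symmetric in $L,R$ and $V_1,V_2,C_2$.

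\textbf{Main obstacle.} I expect the bookkeeping around $U^*$ and the enlarged annotation set $V_1\cup V_2\cup U^*$ to be the hard part: unlike \irmc, here universal literals genuinely migrate from the clause body into the annotation during resolution, so for every literal surviving into the resolvent I must check that the annotation computed from the larger starred set is still satisfied by $B$. This reduces to verifying that each of the many $i>m$ cases of the policy definition (eight for $u_i\in\domain(U)$ and eight for $u_i\notin\domain(U)$) produces the correct $\set_B^i$ and $\val_B^i$, organised by the membership of $u_i$ in $U_1,U_2,V_1,V_2$. As in Lemma~\ref{lem:irm:Bdiff2}, I would record that each case is a constant-size derivation, defer the exhaustive enumeration for $L$ and $R$ to the Appendix, and conclude a proof of size $O(wm)$ where $w$ bounds the width of the resolvent.
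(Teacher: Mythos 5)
Your proposal follows essentially the same route as the paper's proof: index-wise decomposition of $B\wedge\diff_L^m\rightarrow L$, cancellation of the pivot disjunct via Lemma~\ref{lem:lquprc:tau}, cutting out $L$ with the first implication, and a case analysis over the annotation positions of $V_1\cup V_2\cup U^*$ organised by the trigger index from Lemma~\ref{lem:lquprc:chain}, with the same $O(wm)$ bound. The only point stated more explicitly in the paper is that for a \emph{merged} universal literal $u_k$ (whose resolvent conversion is $\bot$) one must actively derive a contradiction from $B\wedge L\wedge\diff_L^m$ and the satisfied disjunct $\con_L(u_k)$, using that the policy forces $\set_B^k$ with $\val_B^k$ opposite to $u_k$ — which is covered by the $i>m$ case check you already plan to carry out.
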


\begin{proof}[Sketch Proof]
	\begin{sloppypar}
	We break $B\wedge \diff_L^m\rightarrow L$ into individual parts $\set_B^i \rightarrow (u_i\leftrightarrow \val^i_B)\wedge \diff_L^m  \rightarrow (\set_L^i\rightarrow (u_i\leftrightarrow \val_L^i))$ which we join by conjunction. We can do similarly for 
$B\wedge \neg \diff_L^m\wedge \diff_R^m\rightarrow R$.

For $B\wedge \diff_L^m\rightarrow \con_B(C_1\vee V_2 \vee U^*)$ we
first derive
$(L\rightarrow \con_{L}(C_1\cup U_1 \vee \neg x)) \rightarrow (B\wedge L \wedge \diff_L^m \rightarrow \con_B(C_1\vee V_2 \vee U^*) ) $, you can cut out $L$ using $B\wedge \diff_L^m\rightarrow L$. Removing $(L\rightarrow \con_{L}(C_1\cup U_1 \vee \neg x))$, uses the premise $(L\rightarrow \con_L(C_1\cup  U_1\vee \neg x))$. 

To derive 	$(L\rightarrow \con_{L}(C_1\cup U_1 \vee \neg x)) \rightarrow (B\wedge L \wedge \diff_L^m \rightarrow \con_B(C_1\vee V_2 \vee U^*) ) $ we
break this by non-starred literals $l\in C_1\cup U_1$ so we will show that $(L\rightarrow \con_{L,C_1\cup U_1 \vee \neg x}(l))\rightarrow(B\wedge \diff_L^m\rightarrow \con_{B,V_2\cup C_1\cup U}(l)) $.		$\diff^m_{L}\rightarrow \neg \ann_{x, L}(V_1)$ is used to remove the $x$ literal.

For $p\in\{1,2\}$ let $W_p=\{u^* \mid u^*\in U_p \}$.
For each $i$, either $\set_B^i$ or $\neg \set_B^i$ appears in $\ann_{l,B}(V_1\cup V_2 \cup U^*)$, so we treat $\ann_{l,B}(V_1\cup V_2 \cup U^*)$ as a set containing these subformulas. We show that if $c_i\in \ann_{l,B}(V_1\cup V_2\cup U^*)$ when $c_i= \set_B^i$ or $c_i= \neg\set_B^i$ then $L\rightarrow \ann_{l,L}(V_1\cup W_1)\rightarrow B \wedge \diff_{L}^m \rightarrow c_i$ and we also have $(L\rightarrow l)\rightarrow (B \wedge \diff_{L}^m \rightarrow l)$.

For existential $l$, we can put these all together to get $(L\rightarrow \con_{L,C_1\cup U_1}(l))\rightarrow(B\wedge L \wedge \diff_L^m\rightarrow \con_{B,V_2\cup C_1\cup U}(l)) $. 
For universal literals $u_k$ we also need to show $\neg \set^k_B$ is preserved when $u_k$ is not merged.
For universal literals $u_k$ that are merged $\con_{B,V_2\cup C_1\cup U^*}(u_k*) = \bot$ so we show that the strategy for $B$ causes a contradiction between $B$ and $L\rightarrow u_k$.
We do similarly for $B\wedge \neg \diff_L^m\wedge \diff_R^m\rightarrow \con_B(C_2\vee V_1 \vee U^*)$.
		\end{sloppypar}
We detail all cases for $L$ and $R$ in the Appendix.
	\end{proof}

\begin{lem}\label{lem:lquprc:Bndiff}
	The following propositions are true and have short Extended Frege proofs, given $(L\rightarrow \con_L(C_1\cup  U_1\vee \neg x))$ and $(R\rightarrow \con_R(C_2\cup  U_2\vee x))$.
	\begin{itemize}
		\item $B\wedge \neg \diff_L^m\wedge\neg \diff_R^m \rightarrow \con_B(C_1\vee V_2 \vee U^*) \vee \neg x$
		\item $B\wedge \neg \diff_L^m\wedge\neg \diff_R^m \rightarrow \con_B(C_2\vee V_1 \vee U^* )  \vee  x$
	\end{itemize}
	
\end{lem}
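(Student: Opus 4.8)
The plan is to mirror the proof of Lemma~\ref{lem:irm:Bndiff}, replacing the instantiation annotations of \irmc by the starred-literal annotations of \lquprc: where that proof tracks $\domain(\xi)$, here I track the starred set $V_2$ (together with $U^*$) that enters the target clause. I prove the first proposition; the second is symmetric under exchange of $L,R$ and $x,\neg x$. The companion ``difference'' cases are already handled in Lemma~\ref{lem:lquprc:Bdiff}, so the content here is exactly the situation where neither policy has deviated from its annotation by stage $m$.

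First I establish that, under $\neg\diff_L^m\wedge\neg\diff_R^m\wedge x$, policy $B$ copies $L$. For $1\leq i\leq m$ I read off the cases of Lemma~\ref{lem:lquprc:nLnR}: when $\set_L^i$ holds I obtain $\set_B^i\wedge(\val_B^i\leftrightarrow\val_L^i)$, and when $\neg\set_L^i$ holds I obtain $\neg\set_B^i$ unless $u_i^*\in C_2$, in which case $\set_B^i$ is forced together with $\val_B^i\leftrightarrow\val_R^i$ (matching the $V_2$ entry of the target annotation). For $i>m$ the hypothesis makes $\diff_L^m\vee(\neg\diff_R^m\wedge x)$ true, so the policy definition of Section~\ref{sec:pol} selects the $L$-branch, giving $(\set_L^i\leftrightarrow\set_B^i)\wedge(\val_B^i\leftrightarrow\val_L^i)$ on the non-merged indices and forcing $\set_B^i$ on the merged ones. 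Using $\neg\diff_L^m\rightarrow\neg\diff_L^i$ and $\neg\diff_R^m\rightarrow\neg\diff_R^i$ from Lemma~\ref{lem:lquprc:impl} and conjoining over all $i$, I get $B\wedge\neg\diff_L^m\wedge\neg\diff_R^m\wedge x\rightarrow L$ in a linear-size proof.

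I then push the premise through literal by literal. For each literal $l$ of $C_1\cup U_1$ the body $(L\rightarrow l)\rightarrow(B\wedge\neg\diff_L^m\wedge\neg\diff_R^m\wedge x\rightarrow l)$ is immediate from the previous paragraph, and the annotation part --- that $B$'s $\set$-pattern over the starred variables of the target clause $C_1\vee V_2\vee U^*$ is forced to match what $L$ guarantees --- is verified by treating $\ann_{l,B}(\cdot)$ as a set of subformulas $\set_B^i$ or $\neg\set_B^i$ and checking each against Lemma~\ref{lem:lquprc:nLnR} (for $i\leq m$) or the policy definition (for $i>m$): $\set_B^i$ is forced exactly on the starred positions of the target clause and $\neg\set_B^i$ elsewhere. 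A merged literal $u_k^*\in U^*$ has conversion $\bot$, so for the corresponding universal literal $u_k\in U_1$ I instead show that $B$ sets $u_k$ consistently and thereby contradicts $L\rightarrow u_k$. Since the pivot $\neg x$ is carried by a literal whose conversion implies $\neg x$, the assumption $x$ kills its contribution, so the premise reduces to $\con_L(C_1\cup U_1)$; assembling over all literals yields $B\wedge\neg\diff_L^m\wedge\neg\diff_R^m\wedge x\rightarrow\con_B(C_1\vee V_2\vee U^*)$, and discharging the hypothesis $x$ gives the first proposition with the trailing $\vee\neg x$.

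The main obstacle is the annotation bookkeeping in the last paragraph. Unlike \irmc, universal literals here genuinely appear in clauses and are resolved, so $\con_{S,C}$ depends on the clause $C$ through its starred set, and the merged set $U^*$ contributes literals whose conversion is $\bot$ while still altering the annotations of the surviving literals. Keeping straight which starred variable belongs to $V_1$, to $V_2$, or to $U^*$ --- and reading off the correspondingly many $i>m$ subcases of the policy in Section~\ref{sec:pol} --- is where the work concentrates. As with Lemma~\ref{lem:irm:Bndiff}, each individual case is a constant-size Frege derivation; multiplying by the clause width $w$ and the number $n$ of universal variables, and accounting for the linear-size invocations of Lemmas~\ref{lem:lquprc:nLnR} and \ref{lem:lquprc:impl}, gives an $O(wn^2)$ proof overall. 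The complete enumeration of cases for both $L$ and $R$ is deferred to the Appendix, as for the analogous \irmc lemmas.
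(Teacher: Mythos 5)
Your proposal is correct and takes essentially the same route as the paper: the paper's own (much terser) proof of this lemma rests on exactly the two ingredients you use, namely Lemma~\ref{lem:lquprc:nLnR} together with $\neg\diff_{L}^m\rightarrow\neg\diff_{L}^i$, $\neg\diff_{R}^m\rightarrow\neg\diff_{R}^i$ for the indices $i\leq m$, and the $x$-dependent branch of the policy definition of Section~\ref{sec:pol} for $i>m$. Your additional literal-by-literal bookkeeping, the treatment of merged literals via the forced $(\val_B^k,\set_B^k)$ contradicting $L\rightarrow u_k$, and the $O(wn^2)$ size estimate all match the intended expansion modeled on Lemma~\ref{lem:irm:Bndiff}.
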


\begin{proof}
	For indices $1\leq i\leq m$, but since $\neg \diff_{L}^m \rightarrow \neg \diff_{L}^i$ and $\neg \diff_{R}^m \rightarrow \neg \diff_{R}^i$, Lemma~\ref{lem:nLnR} can be used to show that $B\wedge \diff_{L}^m \wedge \diff_{R}^m$ leads to $\set_B^i$ taking the a value consistent with both $V_1\cup V_2$, if $L$ was consistent with $V_1$ and $R$ was consistent with $V_2$.
	
	For $i>m$,  $\neg \diff_{R}^m \wedge \neg \diff_{L}^m $ will make the policy $B$ pick between the left and right policy based on  $x$. However in either case $\set_B^i$ will be forced to update based on the new annotations.  
\end{proof}

\begin{lem}\label{lem:lquprc:res}
	Suppose, there are policies $L$ and $R$ such that $L\rightarrow \con_{L}(C_1 \vee \neg x \vee U_1)$ and $R\rightarrow \con_{R}(C_2 \vee x \vee U_2)$ then there is a policy $B$ such that $B \rightarrow \con_{B}(C_1 \vee C_2\vee U^*)$ can be obtained in a short \eFrege proof, where $C_1$, $C_2$, $U_1$, $U_2$ and $U^*$ follow the same definitions as in Figure~\ref{fig:lquprc}.
\end{lem}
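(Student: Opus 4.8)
The plan is to mirror the resolution argument already carried out for \irc (Lemma~\ref{thm:res}) and \irmc (Lemma~\ref{lem:irm:res}): I will establish $B \rightarrow \con_{B}(C_1 \vee C_2 \vee U^*)$ by a three-way case split on the extension variables $\diff_L^m$ and $\diff_R^m$, prove the target conclusion separately in each case, and then resolve the three implications together to discharge the case distinction. All the genuine work has already been front-loaded into Lemmas~\ref{lem:lquprc:Bdiff} and~\ref{lem:lquprc:Bndiff}, so the resolution lemma itself should be only a handful of Frege steps on top of those polynomial-size derivations.

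First I would invoke Lemma~\ref{lem:lquprc:Bdiff}, which under the stated hypotheses already yields $B\wedge \diff_L^m \rightarrow \con_B(C_1 \vee V_2 \vee U^*)$ and $B\wedge \neg\diff_L^m \wedge \diff_R^m \rightarrow \con_B(C_2 \vee V_1 \vee U^*)$. Because $V_1$ and $V_2$ are exactly the starred literals of the two premises left of $x$, the clauses $C_1 \vee V_2 \vee U^*$ and $C_2 \vee V_1 \vee U^*$ share the same starred literals as the full resolvent $C_1 \vee C_2 \vee U^*$, so each conclusion is a sub-disjunction of $\con_B(C_1 \vee C_2 \vee U^*)$ and a single weakening step gives $B\wedge \diff_L^m \rightarrow \con_B(C_1 \vee C_2 \vee U^*)$ and $B\wedge \neg\diff_L^m \wedge \diff_R^m \rightarrow \con_B(C_1 \vee C_2 \vee U^*)$. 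Then I would take the two implications of Lemma~\ref{lem:lquprc:Bndiff}, weaken them the same way to $B\wedge \neg\diff_L^m \wedge \neg\diff_R^m \rightarrow \con_B(C_1 \vee C_2 \vee U^*) \vee \neg x$ and $B\wedge \neg\diff_L^m \wedge \neg\diff_R^m \rightarrow \con_B(C_1 \vee C_2 \vee U^*) \vee x$, and resolve them on the pivot $x$ to obtain $B\wedge \neg\diff_L^m \wedge \neg\diff_R^m \rightarrow \con_B(C_1 \vee C_2 \vee U^*)$. The three resulting implications exhaust the cases $\diff_L^m$, $\neg\diff_L^m \wedge \diff_R^m$, and $\neg\diff_L^m \wedge \neg\diff_R^m$, so resolving first on $\diff_R^m$ and then on $\diff_L^m$ collapses them into $B\rightarrow \con_B(C_1 \vee C_2 \vee U^*)$, which is the claim.

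The delicate point — and the main obstacle in this lemma as opposed to its supporting lemmas — is making the weakening from $V_1,V_2$ to $C_1,C_2$ rigorous. Since the conversion $\con_{S,C}(l)$ depends, through $\ann_{x,S}(\{u\mid u^*\in C\})$, on the \emph{whole} set of starred literals of the clause, I must check that moving from the smaller clauses produced by Lemmas~\ref{lem:lquprc:Bdiff} and~\ref{lem:lquprc:Bndiff} to the full resolvent leaves the annotations on the surviving literals unchanged; this reduces to verifying that the two clauses carry the same starred literals left of $x$. Here the observation that $V_1 \cap V_2 = \emptyset$, together with the bookkeeping of the freshly merged literals $U^*$ (each contributing a $\set_B^i$ to every annotation right of it), is what guarantees that the starred-literal sets coincide. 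Once this matching is confirmed, everything else is the purely propositional resolution pattern inherited from the \irmc argument, and the total derivation remains short.
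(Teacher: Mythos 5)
Your proposal is correct and follows essentially the same route as the paper: weaken the conclusions of Lemmas~\ref{lem:lquprc:Bdiff} and~\ref{lem:lquprc:Bndiff} to $\con_B(C_1 \vee C_2 \vee U^*)$ (modulo $x$ and $\neg x$) and combine the three cases by resolving on $x$, $\diff_R^m$ and $\diff_L^m$. Your extra check that the weakening is sound because both clauses carry the same starred-literal set $V_1 \cup V_2 \cup U^*$ (so the annotations $\ann_{x,B}(\cdot)$ on surviving literals are unchanged) is a detail the paper leaves implicit, and you resolve it correctly.
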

\begin{proof}

	From Lemmas~\ref{lem:lquprc:Bndiff}~and~\ref{lem:lquprc:Bdiff}, $\con_B(C_1\vee V_2 \vee U^*)$ and $\con_B(C_2\vee V_1 \vee U^*)$ can be weakened to $\con_B(C_1\vee C_2 \vee U^*)$. 
	These can all be combined over the different possibilities to give  $B \rightarrow \con_{B}(C_1 \vee C_2\vee U^*)$.
\end{proof}

\begin{thm}
	\eFregeRed simulates \lquprc.
\end{thm}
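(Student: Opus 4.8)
The plan is to reuse the inductive scaffolding of Theorems~\ref{thm:irc} and~\ref{thm:irmc}. I would show by induction over the lines of an \lquprc refutation of $\Pi\phi$ that every clause $C$ appearing in the proof carries a local policy $S$, built from extension variables $\set^i_S,\val^i_S$ placed to the left of $u_i$ in the prefix, for which $\phi\vdash_{\eFrege}S\rightarrow\con_S(C)$ holds by a polynomial-size derivation. The conversion $\con_S$ for \lquprc has already been fixed, and the entire resolution case---including universal pivots and the merged literals $U^{*}$---is packaged in Lemma~\ref{lem:lquprc:res}, so I would simply appeal to it. What remains is the base case, the two $\forall$-reduction rules, and closing the refutation at the empty clause.

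For the \textbf{axiom} rule a matrix clause $C$ is downloaded unchanged, so I would take the degenerate policy with every $\set^i_S$ equal to $0$; since live universal literals are converted exactly like existential ones and nothing is committed, $\con_S(C)$ collapses to $C$ and $S\rightarrow\con_S(C)$ is immediate from $C\in\phi$. For $\forall\text{-Red}^{*}$, removing a starred literal $u_i^{*}$ is essentially free: $\con_{S,C}(u_i^{*})=\bot$ already contributes nothing, and because $u_i$ is rightmost it never occurs in the annotation of a surviving literal, so the converted clause is literally unchanged and the same policy works. For plain $\forall\text{-Red}$, reducing a live literal $u=u_i$, I would form $B$ from the premise policy $T$ by committing that index, setting $\set^i_B:=1$ and $\val^i_B:=\val^i_T$. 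The commitment forces $\neg\set^i_B$ to be false, so the live disjunct $\con_{B,C}(u_i)$ vanishes, while $B\rightarrow T$ still holds (the only new constraint, $\set^i_T\rightarrow(u_i\leftrightarrow\val^i_T)$, is discharged by $\val^i_B=\val^i_T$); hence $B\rightarrow\con_B(D)$ follows propositionally from $T\rightarrow\con_T(D\cup\{u\})$.

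To finish, at the \textbf{empty clause} the induction yields $S\rightarrow\con_S(\bot)$ with $\con_S(\bot)$ the empty disjunction, i.e.\ $\neg S$. As in the previous two theorems I would complete $S$ to a full strategy $B$ by forcing every $\set^i_B$ true and taking $\val^i_B$ from $S$, obtaining $\bigvee_{i=1}^{n}(u_i\oplus\val^i_B)$, and then apply the \Ared rule of \eFregeRed from the innermost universal variable outward---reducing $u_i$ to both $0$ and $1$ and resolving on the trivial tautology---until nothing but the empty clause survives. This is the only place the QBF reduction rule is used.

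I expect the resolution case to be the real obstacle, which is exactly why it has been isolated as Lemma~\ref{lem:lquprc:res} and supported by the case-heavy policy definition of Section~\ref{sec:pol}. The difficulty is that a single \lquprc resolution step simultaneously resolves (possibly on a universal pivot, which the conversion treats like an existential one) and merges complementary universal literals into $U^{*}$, so the resolvent policy $B$ must decide, variable by variable, whether to follow $L$ or $R$, default to a committed value on the merged literals, and still respect the prefix order. Verifying that $B$ tracks whichever of $L,R$ first deviates from the shared annotation---as encoded by the $\diff$ and $\equ$ variables---and that this choice is sound for the merged resolvent is precisely the bookkeeping carried by Lemmas~\ref{lem:lquprc:tau}--\ref{lem:lquprc:Bndiff}; getting the $i>m$ cases right, where $u$ and $\bar u$ turn into $u^{*}$ and join the policy, is the most delicate point.
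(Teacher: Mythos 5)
Your overall scaffolding matches the paper's proof: an induction over the lines building a local policy $S$ with $\phi\vdash_{\eFrege} S\rightarrow\con_S(C)$, the empty policy for axioms, an unchanged policy for $\forall$-Red$^*$, delegation of resolution to Lemma~\ref{lem:lquprc:res}, and the standard reduction-and-resolve finish at the empty clause. However, your treatment of the plain $\forall$-Red case has a genuine gap. You define $B$ by $\set^i_B:=1$ and $\val^i_B:=\val^i_T$ and claim the disjunct for the reduced literal ``vanishes'' because $\neg\set^i_B$ is false. But the induction hypothesis gives you $T\rightarrow\con_T(D)\vee\bigl(u_i\wedge\neg\set^i_T\wedge\cdots\bigr)$, whose offending disjunct mentions $\neg\set^i_T$, not $\neg\set^i_B$; forcing $\set^i_B=1$ does nothing to it, and you cannot pass from $\con_T$ to $\con_B$ at index $i$ precisely because $\neg\set^i_T\rightarrow\neg\set^i_B$ fails under your definition. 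Concretely, if $\set^i_T=0$, $\val^i_T=1$ and the premise is witnessed by the disjunct $u_i\wedge\neg\set^i_T\wedge\cdots$, then $B$ holds, $T$ holds, $u_i$ is true, and $\con_B(D)=\con_T(D)$ is false, so $B\rightarrow\con_B(D)$ is simply not valid. The point of the reduction step is that the universal player must now actively falsify the reduced literal, so $\val^i_B$ must be the constant opposing it ($0$ for $u_i$, $1$ for $\neg u_i$), not a value inherited from $T$.

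The paper's definition is also conditional for a reason you would need to reproduce: it keeps $(\val^i_T,\set^i_T)$ whenever $\set^i_T\vee\con_T(D)$ already holds, and only overrides to $(0,1)$ otherwise. If you unconditionally committed $u_i$ to the opposing constant you would lose $B\rightarrow T$ in the sub-case $\set^i_T=1$, $\val^i_T=1$, and the induction hypothesis would become unusable; if you unconditionally inherit $\val^i_T$ you get the failure above. The correct argument then runs: when $\set^i_T$ and $\con_T(D)$ are both false, $B$ forces $u_i=0$, $T$ is still satisfied (its $i$-th conjunct is vacuous), so the premise forces $u_i$ true, a contradiction, whence $B\rightarrow\con_B(D)$. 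The rest of your proposal (axiom, starred reduction, contradiction, and the appeal to Lemmas~\ref{lem:lquprc:tau}--\ref{lem:lquprc:res} for resolution) agrees with the paper.
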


\begin{proof}
	
	We inductively build a policy $S$ such that $S\rightarrow \con_S(C)$ can be proved from $\phi$ using \eFrege, for every clause $C$ in an \lquprc proof. At the end we have the empty clause and a strategy and we can use reduction to remove the strategy and obtain the empty clause as in Theorems~\ref{thm:mrc}~and~\ref{thm:irc}.
	
	\noindent\textbf{Axiom}
	Each Axiom is treated with the empty policy.

	\noindent\textbf{Reduction ($u_i$ or $\neg u_i$)}
	If the clause contains literal $u_i$, we know that $T\rightarrow \con_T(C\vee u_i)$. We define $S$ so that
	
	$(\val^j_S, \set^j_S)= \begin{cases}(\val^j_T, \set^j_T) & j\neq i \end{cases}$
	
	$(\val^i_S, \set^i_S)= \begin{cases}(\val^i_T, \set^i_T) & \text{if } \set^i_T \vee\con_T(C) \text{ is satisfied, }\\ (0,1) & \text{ otherwise. }\end{cases}$
	
	We need to show that 
	$S\rightarrow \con_S(C)$.	
	Note that $\con_T(C\vee u_i)=\con_T(C)\vee \con_{T,C}(u_i)$. 
	Therefore $T\rightarrow \con_T(C)$ or $T\rightarrow \neg \set_T^{i}\wedge u_i$.
	If $\set_T^{i}$ is true or $\con_T(C)$ then $T\rightarrow \con_T(C)$ is true and as $S$ will match $T$, $S\rightarrow \con_S(C)$.
	Suppose $\set_T^{i}$ and $\con_T(C)$ are both false. If~$S$ is true, then $u_i$ is false by construction. Moreover, since $S$ agrees with $T$ on every variable except $u_i$, and $T$ does not set $u_i$, $T$ must be true as well. But since $\con_T(C)$ is false, we must have $T \rightarrow \neg \set_T^{i} \wedge u_i$. In particular, $u_i$ must be true, a contradiction. We conclude that the implication $S \rightarrow \con_S(C)$ holds in this case.
	
	\noindent\textbf{Reduction ($u_i^*$)}
	If $T\rightarrow \con_T(C\vee u_i^*)$ and we reduce $u_i^*$ we need to define the strategy $S$ so that  $S\rightarrow \con_S(C)$. Since $u_i^*$ is the rightmost literal in the clause $\con_T(C\vee u_i^*)= \con_T(C)$ so we define $S$ the same way as $T$.
	
	\noindent\textbf{Resolution} See Lemma~\ref{lem:lquprc:res}.
	
	\noindent\textbf{Contradiction} Just as in \irc we have to give a complete assignment to the missing values in the policy. We then have simply the negation of the strategy for which we can apply our same technique to reduce to the empty clause.
\end{proof}

\section{Conclusion}
Our work reconciles many different QBF proof techniques under the single system \eFregeRed.
Although \eFregeRed itself is likely not a good system for efficient proof checking, our results have implications for other systems that are more promising in this regard, such as \qrat, which inherits these simulations. In particular, \qrat's simulation of \ecalculus is upgraded to a simulation of \irmc, and we do not even require the extended universal reduction rule.
Existing \qrat checkers can be used to verify converted \eFregeRed proofs.
Further, extended QU-resolution is polynomially equivalent to \eFregeRed~\cite{ChewSat21}, and has previously been proposed as a system for unified QBF proof checking~\cite{Jus07}.
Since our simulations split off propositional inference from a standardised reduction part at the end, another option is to use (highly efficient) propositional proof checkers instead.
Our simulations use many extension variables that are known to negatively impact the checking time of existing tools such as \textsf{DRAT}-trim, but one may hope that they can be refined to become more efficient in this regard.

There are other proof systems, particularly ones using dependency schemes, such as \qdrc and \lqdrc that have strategy extraction~\cite{PeitlSS19a}. Local strategy extraction and ultimately a simulation by \eFregeRed seem likely for these systems, whether it can be proved directly or by generalising the simulation results from this paper.

\bibliographystyle{alphaurl}
\bibliography{compl}

\newpage
\appendix
\section{}

\subsection{Local Strategy Extraction for Simulation of \irmc}\label{app:irmc}

\subsubsection{Policy Variables}

For $u_i\notin \domain(\tau\sqcup \sigma \sqcup \xi)$, $u_i<x$,

\noindent$(\val^i_B, \set^i_B)=
\begin{cases} 
(\val^i_R, \set^i_R)& \text{if } \neg \diff_{L}^{i-1}\wedge(\diff_R^{i-1} \vee\neg \set_L^{i})\\ 
(\val^i_L, \set^i_L) & \text{otherwise. }
\end{cases}$

{For $u_i\in\domain(\tau)$},

\noindent$(\val^i_B, \set^i_B)= 
\begin{cases} 
(\val^i_R, \set^i_R)& \text{if } \neg \diff_{L}^{i-1}\wedge(\diff_R^{i-1} \vee(\set_L^{i} \wedge (\val_L^{i}\leftrightarrow \val_\tau^{i})))\\ 
(\val^i_L, \set^i_L) & \text{otherwise. }\end{cases}$

{For $*/u_i\in\sigma$},

\noindent$(\val^i_B, \set^i_B)= 
\begin{cases}
(0,1)  & \text{if }\neg \diff_L^{i-1}\wedge\diff_R^{i-1}\wedge \neg \set_R^i\\
(\val^i_R, \set^i_R)& \text{if } \neg \diff_{L}^{i-1}\wedge \set_R^{i}\wedge(\diff_R^{i-1} \vee\set_L^{i})\\
(\val^i_L, \set^i_L) & \text{otherwise. }
\end{cases}$

{For $*/u_i\in\xi$},

\noindent$(\val^i_B, \set^i_B)= 
\begin{cases}
(0,1)  & \text{if }\diff_L^{i-1}\wedge \neg \set_L^i\\
(\val^i_R, \set^i_R)& \text{if } \neg \diff_{L}^{i-1}\wedge(\diff_R^{i-1} \vee\neg\set_L^{i})\\
(\val^i_L, \set^i_L) & \text{otherwise. }
\end{cases}$

For $u_i>x$,

\noindent$(\val^i_B, \set^i_B)= 
\begin{cases}
(\val^i_R, \set^i_R)& \text{if } \neg \diff_{L}^{m}\wedge(\diff_R^{m} \vee \neg x)\\
(\val^i_L, \set^i_L) & \text{otherwise. }
\end{cases}$
\renewcommand{\thesection}{\arabic{section}}
\setcounter{section}{5}
\setcounter{thm}{5}

\begin{lem}
	Suppose $L\rightarrow \con_{L}(C_1 \vee \neg x^{\tau\cup\sigma})$ and $R\rightarrow \con_{L}(C_1 \vee x^{\tau\cup\xi})$.
	The following propositions are true and have short Extended Frege proofs.
	\begin{itemize}
		\item $B\wedge \diff_L^m\rightarrow L$
		\item $B\wedge \neg \diff_L^m\wedge \diff_R^m\rightarrow R$
		\item $B\wedge \diff_L^m\rightarrow \con_B(\instantiate(\xi,C_1))$
		\item $B\wedge \neg \diff_L^m\wedge \diff_R^m\rightarrow \con_B(\instantiate(\sigma,C_2))$
	\end{itemize}
	
\end{lem}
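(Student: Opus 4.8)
The plan is to prove the four implications by decomposing each into per-index (and, for the $\con_B$ statements, per-literal and per-annotation) constituents, each of which is constant size to verify, and then recombining them by conjunction or disjunction as in the sketch of Lemma~\ref{lem:irm:Bdiff2}. The first and third statements are mirror images of the second and fourth under the exchange $L\leftrightarrow R$, $\sigma\leftrightarrow\xi$, $C_1\leftrightarrow C_2$; the only asymmetry is the tie-break favouring $L$ when both policies start deviating simultaneously, which is exactly what the extra hypothesis $\neg\diff_L^m$ in the second and fourth statements compensates for. I would therefore prove the first and third in detail and read off the others symmetrically. I would also reduce the $\con_B$ statements to the $B\to L$ / $B\to R$ statements: once $B\wedge\diff_L^m\to L$ is in hand, the third statement is only about how $B$ copies $L$ and additionally discharges the $\xi$-instantiation, so the genuine content sits in the first two implications.

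For $B\wedge\diff_L^m\to L$ I would write $L=\bigwedge_i(\set_L^i\to(u_i\leftrightarrow\val_L^i))$ and prove each conjunct $B\wedge\diff_L^m\to(\set_L^i\to(u_i\leftrightarrow\val_L^i))$ separately. Using Lemma~\ref{lem:irm:chain} I locate the first index $j\le m$ with $\diff_L^j\wedge\neg\diff_L^{j-1}$. For $i<j$, Lemma~\ref{lem:irm:rel} supplies $\neg\diff_R^i$ alongside $\neg\diff_L^i$, so Lemma~\ref{lem:irm:nLnR} forces $B$ to agree with $L$ on $\set$ and (where set) on $\val$, and $B$'s own conjunct transfers to $L$'s; on the $*/u_i\in\xi$ line the conjunct is vacuous because $\neg\set_L^i$. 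For $j\le i\le m$, monotonicity (Lemma~\ref{lem:irm:impl}) gives $\diff_L^i$, and I check case by case on whether $u_i$ lies in $\domain(\tau)$, $\domain(\sigma)$, $\domain(\xi)$, or none of them, that the policy definition selects $(\val_L^i,\set_L^i)$, or, in the forced $*/u_i\in\xi$ case, sets $(0,1)$ while $\neg\set_L^i$ makes $L$'s conjunct vacuous. In each case the key step is that $\diff_L^i\wedge\neg\diff_L^{i-1}$ unpacks, via the definition of $\diff_L^i$, to $\equ_{R=\tau\sqcup\xi}^{i-1}\wedge(\cdots)$, and $\equ_{R=\tau\sqcup\xi}^{i-1}\to\neg\diff_R^{i-1}$ (Lemma~\ref{lem:irm:rel}) rules out the branch of the definition of $B$ that would choose $R$. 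For $i>m$, the hypothesis $\diff_L^m$ directly falsifies the selection condition $\neg\diff_L^m\wedge(\diff_R^m\vee\neg x)$, so $B$ copies $L$ immediately.

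For $B\wedge\diff_L^m\to\con_B(\instantiate(\xi,C_1))$ I would first derive $(L\to\con_L(C_1))\to(B\wedge L\wedge\diff_L^m\to\con_B(\instantiate(\xi,C_1)))$, cut the hypothesis $L$ with the first implication, and discharge $(L\to\con_L(C_1))$ using the premise $L\to\con_L(C_1\vee\neg x^{\tau\sqcup\sigma})$ together with Lemma~\ref{lem:irm:tau}: the latter gives $L\wedge\diff_L^m\to\neg\ann_{x,L}(\tau\sqcup\sigma)$, hence $L\wedge\diff_L^m\to\neg\con_L(\neg x^{\tau\sqcup\sigma})$, leaving $L\wedge\diff_L^m\to\con_L(C_1)$. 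I then split by literals $l^\alpha\in C_1$ into $(L\to\con_L(l^\alpha))\to(B\wedge L\wedge\diff_L^m\to\con_B(\instantiate(\xi,l^\alpha)))$, split each into its existential part (immediate, since $L\to l$ transfers through the first implication) and its annotation part $(L\to\ann_{l,L}(\alpha))\to(B\wedge L\wedge\diff_L^m\to\ann_{l,B}(\complete{\alpha}{\restr{l}{\xi}}))$, and treat $\ann_{l,B}(\complete{\alpha}{\restr{l}{\xi}})$ as a set of per-index subformulas from $\{\neg\set_B^i,\set_B^i,\set_B^i\wedge u_i,\set_B^i\wedge\neg u_i\}$. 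Each subformula inherited from $\alpha$ follows exactly as in the first implication, while the subformulas $\set_B^i$ freshly introduced by the $\xi$-instantiation are precisely those produced by the forced $(0,1)$ branch of the policy definition when $*/u_i\in\xi$.

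The main obstacle is this $\xi$-instantiation (and, symmetrically, the $\sigma$-instantiation in the fourth statement): unlike \irc, where $B$ could simply copy $L$, here $B$ must also guarantee $\set_B^i$ for every $u_i\in\domain(\xi)$ below the level of $l$ that $\alpha$ leaves unset, including at indices where $L$ does not set $u_i$. Confirming that the forced cases $(\val_B^i,\set_B^i)=(0,1)$ fire exactly on those indices, and that they do not clash with $B\wedge\diff_L^m\to L$ — which survives only because $L$'s conjunct is vacuous precisely when $\neg\set_L^i$ — is the delicate part; it is the propositional incarnation of Suda and Gleiss's borrowing of values between the two policies, and each case has to be matched against the corresponding clause of the policy definition and of the definition of $\diff_L^i$.
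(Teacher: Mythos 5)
Your proposal follows essentially the same route as the paper's proof: the same decomposition of $B\wedge\diff_L^m\rightarrow L$ into per-index conjuncts, the same reduction of the $\con_B$ statements via $(L\rightarrow\con_L(C_1))\rightarrow(B\wedge L\wedge\diff_L^m\rightarrow\con_B(\instantiate(\xi,C_1)))$ discharged with Lemma~\ref{lem:irm:tau}, the same per-literal/per-annotation splitting into subformulas from $\{\neg\set_B^i,\set_B^i,\set_B^i\wedge u_i,\set_B^i\wedge\neg u_i\}$, and the same case analysis on the index relative to the first trigger point $j$ (located via Lemma~\ref{lem:irm:chain}) and to $m$, using Lemmas~\ref{lem:irm:impl}, \ref{lem:irm:rel} and \ref{lem:irm:nLnR} in the same roles, including the observation that the forced $(0,1)$ branches realise the $\xi$-instantiation exactly where $L$'s conjunct is vacuous. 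The only cosmetic difference is that you propose to obtain the $R$-side statements purely by symmetry, whereas the paper spells out those cases as well because of the asymmetric tie-break; your remark that $\neg\diff_L^m$ compensates for this is the correct justification.
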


\begin{proof}
	
	\begin{sloppypar}
		
		We break each of these statements up into constituent parts that we will prove individually and piece together through conjunction.
		
		Take $B\wedge \diff_L^m\rightarrow L$, we can prove this by showing for each index $i$ that  $(\diff^m_L\wedge(\set_B^i\rightarrow (u_i\leftrightarrow \val_B^i)))\rightarrow (\set_L^i\rightarrow (u_i\leftrightarrow \val_L^i))$. We can split up $B\wedge \neg \diff_L^m\wedge \diff_R^m\rightarrow R$ similarly.
		
		For $B\wedge \diff_L^m\rightarrow \con_B(\instantiate(\xi,C_1))$, 
		we first have to derive  $(L\rightarrow \con_{L}(C_1))\rightarrow(B\wedge L \wedge \diff^m_L \rightarrow \con_{B}(\instantiate(\xi,C_1))$. 
		We can cut out the $L$ with $B\wedge \diff_L^m\rightarrow L$.
		We will also remove $(L\rightarrow \con_{L}(C_1))$. By using the premise $(L\rightarrow \con_{L}(C_1\vee \neg x^{\tau\sqcup\sigma}))$ and crucially Lemma~\ref{lem:irm:tau}. $L \wedge \diff^m_L\rightarrow \neg \ann_{x,L}(\tau\sqcup \sigma)$, so $L \wedge \diff^m_L\rightarrow \neg \con_{L}(\neg x^{\tau\sqcup\sigma})$, and thus $(L \wedge \diff^m_L\rightarrow \con_{L}(C_1))$.
		
		We want to again split this up to the component parts.
		We first split by
		individual literals of $C_1$ as a proof of  $(L\rightarrow \con_{L}(l^\alpha))\rightarrow(B\wedge L \wedge \diff^m_L \rightarrow \con_{B}(\instantiate(\xi,l^\alpha))$ for each 
		literal $l^\alpha\in C_1$. We then split this between existential literal 
		$(L\rightarrow l)\rightarrow (B\wedge L \wedge \diff^m_L\rightarrow l)$ (which is a basic tautology) and universal annotation 
		$(L\rightarrow \ann_{l,B}(\alpha))\rightarrow(B\wedge L \wedge \diff^m_L \rightarrow \ann_{l,B}(\complete{\alpha}{\restr{l}{\xi}}))$. 
		
		The latter part splits further. A maximum of one of  $\neg\set_B^i$, $\set_B^i$, $\set_B^i\wedge u_i$ and $\set_B^i\wedge \neg u_i$ appears in $\ann_{l,B}(\complete{\alpha}{\restr{l}{\xi}})$, we treat $\ann_{l,B}(\complete{\alpha}{\restr{l}{\xi}})$ as a set containing these subformulas. 	We show that if formula $c_i\in \ann_{l,B}(\complete{\alpha}{\restr{l}{\xi}})$, when $c_i$ is equal to  $\neg\set_B^i$, $\set_B^i$, $\set_B^i\wedge u_i$ or  $\set_B^i\wedge \neg u_i$ then  $(L\rightarrow \ann_{l,B}(\alpha))\rightarrow(B\wedge L \wedge \diff^m_L \rightarrow c_i)$.  A similar breakdown happens for $B\wedge \neg \diff_L^m\wedge \diff_R^m\rightarrow \con_B(\instantiate(\sigma,C_2))$.

Each of these individual cases is a constant size proof. You need to multiply for the length of each annotation (including missing values) and then do this again for each annotated literal in the clause.
The proof size will be $O(wm)$ where $w$ is the width or number of literals in $\instantiate(\xi,C_1)\sqcup\instantiate(\sigma,C_2)$ and $m$ is the number of universal variables in the prefix.
		
		Each $(L\rightarrow \ann_{l,B}(\alpha))\rightarrow(B\wedge L \wedge \diff^m_L \rightarrow c_i)$ fall into one of many cases. There are multiple ``axes'' of cases, the first being by index $i$,
		in the cases $i>m$, $j<i\leq m$, $i=j$, $i<j$. $j$ here refers to the index such that $\diff^j_L\wedge \neg \diff^{j-1}_L\wedge\neg \diff^{j-1}_R $ which we know exists via Lemmas~\ref{lem:irm:chain}~and~\ref{lem:irm:rel}. Lemma~\ref{lem:irm:chain} is crucial to stringing these together.
		The next axis of cases then by choice of annotation in $\complete{\alpha}{\restr{l}{\xi}}$. 
		Further we have to consider sub-cases of these that affect the policy variables, as detailed in Section~\ref{sec:pol}.

		We detail the cases below:

		\noindent\textbf{Suppose $i> m$. }
		
		$\diff_L^i$ refutes $\neg \diff_L^m \wedge (\diff_R^{m}\vee \neg x^i_L )$ so whenever $\diff^m_L$ is true, $(\val_B^i,\set_B^i)=(\val_L^i,\set_L^i)$, therefore $(\set_B^i\rightarrow (u_i\leftrightarrow \val_B^i))\rightarrow (\set_L^i\rightarrow (u_i\leftrightarrow \val_L^i))$.
		
		If $\neg\set_B^i\in \ann_{l,B}(\complete{\alpha}{\restr{l}{\xi}})$, then $u_i\notin\domain(\complete{\alpha}{\restr{l}{\xi}})$.
		We know $u_i\notin\domain(\alpha)$ otherwise it would be in $\domain(\complete{\alpha}{\restr{l}{\xi}})$.
		Therefore $\neg\set_L^i$ is in $\ann_{l,L}(\alpha)$. 
		And so if $L\rightarrow \ann_{l,L}(\alpha)$ then $L\rightarrow \neg\set_L^i$,
		therefore $B\wedge L \wedge \diff^m_L \rightarrow \neg\set_B^i$.
		We now look at all the cases of $c_i\in \ann_{l,B}(\complete{\alpha}{\restr{l}{\xi}})$ and show they can be satisfied with our strategy in $B$:
		
		If $\set_B^i\in \ann_{l,B}(\complete{\alpha}{\restr{l}{\xi}})$,
		then $u_i\in\domain(\complete{\alpha}{\restr{l}{\xi}})$
		$u_i\notin\domain(\xi)$ because $\domain(\xi)$ only extends up to $m$
		hence $u_i\in \domain({\alpha})$
		and $\set_L^i\in \ann_{l,L}(\alpha)$.
		And so if $L\rightarrow \ann_{l,L}(\alpha)$ then $L\rightarrow \set_L^i$, therefore $B\wedge L \wedge \diff^m_L \rightarrow \set_B^i$.
		
		If $\set_B^i\wedge u_i\in \ann_{l,B}(\complete{\alpha}{\restr{l}{\xi}})$
		then $u_i\in\complete{\alpha}{\restr{l}{\xi}}$.
		We know $u_i\notin\domain(\xi)$ because $\domain(\xi)$ only extends up to $m$. 
		Hence $u_i\in\alpha$
		and $\set_L^i\wedge u_i\in \ann_{l,L}(\alpha)$.
		And so if $L\rightarrow \ann_{l,L}(\alpha)$ then $L\rightarrow \set_L^i\wedge u_i$, therefore $B\wedge L \wedge \diff^m_L \rightarrow \set_B^i\wedge u_i$.
		
		If $\set_B^i\wedge \neg u_i\in \ann_{l,B}(\complete{\alpha}{\restr{l}{\xi}})$
		then $\neg u_i\in\complete{\alpha}{\restr{l}{\xi}}$,
		$u_i\notin\domain(\xi)$ because $\domain(\xi)$ only extends up to $m$. 
		Hence $\neg u_i\in\alpha$
		and $\set_L^i\wedge \neg u_i\in \ann_{l,L}(\alpha)$
		And so if $L\rightarrow \ann_{l,L}(\alpha)$ then $L\rightarrow \set_L^i\wedge \neg\val_L^i$, therefore $B\wedge L \wedge \diff^m_L \rightarrow \set_B^i\wedge \neg u_i$.

		\noindent\textbf{Suppose $j < i\leq m$.}
		
		We know $\diff^j_L\rightarrow \diff^{i-1}_L$ from Lemma~\ref{lem:irm:impl}, we will use that to get that when $\diff^j_L\wedge \set^i_L$ then $(\val^i_B, \set^i_B)=(\val^i_L,\set^i_L)$ which allows us to then show $(\set_B^i\rightarrow (u_i\leftrightarrow \val_B^i))\rightarrow (\set_L^i\rightarrow (u_i\leftrightarrow \val_L^i))$.
		When $\diff^{i-1}_L$ for $u_i\notin \domain(\xi)$ we refute $\neg \diff_{L}^{i-1}\wedge(\diff_R^{i-1} \vee\neg \set_L^{i})$,
		$\neg \diff_{L}^{i-1}\wedge(\diff_R^{i-1} \vee(\set_L^{i} \wedge (\val_L^{i}\leftrightarrow \val_\tau^{i})))$ , 
		$\neg \diff_L^{i-1}\wedge\diff_R^{i-1}\wedge \neg \set_R^i$ and
		$\neg \diff_{L}^{i-1}\wedge \set_R^{i}\wedge(\diff_R^{i-1} \vee\set_L^{i})$.
		When $\diff^{i-1}_L$ for $u_i\in \domain(\xi)$ when $\set^i_{L}$ is true we refute $\diff^{i-1}_{L}\wedge \neg \set^i_L$ and $\neg \diff^{i-1}_{L}\wedge (\diff^{i-1}_R\vee\neg \set^i_L)$.

		if $\neg\set_B^i\in \ann_{l,B}(\complete{\alpha}{\restr{l}{\xi}})$ 
		then  $u_i\notin 	\domain(\complete{\alpha}{\restr{l}{\xi}}) $,  	
		also $u_i\notin\domain(\alpha)$ and  $u_i\notin\domain(\xi)$
		so $\neg\set_L^i\in \ann_{l,L}(\alpha)$
		And so if $L\rightarrow \ann_{l,L}(\alpha)$ then $L\rightarrow \neg\set_L^i$
		when $\diff^{i-1}_L$ and $u_i\notin\domain(\xi)$, $(\val_B^i,\set_B^i)=(\val_L^i,\set_L^i)$ and so $B \wedge L \wedge \diff^{i-1}_L\wedge \set_B^i$.

		If $\set_B^i\in \ann_{l,B}(\complete{\alpha}{\restr{l}{\xi}})$ then 
		$*/u_i	\in \domain(\complete{\alpha}{\restr{l}{\xi}}) $
		so either $*/u_i\in \alpha$ or 
		$u_i\notin \domain(\alpha)$ and $*/u_i\in \xi$.
		If $*/u_i\in \alpha$ then
		$\set_L^i\in \ann_{l,L}(\alpha)$ and 
		$L\rightarrow \set_L^i$
		so when $\diff^{i-1}_L\wedge \set_L^i$ no matter which domain $u_i$ is in
		$(\val_B^i,\set_B^i)=(\val_L^i,\set_L^i)$
		$B \wedge L \wedge \diff^{i-1}_L\wedge \set_B^i$.
		If $u_i\notin \domain(\alpha)$ and $*/u_i\in \xi$.
		$\neg \set_L^i\in \ann_{l,L}(\alpha)$
		so
		$L\rightarrow \neg \set_L^i$.
		$u_i\in \domain(\xi)$ means that when $\diff^{i-1}_{L}$ and $\neg \set_L^i$
		$(\val_B^i,\set_B^i)=(0,1)$
		so
		$B \wedge L \wedge \diff^{i-1}_L\wedge \set_B^i$
		
		If  $\set_B^i\wedge u_i\in \ann_{l,B}(\complete{\alpha}{\restr{l}{\xi}})$
		then	$1/u_i	\in(\complete{\alpha}{\restr{l}{\xi}}) $
		and it can only be that $1/u_i\in \alpha$ as $\xi$ can only add $*/u_i$.
		So $\set_L^i\wedge u_i\in \ann_{l,L}(\alpha)$ and
		$L\rightarrow \set_L^i$.
		so when $\diff^{i-1}_L\wedge \set_L^i$ no matter which domain $u_i$ is in
		$(\val_B^i,\set_B^i)=(\val_L^i,\set_L^i)$.
		$B \wedge L \wedge \diff^{i-1}_L\wedge \set_B^i\wedge u_i$.
		
		Likewise, 	if  $\set_B^i\wedge \neg u_i\in \con_{l,B}(\complete{\alpha}{\restr{l}{\xi}})$
		then	$0/u_i	\in(\complete{\alpha}{\restr{l}{\xi}}) $
		and it can only be that $0/u_i\in \alpha$ as $\xi$ can only add $*/u_i$.
		So $\set_L^i\wedge u_i\in \ann_{l,L}(\alpha)$ and
		$L\rightarrow \set_L^i$.
		So when $\diff^{i-1}_L\wedge \set_L^i$ no matter which domain $u_i$ is in
		$(\val_B^i,\set_B^i)=(\val_L^i,\set_L^i)$.
		$B \wedge L \wedge \diff^{i-1}_L\wedge \set_B^i\wedge \neg u_i$.

		\noindent \textbf{Suppose $i=j$.}
		
		$\neg \diff^{j-1}_L$ by definition of $j$. $\neg \diff^{j-1}_R$ is also true as $\diff^{j-1}_R$ contradicts $\equ_{R=\tau\vee\xi}^{j-1}$ which is necessary for $\diff^{j}_L$. 
		With $\neg \diff^{j-1}_R$, ($\val_B^j, \set_B^j$) can only be defined as ($\val_R^j, \set_R^j$) in a small selection of circumstances.
		That is when:
		$\neg\set_L^j$ and $u_i\notin\domain(\tau\sqcup\sigma\sqcup\xi)$,
		$\set_L^j\wedge \val_L^j$ and $1/u_j\in\tau$,
		$\set_L^j\wedge \neg\val_L^j$ and $0/u_j\in\tau$,
		$\set_L^j\wedge\set_R^j $ and $*/u_j\in\sigma$,
		$\neg\set_L^j $ and $*/u_j\in\xi$.
		All but the latter contradict $\diff_{L}^j\wedge \diff_{L}^{j-1}$, but we can ignore whenever $\set_L^j$ is false.
		So $\diff_{L}^j\wedge\neg \diff^{j-1}_L\wedge\set_L^j \rightarrow \set_B^j $ this means that $(\set_B^j\rightarrow(u_i\leftrightarrow \val_B^j))\rightarrow (\set_L^j\rightarrow(u_i\leftrightarrow \val_L^j))$.

		If $\neg\set_B^j\in \ann_{l,B}(\complete{\alpha}{\restr{l}{\xi}})$ then 
		$u_j\notin\domain(\complete{\alpha}{\restr{l}{\xi}})$ and so
		$u_j\notin\domain(\alpha)$
		$u_j\notin\domain(\xi)$. 
		So $\neg\set_L^j\in \ann_{l,L}(\alpha)$
		and $L\rightarrow \neg\set_L^j$.
		Since $\diff^{j}_L$ is true then it can only be that $u_j\in\domain(\tau)$ or $u_j\in\domain(\sigma)$.
		If $u_j\in\domain(\tau)$ then 
		$\neg \diff^{j-1}_L\wedge ( \diff^{j-1}_R \vee (\set^j_L\wedge (\val^j_L \leftrightarrow \val^j_\tau )))$ is contradicted so $(\val_B^j,\set_B^j)=(\val_L^j,\set_L^j)$
		and $B\wedge L \wedge \diff^{j}_L \wedge \neg \diff^{j-1}_L \rightarrow \neg\set_B^i$.
		If $u_j\in\domain(\sigma)$ then 
		$\neg \diff^{j-1}_L\wedge \diff_{R}^{j-1}\wedge \neg \set^j_R$ and $\neg \diff^{j-1}_L\wedge \set^j_R\wedge (\diff_{R}^{j-1}\vee\set^j_L)$ are contradicted so $(\val_B^j,\set_B^j)=(\val_L^j,\set_L^j)$
		and $B\wedge L \wedge \diff^{j}_L \wedge \neg \diff^{j-1}_L \rightarrow \neg\set_B^i$.
		If $u_j\notin\domain(\tau\sqcup\sigma\sqcup\xi)	$
		$\diff^{j}_L$ is false in this case so we can ignore it.
		$(\val_B^j,\set_B^j)=(\val_L^j,\set_L^j)$ means that 
		$B \wedge L \wedge \diff^{j}_L \wedge \neg \diff^{j-1}_L\wedge \neg \set_B^j$.

		If $\set_B^j\in \ann_{l,B}(\complete{\alpha}{\restr{l}{\xi}})$,
		$u_j\in\domain(\complete{\alpha}{\restr{l}{\xi}})$.
		Either $*/u_j\in \alpha$ or  $u_j\not\in \domain(\alpha)$ and $*/u_j\in \xi$.
		If $*/u_j\in \alpha$,
		then $\set^j_{L}\in \ann_{l,L}(\alpha)$
		and $L\rightarrow \set_L^j$.
		If $u_j\notin\domain(\tau\sqcup\sigma\sqcup\xi)$,
		$\neg \diff_{L}^{j-1}\wedge(\diff_R^{j-1} \vee\neg \set_L^{j})$ is falsified so
		$(\val_B^j,\set_B^j)=(\val_L^j,\set_L^j)$
		and $B\wedge L \wedge \diff^{j}_L \wedge \neg \diff^{j-1}_L \rightarrow \set_B^i$.
		If $u_j\in\domain(\tau)$,
		$\diff_{L}^{j}\wedge\neg \diff_{L}^{j-1}\wedge \set_L^j$ means that $\val_L^j\oplus\val_\tau^j$. As a result
		$\neg \diff_{L}^{j-1}\wedge(\diff_R^{j-1} \vee(\set_L^{j} \wedge (\val_L^{j}\leftrightarrow \val_\tau^{j})))$ is falsified so
		$(\val_B^j,\set_B^j)=(\val_L^j,\set_L^j)$
		and $B\wedge L \wedge \diff^{j}_L \wedge \neg \diff^{j-1}_L \rightarrow \set_B^i$.
		If $u_j\in\domain(\sigma)$,
		$\set_L^{j}$ contradicts $\diff^j_{L}\wedge\neg \diff^{j-1}_{L}$, so this scenario does not occur.
		If $u_j\in\domain(\xi)$
		$\diff_L^{j-1}\wedge\neg \set_L^j$ is falsified by $\neg \diff_{L}^{j-1} $.
		$\neg \diff_L^{j-1}\wedge(\diff_R^{j-1}\vee \neg\set_L^{j})$ is falsified by 	$\set_L^{j}$  so
		$(\val_B^j,\set_B^j)=(\val_L^j,\set_L^j)$ and $B\wedge L \wedge \diff^{j}_L \wedge \neg \diff^{j-1}_L \rightarrow \set_B^i$.
		If $u_j\not\in \domain(\alpha)$ and $*/u_j\in \xi$
		then $\neg\set^j_{L}\in \ann_{l,L}(\alpha)$
		and $L\rightarrow \neg \set_L^j$.
		However this cannot happen when $\diff^j_{L}\wedge\neg \diff^{j-1}_{L}$.

		If $\set_B^j\wedge \val_B^j\in \ann_{l,B}(\complete{\alpha}{\restr{l}{\xi}})$,
		$1/u_j\in(\complete{\alpha}{\restr{l}{\xi}})$.
		As instantiate is only done by $*$ then	$1/u_j\in\alpha$.
		So it follows $\set_L^j\wedge \val_L^j\in \ann_{l,L}(\alpha)$.
		If $u_j\notin\domain(\tau\sqcup\sigma\sqcup\xi)$,
		$\neg \diff_{L}^{j-1}\wedge(\diff_R^{j-1} \vee\neg \set_L^{j})$ is falsified so
		$(\val_B^j,\set_B^j)=(\val_L^j,\set_L^j)$
		and $B\wedge L \wedge \diff^{j}_L \wedge \neg \diff^{j-1}_L \rightarrow \set_B^i\wedge\val_B^i$ .
		If $u_j\in\domain(\tau)$,
		$\diff_{L}^{j}\wedge\neg \diff_{L}^{j-1}\wedge \set_L^j\wedge \val_L^j$ means that $\neg\val_\tau^j$ and so.
		$\neg \diff_{L}^{j-1}\wedge(\diff_R^{j-1} \vee(\set_L^{j} \wedge (\val_L^{j}\leftrightarrow \val_\tau^{j})))$ is falsified so
		$(\val_B^j,\set_B^j)=(\val_L^j,\set_L^j)$
		and $B\wedge L \wedge \diff^{j}_L \wedge \neg \diff^{j-1}_L \rightarrow \set_B^i\wedge\val_B^i$.
		If $u_j\in\domain(\sigma)$,
		$\set_L^{j}$ contradicts $\diff^j_{L}\wedge\neg \diff^{j-1}_{L}$, so this scenario does not occur.
		If $u_j\in\domain(\xi)$,
		$\diff_L^{j-1}\wedge\neg \set_L^j$ is falsified by $\neg \diff_{L}^{j-1} $.
		$\neg \diff_L^{j-1}\wedge(\diff_R^{j-1}\vee \neg\set_L^{j})$ is falsified by 	$\set_L^{j}$  so
		$(\val_B^j,\set_B^j)=(\val_L^j,\set_L^j)$ and $B\wedge L \wedge \diff^{j}_L \wedge \neg \diff^{j-1}_L \rightarrow \set_B^i\wedge u_i$.
		
		If $\set_B^j\wedge \neg \val_B^j\in \ann_{l,B}(\complete{\alpha}{\restr{l}{\xi}})$,
		$0/u_j\in(\complete{\alpha}{\restr{l}{\xi}})$.
		As instantiate is only done by $*$ then	$0/u_j\in\alpha$.
		So it follows $\set_L^j\wedge \neg \val_L^j\in \ann_{l,L}(\alpha)$.
		If $u_j\notin\domain(\tau\sqcup\sigma\sqcup\xi)$,
		$\neg \diff_{L}^{j-1}\wedge(\diff_R^{j-1} \vee\neg \set_L^{j})$ is falsified so
		$(\val_B^j,\set_B^j)=(\val_L^j,\set_L^j)$
		and $B\wedge L \wedge \diff^{j}_L \wedge \neg \diff^{j-1}_L \rightarrow \set_B^i\wedge\neg \val_B^i$ .
		If $u_j\in\domain(\tau)$,
		$\diff_{L}^{j}\wedge\neg \diff_{L}^{j-1}\wedge \set_L^j\wedge \neg\val_L^j$ means that $\val_\tau^j$ and so
		$\neg \diff_{L}^{j-1}\wedge(\diff_R^{j-1} \vee(\set_L^{j} \wedge (\val_L^{j}\leftrightarrow \val_\tau^{j})))$ is falsified so
		$(\val_B^j,\set_B^j)=(\val_L^j,\set_L^j)$
		and $B\wedge L \wedge \diff^{j}_L \wedge \neg \diff^{j-1}_L \rightarrow \set_B^i\wedge\neg \val_B^i$.
		If $u_j\in\domain(\sigma)$,
		$\set_L^{j}$ contradicts $\diff^j_{L}\wedge\neg \diff^{j-1}_{L}$, so this scenario does not occur.
		If $u_j\in\domain(\xi)$
		$\diff_L^{j-1}\wedge\neg \set_L^j$ is falsified by $\neg \diff_{L}^{j-1} $.
		$\neg \diff_L^{j-1}\wedge(\diff_R^{j-1}\vee \neg\set_L^{j})$ is falsified by 	$\set_L^{j}$  so
		$(\val_B^j,\set_B^j)=(\val_L^j,\set_L^j)$ and $B\wedge L \wedge \diff^{j}_L \wedge \neg \diff^{j-1}_L \rightarrow \set_B^i\wedge\neg u_i$.
		
		\noindent \textbf{Suppose $i<j$.}
		
		In this case $\neg \diff^{i}_L,\neg \diff^{i-1}_L, \neg \diff^{i}_R, \neg \diff^{i-1}_R$ are all true. We can see from Lemma~\ref{lem:irm:nLnR} that $\set_L^i\rightarrow\set_B^i$ in all cases. We observe all the cases when $\set_L^i$ is true and $\val_B^i$ is not defined as $\val_L^i$.
		For $u_i\in\domain(\tau)$, this happens if $(\val_L^i\leftrightarrow\val_\tau^i )$, but then also $(\val_R^i\leftrightarrow\val_\tau^i )$ if $\neg \diff_{R}^i$ so $\val_B^i=\val_R^i=\val_L^i$.
		For $u_i\in\domain(\sigma)$, if $\neg \diff_{L}^{i-1}\wedge \set_R^{i}\wedge(\diff_R^{i-1} \vee\set_L^{i})$ then $\val_B^i=\val_R^i $, but this cannot happen if $\neg \diff^{i}_R\wedge\neg \diff^{i-1}_R $. So in all cases of  $\neg \diff^{i}_L,\neg \diff^{i-1}_L, \neg \diff^{i}_R, \neg \diff^{i-1}_R, \set_L^i$ we have $\val_B^i=\val_L^i$.
		This means that $\set_B^i\rightarrow(u_i\leftrightarrow \val_B^i)\rightarrow \set_L^i\rightarrow(u_i\leftrightarrow \val_L^i)$.
		
		If $\neg\set_B^i\in \ann_{l,B}(\complete{\alpha}{\restr{l}{\xi}})$ then 
		$u_j\notin\domain(\complete{\alpha}{\restr{l}{\xi}})$ and so
		$u_j\notin\domain(\alpha)$
		$u_j\notin\domain(\xi)$. 
		So $\neg\set_L^i\in \ann_{l,L}(\alpha)$
		and $L\rightarrow \neg\set_L^i$.
		$\neg \diff^{i}_L,\neg \diff^{i-1}_L$ means that 
		$u_i\notin\domain(\tau\sqcup\sigma\sqcup\xi) $
		From Lemma~\ref{lem:irm:nLnR} we know $\neg \diff^{i}_L\wedge\neg \diff^{i}_R\rightarrow\neg \set^i_B$. So $B\wedge L \wedge \diff^{j}_L \wedge \neg \diff^{i}_L \rightarrow \neg\set_B^i$.
		
		If $\set_B^i\in \ann_{l,B}(\complete{\alpha}{\restr{l}{\xi}})$.
		Either $*/u_i\in \alpha$ or  $u_i\not\in \domain(\alpha)$ and $*/u_i\in \xi$
		If $*/u_i\in \alpha$,
		then $\set^i_{L}\in \ann_{l,L}(\alpha)$
		and $L\rightarrow \set_L^i$.
		By Lemma~\ref{lem:irm:nLnR}, $u_i$ must be in $\domain(\tau)$
		or $\domain(\sigma)$.
		In either case $\set_B^i$ is true. 
		So $B\wedge L \wedge \diff^{j}_L \wedge \neg \diff^{i}_L \rightarrow \set_B^i$.
		If $u_i\not\in \domain(\alpha)$ and $*/u_i\in \xi$,
		then $\neg \set^i_{L}\in \ann_{l,L}(\alpha)$
		and $L\rightarrow \neg \set_L^i$.
		By Lemma~\ref{lem:irm:nLnR}, $\set_B^i$ is true.
		So $B\wedge L \wedge \diff^{j}_L \wedge \neg \diff^{i}_L \rightarrow \set_B^i$.
		
		If $\set_B^i\wedge\val_B^i\in \ann_{l,B}(\complete{\alpha}{\restr{l}{\xi}})$
		then $1/u_i\in \complete{\alpha}{\restr{l}{\xi}}$, so it must be that 
		$1/u_i\in \alpha$. And so
		$\set^i_{L}\wedge \val_L^i\in \ann_{y,L}(\alpha)$
		By Lemma~\ref{lem:irm:nLnR}, $u_i$ must be in $\domain(\tau)$
		or $\domain(\sigma)$.
		In either case ($\val_B^i,\set_B^i$)=($\val_L^i,\set_L^i$).
		So $B\wedge L \wedge \diff^{j}_L \wedge \neg \diff^{i}_L \rightarrow \set_B^i\wedge \val_L^i$, because $L\rightarrow \set_L^i\wedge\val_L^i$.
		
		Likewise, if $\set_B^i\wedge\neg\val_B^i\in \ann_{l,B}(\complete{\alpha}{\restr{l}{\xi}})$
		then $0/u_i\in \complete{\alpha}{\restr{l}{\xi}}$, so it must be that 
		$0/u_i\in \alpha$. And so
		$\set^i_{L}\wedge \neg\val_L^i\in \ann_{l,L}(\alpha)$.
		By Lemma~\ref{lem:irm:nLnR}, $u_i$ must be in $\domain(\tau)$
		or $\domain(\sigma)$.
		In either case ($\val_B^i,\set_B^i$)=($\val_L^i,\set_L^i$).
		So $B\wedge L \wedge \diff^{j}_L \wedge \neg \diff^{i}_L \rightarrow \set_B^i\wedge \neg\val_L^i$, because $L\rightarrow \set_L^i\wedge\neg \val_L^i$.
		
		\noindent\textbf{In all $\diff_L^m$ cases} $(\set_B^i\rightarrow(u_i\leftrightarrow \val_B^i))\rightarrow (\set_L^i\rightarrow(u_i\leftrightarrow \val_L^i))$ so then 
		$B\wedge \diff_L^m \rightarrow L$.
		We also have $B\wedge \diff_L^m\wedge L \rightarrow \ann_{l,B}(\complete{\alpha}{\restr{l}{\xi}})$. 
		We also get $B\wedge \diff_L^m\wedge L \rightarrow \con_{B}(l)$, from  $L \rightarrow l$ so we can get $B\wedge \diff_L^m\wedge L \rightarrow \ann_{B}(\instantiate(\xi,l^{\alpha}))$, this can be put in a disjunction $B\wedge \diff_L^m\wedge L \rightarrow \con_{B}(\instantiate(\xi, C_1))$, when $L\rightarrow \con_{L}(C_1)$ instead of $L \rightarrow \con_{L}(l^{\alpha})$. This is simplified to $B\wedge \diff_L^m \rightarrow \con_{B}(\instantiate(\xi, C_1))$ as $B\wedge \diff_L^m \rightarrow L$.

		Now we argue that $(R\rightarrow \con_R(l^\alpha))$ implies $(B\wedge \neg \diff_L\wedge \diff_R^m)\rightarrow \con_R(l^\complete{\alpha}{\restr{l}{\sigma}})$.
		
		\noindent\textbf{Suppose $i> m$. }
		
		$\diff_L\wedge \diff_R^m$ satisfies $\neg \diff_{L}^{m}\wedge(\diff_R^{m} \vee \neg x)$ so ($\val_B^i,\set_B^i$)=($\val_R^i,\set_R^i$) in all cases. 
		This means that $(\set_B^i\rightarrow (\val_B^i\leftrightarrow u_i))\rightarrow (\set_R^i\rightarrow (\val_R^i\leftrightarrow u_i))$.
		
		If $\neg \set_B^i\in \con_{l,B}(\complete{\alpha}{\restr{l}{\sigma}})$ 
		then 
		$u_i\notin\domain(\alpha)$ and
		$u_i\notin\domain(\sigma)$
		then 
		$\neg \set_R^i\in \con_{l,R}(\alpha)$
		so
		$R\rightarrow \neg \set_R^i$. and so
		$B\wedge R \wedge \neg \set_R^i \wedge \neg \diff_{L}^{m}\wedge \diff_R^{m} \rightarrow \neg \set_B^i $.
		
		If $\set_B^i\in \con_{l,B}(\complete{\alpha}{\restr{l}{\sigma}})$ 
		then 
		$u_i\in\domain(\complete{\alpha}{\restr{l}{\sigma}})$.
		Which means either $u_i\in\domain(\alpha)$ or $u_i\notin\domain(\alpha)$ and $u_i\in \sigma$. But $u_i\notin \sigma$ because $i>m$.
		Since $u_i\in\domain(\alpha)$,
		$\set_R^i\in \con_{l,R}(\alpha)$ and so
		$B\wedge R \wedge \set_R^i \wedge \neg \diff_{L}^{m}\wedge \diff_R^{m} \rightarrow \set_B^i $.
		
		If $\set_B^i\wedge\val_B^i\in \con_{l,B}(\complete{\alpha}{\restr{l}{\sigma}})$ 
		then 
		$1/u_i\in\complete{\alpha}{\sigma}$.
		Which means $1/u_i\in\alpha$,
		$\set_R^i\wedge u_i\in \con_{l,R}(\alpha)$ and so
		$B\wedge R \wedge \set_R^i\wedge \val_R^i \wedge \neg \diff_{L}^{m}\wedge \diff_R^{m} \rightarrow \set_B^i\wedge u_i $.
		If $\set_B^i\wedge\val_B^i\in \con_{l,B}(\complete{\alpha}{\restr{l}{\sigma}})$ 
		then 
		$0/u_i\in\complete{\alpha}{\sigma}$.
		Which means $0/u_i\in\alpha$ 
		$\set_R^i\wedge \neg u_i\in \con_{l,R}(\alpha)$ and so
		$B\wedge R \wedge \set_R^i\wedge \neg\val_R^i \wedge \neg \diff_{L}^{m}\wedge \diff_R^{m} \rightarrow \set_B^i\wedge \neg u_i $.

		\noindent\textbf{Suppose $j< i\leq m$. }	
		
		In this case $\neg \diff_L^{i-1}$, $\neg \diff_L^{i}$, $\diff_{R}^{i-1}$ and $\diff_{R}^{i}$ are all true. If $\set_R^i$ is true then
	
		$\neg \diff_{L}^{i-1}\wedge(\diff_R^{i-1} \vee\neg \set_L^{i})$,
		$\neg \diff_{L}^{i-1}\wedge(\diff_R^{i-1} \vee(\set_L^{i} \wedge (\val_L^{i}\leftrightarrow \val_\tau^{i})))$,
		$\neg \diff_{L}^{i-1}\wedge \set_R^{i}\wedge(\diff_R^{i-1} \vee\set_L^{i})$ and
		$\neg \diff_{L}^{i-1}\wedge(\diff_R^{i-1} \vee\neg\set_L^{i})$
		are all satisfied. So ($\val_B^i,\set_B^i$)=($\val_R^i,\set_R^i$) whenever $\set_R^i$ is true. 
		This means that $(\set_B^i\rightarrow (\val_B^i\leftrightarrow u_i))\rightarrow (\set_R^i\rightarrow (\val_R^i\leftrightarrow u_i))$.

		If $\neg \set_B^i\in \con_{l,B}(\complete{\alpha}{\restr{l}{\sigma}})$ 
		then 
		$u_i\notin\domain(\alpha)$ and
		$u_i\notin\domain(\sigma)$
		then 
		$\neg \set_R^i\in \con_{l,R}(\alpha)$
		so
		$R\rightarrow \neg \set_R^i$.
		When $\neg \diff_L^{i-1}$ and $\diff_{R}^{i-1}$ and $u_i\notin\domain(\sigma)$ then ($\val_B^i,\set_B^i$)=($\val_R^i,\set_R^i$), so 
		$B \wedge\neg \diff_L^{j}\wedge\neg \diff_L^{i}\wedge\diff_{R}^{j}\wedge \diff_{R}^{i}\wedge R\rightarrow \neg \set_B^i$.

		If $\set_B^i\in \con_{l,B}(\complete{\alpha}{\restr{l}{\sigma}})$ 
		then $*/u_i\in \complete{\alpha}{\sigma}$.
		So either $*/u_i\in \alpha$ or $*/u_i\in \sigma$ and $u_i\notin\domain(\alpha)$.
		If $*/u_i\in \alpha$ then
		$\set_R^i\in \con_{l,R}(\alpha)$
		and when $\set_R^i$ is true then ($\val_B^i,\set_B^i$)=($\val_R^i,\set_R^i$)
		so $R\rightarrow \set_R^i$ implies
		$B \wedge\neg \diff_L^{j}\wedge\neg \diff_L^{i}\wedge\diff_{R}^{j}\wedge \diff_{R}^{i}\wedge R\rightarrow \set_B^i$.
		If $*/u_i\in \sigma$ and $u_i\notin\domain(\alpha)$,
		$\neg \set_R^i\in \con_{l,R}(\alpha)$ and
		$\neg \diff_L^{i-1}\wedge\diff_R^{i-1}\wedge \neg \set_R^i$ is satisfied
		so ($\val_B^i,\set_B^i$)=($0,1$) therefore
		$B \wedge\neg \diff_L^{j}\wedge\neg \diff_L^{i}\wedge\diff_{R}^{j}\wedge \diff_{R}^{i}\wedge R\rightarrow \set_B^i$.
		
		If $\set_B^i\wedge \val_B^i\in \con_{l,B}(\complete{\alpha}{\restr{l}{\sigma}})$ 
		then $1/u_i\in \complete{\alpha}{\sigma}$.
		and it must be that $1/u_i\in \alpha$
		and so 	$\set_R^i\wedge \val_R^i\in \con_{l,R}(\alpha)$
		and when $\set_R^i$ is true then ($\val_B^i,\set_B^i$)=($\val_R^i,\set_R^i$)
		so $R\rightarrow \set_R^i\wedge u_i$ implies
		$B \wedge\neg \diff_L^{j}\wedge\neg \diff_L^{i}\wedge\diff_{R}^{j}\wedge \diff_{R}^{i}\wedge R\rightarrow \set_B^i\wedge u_i$
		
		If $\set_B^i\wedge \neg\val_B^i\in \con_{l,B}(\complete{\alpha}{\restr{l}{\sigma}})$ 
		then $0/u_i\in \complete{\alpha}{\sigma}$.
		and it must be that $0/u_i\in \alpha$
		and so 	$\set_R^i\wedge \val_R^i\in \con_{l,R}(\alpha)$
		and when $\set_R^i$ is true then ($\val_B^i,\set_B^i$)=($\val_R^i,\set_R^i$)
		so $R\rightarrow \set_R^i\wedge \neg u_i$ implies
		$B \wedge\neg \diff_L^{j}\wedge\neg \diff_L^{i}\wedge\diff_{R}^{j}\wedge \diff_{R}^{i}\wedge R\rightarrow \set_B^i\wedge \neg u_i$.

		\noindent\textbf{Suppose $i=j$. }	
		
		In this case $\neg \diff_L^{j-1}$, $\neg \diff_L^{j}$, $\neg \diff_{R}^{j-1}$ and $\diff_{R}^{j}$. If $\set_R^j$ then either ($\val_B^i,\set_B^i$)=($\val_R^i,\set_R^i$) or ($\val_B^i,\set_B^i$)=($\val_L^i,\set_L^i$). We will argue that ($\val_B^i,\set_B^i$)=($\val_L^i,\set_L^i$) is not chosen because of $\neg \diff_L^{j}$ and $\equ_R$
		$\neg \diff_{L}^{i-1}\wedge(\diff_R^{i-1} \vee\neg \set_L^{i})$ cannot be falsified because $\set_L^i$ being true would contradict $\neg \diff_L^{j}$. Likewise  $\neg \diff_{L}^{i-1}\wedge(\diff_R^{i-1} \vee(\set_L^{i} \wedge (\val_L^{i}\leftrightarrow \val_\tau^{i}))) $ cannot be falsified as $(\set_L^{i} \wedge (\val_L^{i}\leftrightarrow \val_\tau^{i})) $ being false would contradict $\neg \diff_L^{j}$.
		If $u_i\in \domain(\sigma)$ then $\neg \diff_L^{i-1}\wedge\diff_R^{i-1}\wedge \neg \set_R^i$ is false and $\neg \diff_{L}^{i-1}\wedge \set_R^{i}\wedge(\diff_R^{i-1} \vee\set_L^{i})$ is true. 
		Likewise if $u_i\in \domain(\xi )$ then $\diff_L^{i-1}\wedge \neg \set_L^i$ is false and $\neg \diff_{L}^{i-1}\wedge(\diff_R^{i-1} \vee\neg\set_L^{i})$ is true.
		The result is that $(\set_B^i\rightarrow (\val_B^i\leftrightarrow u_i))\rightarrow (\set_R^i\rightarrow (\val_R^i\leftrightarrow u_i))$.

		If $\neg\set_B^j\in \con_{l,B}(\complete{\alpha}{\restr{l}{\sigma}})$
		then $u_i\notin\domain(\complete{\alpha}{\sigma})$,
		which means $u_i\notin\domain(\alpha)$ and
		$u_i\notin\domain(\sigma)$.
		So $\neg\set_R^j\in \con_{l,R}(\alpha)$
		and thus $R\rightarrow \neg\set_R^j$.
		If $u_j\in \domain(\tau)$
		we argue that $\neg \diff_{L}^{j-1}\wedge(\diff_R^{j-1} \vee(\set_L^{j} \wedge (\val_L^{j}\leftrightarrow \val_\tau^{ij})))$ is satisfied because of $\neg \diff_{L}^{i}$. 
		Hence 
		($\val_B^i,\set_B^i$)=($\val_R^i,\set_R^i$) and so 
		$B\wedge \neg \diff_L^{j-1}\wedge\neg \diff_L^{j}\wedge\neg \diff_{R}^{j-1}\wedge\diff_{R}^{j}\wedge L \rightarrow \neg \set_B^j$.
		
		If $u_j\in \domain(\xi)$,
		we argue that $\neg \diff_{L}^{i-1}\wedge(\diff_R^{i-1} \vee\neg\set_L^{i})$ is satisfied because of $\neg \diff_{L}^{i}$ which insists on $\neg\set_L^{i}$. 
		Hence 
		($\val_B^i,\set_B^i$)=($\val_R^i,\set_R^i$) and so 
		$B\wedge \neg \diff_L^{j-1}\wedge\neg \diff_L^{j}\wedge\neg \diff_{R}^{j-1}\wedge\diff_{R}^{j}\wedge L \rightarrow \neg \set_B^j$.
		
			If $u_j\notin \domain(\tau\sqcup\sigma\sqcup\xi)$
		We argue that $\neg \diff_{L}^{i-1}\wedge(\diff_R^{i-1} \vee\neg \set_L^{i})$ is satisfied because of $\neg \diff_{L}^{i}$. 
		Hence 
		($\val_B^i,\set_B^i$)=($\val_R^i,\set_R^i$) and so 
		$B\wedge \neg \diff_L^{j-1}\wedge\neg \diff_L^{j}\wedge\neg \diff_{R}^{j-1}\wedge\diff_{R}^{j}\wedge L \rightarrow \neg \set_B^j$.

		If $\set_B^j\in \con_{l,B}(\complete{\alpha}{\restr{l}{\sigma}})$,
		then $*/u_j\in(\complete{\alpha}{\sigma})$.
		So either $*/u_j\in\alpha$ or $*/u_j\notin\alpha$ and $*/u_j\in \sigma$.
		If $*/u_j\in\alpha$ 
		then $\set_R^j\in \con_{l,R}(\alpha)$
		and $R\rightarrow \set_R^j$.
		When $\set_R^j$ is true we know 
		($\val_B^j,\set_B^j$)=($\val_R^j,\set_R^j$) 
		and so 
		$B\wedge \neg \diff_L^{j-1}\wedge\neg \diff_L^{j}\wedge\neg \diff_{R}^{j-1}\wedge\diff_{R}^{j}\wedge L \rightarrow \set_B^j$.
		If $*/u_j\notin\alpha$ and $*/u_j\in \sigma$
		then $\neg\set_R^j\in \con_{l,R}(\alpha)$
		and thus $R\rightarrow \neg\set_R^j$
		$\neg \diff_{L}^{i-1}\wedge \set_R^{j}\wedge(\diff_R^{j-1} \vee\set_L^{j})$
		is falsified. 
		So ($\val_B^j,\set_B^j$)=($\val_L^j,\set_L^i$).
		But because $\neg\diff_{L}^j$ we know that $\set_L^i$ is true therefore 
		$B\wedge \neg \diff_L^{j-1}\wedge\neg \diff_L^{j}\wedge\neg \diff_{R}^{j-1}\wedge\diff_{R}^{j}\wedge L \rightarrow \set_B^j$
		
		If $\set_B^j\wedge \val_B^j\in \con_{l,B}(\complete{\alpha}{\restr{l}{\sigma}})$,
		so $1/u_j\in(\complete{\alpha}{\sigma})$.
		So it must be that $1/u_j\in\alpha$.
		And so $\set_R^j\wedge\val_R^j\in \con_{l,R}(\alpha)$
		and thus $R\rightarrow \neg\set_R^j$
		since $\set_R^j$ is true we know that 
		($\val_B^j,\set_B^j$)=($\val_R^j,\set_R^j$) 
		and so 
		$B\wedge \neg \diff_L^{j-1}\wedge\neg \diff_L^{j}\wedge\neg \diff_{R}^{j-1}\wedge\diff_{R}^{j}\wedge L \rightarrow \set_B^j\wedge \val_B^j$
		

		If $\set_B^j\wedge \neg\val_B^j\in \con_{l,B}(\complete{\alpha}{\restr{l}{\sigma}})$,
		so $0/u_j\in(\complete{\alpha}{\sigma})$.
		So it must be that $0/u_j\in\alpha$
		And so $\set_R^j\wedge\neg\val_R^j\in \con_{l,R}(\alpha)$
		and thus $R\rightarrow \neg\set_R^j$
		since $\set_R^j$ is true we know that 
		($\val_B^j,\set_B^j$)=($\val_R^j,\set_R^j$) 
		and so 
		$B\wedge \neg \diff_L^{j-1}\wedge\neg \diff_L^{j}\wedge\neg \diff_{R}^{j-1}\wedge\diff_{R}^{j}\wedge L \rightarrow \set_B^j\wedge \val_B^j$

		\noindent\textbf{Suppose $i<j$.}	
		
		In this case $\neg \diff^{i}_L,\neg \diff^{i-1}_L, \neg \diff^{i}_R, \neg \diff^{i-1}_R$ are all true. We can see from Lemma~\ref{lem:irm:nLnR} that $\set_R^i\rightarrow\set_B^i$ in all cases. We observe all the cases when $\set_R^i$ is true and $\val_B^i$ is not defined as $\val_R^i$ and show they cannot happen.
		
		For $u_i\notin\domain(\tau\sqcup\sigma\sqcup\xi)$, if  $\neg \diff_{L}^{i-1}\wedge(\diff_R^{i-1} \vee\neg \set_L^{i})$ is false then $\set_L^i$ must be true, but this conflicts with $\neg \diff^{i}_L,\neg \diff^{i-1}_L$.	
		For $u_i\in\domain(\tau)$ if $\neg \diff_{L}^{i-1}\wedge(\diff_R^{i-1} \vee(\set_L^{i} \wedge (\val_L^{i}\leftrightarrow \val_\tau^{i})))$ is false then  $\set_L^{i} \rightarrow (\val_L^{i}\oplus \val_\tau^{i})$ is contradicting $\neg \diff^{i}_L,\neg \diff^{i-1}_L$.
		For $u_i \in \domain(\sigma)$ if $\neg \diff_{L}^{i-1}\wedge \set_R^{i}\wedge(\diff_R^{i-1} \vee\set_L^{i})$ is false the then $\set_L^{i}$ is false contradicting $\neg \diff^{i}_L,\neg \diff^{i-1}_L$.
		For $u_i \in \domain(\xi)$
		if
		$\neg \diff_{L}^{i-1}\wedge(\diff_R^{i-1} \vee\neg\set_L^{i})$
		is false then $\set_L^{i}$ is true but in $\domain(\xi)$ this contradicts 
		$\neg \diff^{i}_L,\neg \diff^{i-1}_L$.
		Therefore $(\set_B^i\rightarrow (\val_B^i\leftrightarrow u_i))\rightarrow (\set_R^i\rightarrow (\val_R^i\leftrightarrow u_i))$.
		
		If $\neg\set_B^i\in \con_{l,B}(\complete{\alpha}{\restr{l}{\sigma}})$ then 
		$u_j\notin\domain(\complete{\alpha}{\sigma})$ and so
		$u_j\notin\domain(\alpha)$ and
		$u_j\notin\domain(\sigma)$. 
		So $\neg\set_R^i\in \con_{l,R}(\alpha)$
		and $R\rightarrow \neg\set_R^i$.
		$\neg \diff^{i}_R,\neg \diff^{i-1}_R$ means that 
		$u_i\notin\domain(\tau\sqcup\sigma\sqcup\xi) $.
		From Lemma~\ref{lem:irm:nLnR} we know $\neg \diff^{i}_L\wedge\neg \diff^{i}_R\rightarrow\neg \set^i_B$. So $B\wedge R \wedge \neg \diff^{j}_R \wedge \neg \diff^{i}_L \rightarrow \neg\set_B^i$.		
		If $\set_B^i\in \con_{l,B}(\complete{\alpha}{\restr{l}{\sigma}})$.
		Either $*/u_i\in \alpha$ or  $u_i\not\in \domain(\alpha)$ and $*/u_i\in \sigma$.
		If $*/u_i\in \alpha$,
		then $\set^i_{R}\in \con_{l,R}(\alpha)$
		and $R\rightarrow \set_R^i$.
		By Lemma~\ref{lem:irm:nLnR}, $u_i$ must be in $\domain(\tau)$
		or $\domain(\xi)$.		
		In either case $\set_B^i$ is true. 
		So $B\wedge R \wedge \diff^{j}_R \wedge \neg \diff^{i}_L \rightarrow \set_B^i$
		If $u_i\not\in \domain(\alpha)$ and $*/u_i\in \sigma$,
		then $\neg \set^i_{R}\in \con_{l,R}(\alpha)$
		and $R\rightarrow \neg \set_R^i$
		By Lemma~\ref{lem:irm:nLnR}, $\set_R^i$ is true.
		So $B\wedge R \wedge \diff^{j}_R \wedge \neg \diff^{i}_L \rightarrow \set_B^i$.
		
		If $\set_B^i\wedge\val_B^i\in \con_{l,B}(\complete{\alpha}{\restr{l}{\sigma}})$
		then $1/u_i\in \complete{\alpha}{\sigma}$, so it must be that 
		$1/u_i\in \alpha$. And so
		$\set^i_{R}\wedge \val_R^i\in \con_{l,R}(\alpha)$.
		By Lemma~\ref{lem:irm:nLnR}, $u_i$ must be in $\domain(\tau)$
		or $\domain(\xi)$.
		In either case ($\val_B^i,\set_B^i$)=($\val_R^i,\set_R^i$).
		So $B\wedge R \wedge \diff^{j}_R \wedge \neg \diff^{i}_R \rightarrow \set_B^i\wedge \val_B^i$, because $R\rightarrow \set_R^i\wedge\val_R^i$.
	\end{sloppypar}		
		Likewise, if $\set_B^i\wedge\neg u_i\in \con_{l,B}(\complete{\alpha}{\restr{l}{\sigma}})$
		then $0/u_i\in \complete{\alpha}{\sigma}$, so it must be that 
		$0/u_i\in \alpha$. And so
		$\set^i_{R}\wedge \val_R^i\in \con_{l,R}(\alpha)$.
		By Lemma~\ref{lem:irm:nLnR}, $u_i$ must be in $\domain(\tau)$
		or $\domain(\xi)$.
		In either case ($\val_B^i,\set_B^i$)=($\val_R^i,\set_R^i$).
		So $B\wedge R \wedge \diff^{j}_R \wedge \neg \diff^{i}_R \rightarrow \set_B^i\wedge \neg u_i$, because $R\rightarrow \set_R^i\wedge\neg u_i$.
		With that we conclude all cases in $R$ and argue similarly to $L$.
\end{proof}

\renewcommand{\thesection}{\Alph{section}}
\setcounter{section}{1}
\subsection{Local Strategy Extraction for Simulation of \lquprc}\label{app:lquprc}

\subsubsection{Policy Variables}

For $u_i^*\notin C_1 \cup C_2$, $i\leq m$

\noindent$(\val^i_B, \set^i_B)=
\begin{cases} 
(\val^i_R, \set^i_R)& \text{if } \neg \diff_{L}^{i-1}\wedge(\diff_R^{i-1} \vee\neg \set_L^{i})\\ 
(\val^i_L, \set^i_L) & \text{otherwise. }
\end{cases}$

{For $u_i^*\in C_1$}, $i\leq m$

\noindent$(\val^i_B, \set^i_B)= 
\begin{cases}
(0,1)  & \text{if }\neg \diff_L^{i-1}\wedge\diff_R^{i-1}\wedge \neg \set_R^i\\
(\val^i_R, \set^i_R)& \text{if } \neg \diff_{L}^{i-1}\wedge \set_R^{i}\wedge(\diff_R^{i-1} \vee\set_L^{i})\\
(\val^i_L, \set^i_L) & \text{otherwise. }
\end{cases}$

{For $u_i^*\in C_2$}, $i\leq m$

\noindent$(\val^i_B, \set^i_B)= 
\begin{cases}
(0,1)  & \text{if }\diff_L^{i-1}\wedge \neg \set_L^i\\
(\val^i_R, \set^i_R)& \text{if } \neg \diff_{L}^{i-1}\wedge(\diff_R^{i-1} \vee\neg\set_L^{i})\\
(\val^i_L, \set^i_L) & \text{otherwise. }
\end{cases}$

For $u_i\in \domain(U^*)$, $i>m$

\noindent$(\val^i_B, \set^i_B)= 
\begin{cases}
(\val^i_R, \set^i_R)&\text{if } \set_R^i\wedge \neg \diff_{L}^{m}\wedge(\diff_R^{m} \vee \neg x)  \\
(0, 1)&\text{if } u_i\in U_2\text{ and }\neg \set_R^i\wedge \neg \diff_{L}^{m}\wedge(\diff_R^{m} \vee \neg x)\\
(1, 1)&\text{if } \neg u_i\in U_2\text{ and }\neg \set_R^i\wedge \neg \diff_{L}^{m}\wedge(\diff_R^{m} \vee \neg x)\\
(0, 1)&\text{if } u_i^*\in U_2\text{ and }\neg \set_R^i\wedge \neg \diff_{L}^{m}\wedge(\diff_R^{m} \vee \neg x)\\
(\val^i_L, \set^i_L) &\text{if } \set_L^i\wedge \diff_{L}^{m}\vee(\neg \diff_R^{m} \wedge x) \\
(0, 1)&\text{if } u_i\in U_1\text{ and }\neg \set_L^i\wedge (\diff_{L}^{m}\vee(\neg \diff_R^{m} \wedge x))\\
(1, 1)&\text{if } \neg u_i\in U_1\text{ and }\neg \set_L^i\wedge (\diff_{L}^{m}\vee(\neg \diff_R^{m} \wedge x))\\
(0, 1)&\text{if } u_i^*\in U_1\text{ and }\neg \set_L^i\wedge( \diff_{L}^{m}\vee(\neg \diff_R^{m} \wedge x))\\
\end{cases}$

For $u_i\notin \domain(U)$, $i>m$

\noindent$(\val^i_B, \set^i_B)= 
\begin{cases}
(0,1) &\text{if } u^*\in V_2 \text{ and } \neg \set_L^i\wedge (\diff_L^m \vee (\neg \diff_R^{m} \wedge x)) \\
(\val^i_L, \set^i_L) &\text{if } u^*\in V_2 \text{ and } \set_L^i\wedge (\diff_L^m \vee (\neg \diff_R^{m} \wedge x)) \\
(\val^i_R, \set^i_R) &\text{if } u^*\in V_2 \text{ and } \neg \diff_L^m \wedge (\diff_R^{m} \vee \neg x) \\
(0,1) &\text{if } u^*\in V_1 \text{ and } \neg \set_R^i\wedge (\neg \diff_L^m \wedge (\diff_R^{m} \vee \neg x))\\
(\val^i_R, \set^i_R) &\text{if } u^*\in V_1 \text{ and } \set_R^i\wedge (\neg \diff_L^m \wedge (\diff_R^{m} \vee \neg x) )\\
(\val^i_L, \set^i_L) &\text{if } u^*\in V_1 \text{ and } \diff_L^m \vee (\neg \diff_R^{m} \wedge x) \\
(\val^i_R, \set^i_R)& \text{if } u^*\notin V_1\cup V_2 \text{ and } \neg \diff_{L}^{m}\wedge(\diff_R^{m} \vee \neg x)\\
(\val^i_L, \set^i_L) & \text{if } u^*\notin V_1\cup V_2 \text{ and } \diff_L^m \vee (\neg \diff_R^{m} \wedge x)
\end{cases}$
\renewcommand{\thesection}{\arabic{section}}
\setcounter{section}{6}
\setcounter{thm}{5}
\begin{lem}
	The following propositions are true and have short Extended Frege proofs, given $(L\rightarrow \con_L(C_1\cup  U_1\vee \neg x))$ and $(R\rightarrow \con_R(C_2\cup  U_2\vee x))$
	\begin{itemize}
		\item $B\wedge \diff_L^m\rightarrow L$
		\item $B\wedge \neg \diff_L^m\wedge \diff_R^m\rightarrow R$
		\item $B\wedge \diff_L^m\rightarrow \con_B(C_1\vee V_2 \vee U)$
		\item $B\wedge \neg \diff_L^m\wedge \diff_R^m\rightarrow \con_B(C_2\vee V_1 \vee U)$
	\end{itemize}
	
\end{lem}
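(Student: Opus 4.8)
The plan is to expand the sketch proof of Lemma~\ref{lem:lquprc:Bdiff} into a full case analysis, following exactly the pattern used for \irmc in the preceding appendix subsection but adapting it to the \lquprc conversion, in which universal literals play a double role: non-starred universal literals are treated like resolvable literals, whereas merged literals $u^*$ become part of the annotation (with $\con_{B,C}(u^*)=\bot$). As before, each of the four implications is broken into a conjunction of per-index atomic facts, each discharged by a constant-size Frege derivation and then assembled using the chain and monotonicity lemmas (Lemmas~\ref{lem:lquprc:chain} and~\ref{lem:lquprc:impl}).

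First I would reduce the two conversion statements to the two structural ones. For $B\wedge \diff_L^m\rightarrow \con_B(C_1\vee V_2\vee U)$, I would derive the implication $(L\rightarrow \con_L(C_1\cup U_1\vee\neg x))\rightarrow(B\wedge L\wedge\diff_L^m\rightarrow\con_B(C_1\vee V_2\vee U))$, cut out the hypothesis $L$ using $B\wedge\diff_L^m\rightarrow L$, and discharge the antecedent with the premise of the lemma. Lemma~\ref{lem:lquprc:tau} supplies $\diff_L^m\rightarrow\neg\ann_{x,L}(V_1)$, which falsifies $\con_L(\neg x)$ together with the starred literals of $V_1$, leaving only $\con_L(C_1\cup U_1)$ to transfer. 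I would then split over the literals $l\in C_1\cup U_1$ and, for each, over the atoms $\set_B^i$ and $\neg\set_B^i$ appearing in the converted annotation $\ann_{l,B}(V_1\cup V_2\cup U^*)$. Here the genuinely new feature relative to \irmc emerges: a non-starred universal literal $u_k$ carries a $\neg\set_B^k$ obligation that must be preserved whenever $u_k$ is not merged, while for a merged literal the target $\con_{B,C}(u_k^*)=\bot$ means I must instead exhibit a contradiction between the value $B$ forces on $u_k$ and the polarity demanded by $L\rightarrow u_k$.

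Both the structural statements $B\wedge\diff_L^m\rightarrow L$, $B\wedge\neg\diff_L^m\wedge\diff_R^m\rightarrow R$ and the per-atom facts above are then read off from a single case analysis on the position of the index $i$ relative to the trigger index $j$ (the least index with $\diff_L^j\wedge\neg\diff_L^{j-1}$, supplied by Lemma~\ref{lem:lquprc:chain}, at which $\neg\diff_R^{j-1}$ also holds since the triggering branch of $\diff_L^j$ presupposes $\equ_{R,V_2}^{j-1}$). The four ranges are $i>m$, $j<i\leq m$, $i=j$, and $i<j$. For $i<j$ both $\diff_L^i$ and $\diff_R^i$ are false, so Lemma~\ref{lem:lquprc:nLnR} pins down $\set_B^i$ and $\val_B^i$ in terms of $\set_L^i,\val_L^i$; for $j<i\leq m$ monotonicity of $\diff_L$ forces the policy to copy $L$; the boundary $i=j$ requires reading each branch of the policy definition against the three possibilities $u_j^*\in C_1$, $u_j^*\in C_2$, and $u_j^*\notin C_1\cup C_2$.

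I expect the range $i>m$ to be the main obstacle. Because $\forall$-reduction has turned $u$ and $\bar u$ into starred literals $u^*$ that join both the annotation and the policy, this range carries many more subcases than the corresponding \irmc step --- one for each of $u_i\in U_1$, $\neg u_i\in U_1$, $u_i^*\in U_1$ and their $U_2$ counterparts, as well as the $u_i\notin\domain(U)$ cases indexed by membership in $V_1$ or $V_2$. It is precisely here that the forced $(0,1)$ and $(1,1)$ assignments of $(\val_B^i,\set_B^i)$ must be checked to make each merged literal's $\bot$-conversion combine correctly with the surviving clause literals to yield $\con_B(C_1\vee V_2\vee U)$. Each subcase remains a constant-size Frege derivation, so multiplying over the $O(m)$ indices and the clause width gives the asserted short proof; the $R$-side statements follow symmetrically.
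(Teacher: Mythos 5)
Your proposal follows essentially the same route as the paper's proof: reducing the conversion statements to the structural ones via cut, using Lemma~\ref{lem:lquprc:tau} to discharge the pivot literal, splitting per literal and per annotation atom $\set_B^i$/$\neg\set_B^i$, handling merged versus non-merged universal literals exactly as the paper does, and organising the case analysis by the four index ranges $i>m$, $j<i\leq m$, $i=j$, $i<j$ around the trigger index from Lemma~\ref{lem:lquprc:chain}. The approach and supporting lemmas match the paper's argument, so this is correct.
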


\begin{proof}
	\begin{sloppypar}
		We break $B\wedge \diff_L^m\rightarrow L$ into individual parts $\set_B^i \rightarrow (u_i\leftrightarrow \val^i_B)\wedge \diff_L^m  \rightarrow (\set_L^i\rightarrow (u_i\leftrightarrow \val_L^i))$ which we join by conjunction. We can do similarly for 
		$B\wedge \neg \diff_L^m\wedge \diff_R^m\rightarrow R$.
		
	For $B\wedge \diff_L^m\rightarrow \con_B(C_1\vee V_2 \vee U^*)$ we
	first derive
	$(L\rightarrow \con_{L}(C_1\cup U_1 \vee \neg x)) \rightarrow (B\wedge L \wedge \diff_L^m \rightarrow \con_B(C_1\vee V_2 \vee U^*) ) $, you can cut out $L$ using $B\wedge \diff_L^m\rightarrow L$. Removing $(L\rightarrow \con_{L}(C_1\cup U_1 \vee \neg x))$, uses the premise $(L\rightarrow \con_L(C_1\cup  U_1\vee \neg x))$. 
	
	To derive 	$(L\rightarrow \con_{L}(C_1\cup U_1 \vee \neg x)) \rightarrow (B\wedge L \wedge \diff_L^m \rightarrow \con_B(C_1\vee V_2 \vee U^*) ) $ we
	break this by non-starred literals $l\in C_1\cup U_1$ so we will show that $(L\rightarrow \con_{L,C_1\cup U_1 \vee \neg x}(l))\rightarrow(B\wedge \diff_L^m\rightarrow \con_{B,V_2\cup C_1\cup U^*}(l)) $.		$\diff^m_{L}\rightarrow \neg \ann_{x, L}(V_1)$ is used to remove the $x$ literal.
		
		For $p\in\{1,2\}$ let $W_p=\{u^* \mid u^*\in U_p \}$.
		For each $i$, either $\set_B^i$ or $\neg \set_B^i$ appears in $\ann_{l,B}(V_1\cup V_2 \cup U^*)$, so we treat $\ann_{l,B}(V_1\cup V_2 \cup U^*)$ as a set containing these subformulas. We show that if $c_i\in \ann_{l,B}(V_1\cup V_2\cup U^*)$ when $c_i= \set_B^i$ or $c_i= \neg\set_B^i$ then $L\rightarrow \ann_{l,L}(V_1\cup W_1)\rightarrow B \wedge \diff_{L}^m \rightarrow c_i$ and we also have $(L\rightarrow l)\rightarrow (B \wedge \diff_{L}^m \rightarrow l)$.
		
		For existential $l$, we can put these all together to get $(L\rightarrow \con_{L,C_1\cup U_1}(l))\rightarrow(B\wedge L \wedge \diff_L^m\rightarrow \con_{B,V_2\cup C_1\cup U^*}(l)) $. 

		For universal literals $u_k$ we also need to show $\neg \set^k_B$ is preserved when $u_k$ is not merged.
		For universal literals $u_k$ that are merged $\con_{B,V_2\cup C_1\cup U^*}(u_k^*)) = \bot$ so we show that the strategy for $B$ causes a contradiction between $B$ and $L\rightarrow u_k$.
		We do similarly for $B\wedge \neg \diff_L^m\wedge \diff_R^m\rightarrow \con_B(C_2\vee V_1 \vee U^*)$.

			\noindent\textbf{The $\diff_L^m$ cases.}
		If $\diff_L^m$ is true then there is some $j$ such that $\diff^j_L\wedge \neg \diff^j_L\wedge \neg \diff^j_R$ via Lemmas~\ref{lem:lquprc:chain}~and~\ref{lem:lquprc:impl}. We use the disjunction $\diff_L^m \rightarrow \bigvee_{j=1}^m \diff_L^j \wedge \neg \diff_L^{j-1}$ to join all the cases of $j$ together.

		\noindent\textbf{Suppose $i>m$.}
		
		$\diff_L^i$ satisfies $\diff_L^m \vee (\neg \diff_R^{m}\wedge x )$ so whenever $\diff^m_L$ is true and $\set_L^i$ is true, $(\val_B^i,\set_B^i)=(\val_L^i,\set_L^i)$, 
		therefore $(\set_B^i\rightarrow (u_i\leftrightarrow \val_B^i))\rightarrow (\set_L^i\rightarrow (u_i\leftrightarrow \val_L^i))$.
		
		If $\set^i_B\in \ann_{x,B}(V_1\cup V_2\cup U^*)$, 
		either $u^*_i \in V_1$, $u^*_i \in V_2$ or $u^*_i\in U^*$. If $u^*_i \in V_2$ then every possibility we have $\set_B^i$ be true. If $u^*_i \in V_1$ we know $\set_L^i$ will be true since it is assumed to be implied by $L$, hence $\set_B^i=\set_L^i$ suffices.
		If $u_i^*\in  U^*$ every case $\set^i_B$ is true when $\ann_{x,L}(V_1\cup W_1)$ is affirmed by $L$.
		
		If $\neg\set^i_B\in \ann_{x,B}(V_1\cup V_2\cup U^*)$ then $u_i^*\notin  V_1\cup V_2\cup U^*$, this means that $u_i^*\notin W_1$, so whenever $\ann_{x,L}(V_1\cup W_1)$ is true, $\neg \set^i_L$. But then $\neg \set^i_B$ must be true because of $\diff_L^m$.
		
		\noindent\textbf{Suppose $j<i\leq m $.}
		
		We know $\diff^j_L\rightarrow \diff^{i-1}_L$ from Lemma~\ref{lem:lquprc:impl}, we will use that to get that when $\diff^j_L\wedge \set^i_L$ then $(\val^i_B, \set^i_B)=(\val^i_L,\set^i_L)$ which allows us to then show $(\set_B^i\rightarrow (u_i\leftrightarrow \val_B^i))\rightarrow (\set_L^i\rightarrow (u_i\leftrightarrow \val_L^i))$.
		
		Suppose $\neg \set_B^i\in \ann_{x,B}(V_1\cup V_2\cup U^*)$, then $u_i^*\notin  C_1\cup C_2$ so $(\val^i_B, \set^i_B)=(\val^i_L,\set^i_L)$. But since $\set^i_L$ will be false because $u_i^*\notin  C_1 $, $\set^i_B$ will be false.

		Now suppose $ \set_B^i\in \ann_{x,B}(V_1\cup V_2\cup U^*)$, either $u_i^*\in C_1$, 
		in which case $(\val^i_B, \set^i_B)=(\val^i_L,\set^i_L)$,
		or $u_i^*\in C_2$,
		in which case $(\val^i_B, \set^i_B)=(\val^i_L,\set^i_L)$ or $\neg \set_L^i$, but here we know $\set_B^i$ will be forced to be true, regardless.

		\noindent\textbf{Suppose $i=j$.}
		
		$\diff^i_L$,  $\neg \diff^{i-1}_L$ and $\neg \diff^{i-1}_R$ are all true.
		If $\set^i_L\in \ann_{x,L}(V_1\cup W_1) $ then $\neg \set^i_L$, and if $\neg \set^i_L\in \ann_{x,L}(V_1\cup W_1) $ then $\set^i_L$. 
		
		If $\set^i_L\in \ann_{x,L}(V_1\cup W_1) $ and $\neg \set^i_L$ then $u_i^*\in C_1$ and so $(\val^i_B, \set^i_B)=(\val^i_L,\set^i_L)$. So if $\ann_{x,L}(V_1\cup W_1)$ is satisfied by $L$ the term $\set^i_B\in \ann_{x,L}(V_1\cup V_2 \cup U^*)$ is satisfied by $B$.
		If $\neg \set^i_L\in \ann_{x,L}(V_1\cup W_1) $ and $ \set^i_L$ then if $u_i^*\in C_2$, we know $\set^i_B\in \ann_{x,L}(V_1\cup V_2 \cup U^*)$, since $\set_L^i$ is true then $\set_B^i$ is true.

		If $u_i^*\notin C_1 \cup C_2$ then $\neg \set^i_B\in \ann_{x,B}(V_1\cup V_2 \cup U^*)$, but then $(\val^i_B, \set^i_B)=(\val^i_L,\set^i_L)$. So if $\ann_{x,L}(V_1\cup W_1)$ is satisfied by $L$ the term $\set^i_B\in \ann_{x,L}(V_1\cup V_2 \cup U^*)$ is satisfied by $B$.
		
		\noindent\textbf{Suppose $i<j$.}
		
		If $\neg \set^i_B\in \ann_{x,B}(V_1\cup V_2 \cup U^*)$ then $u^*\notin C_1\cup C_2$ and so by Lemma~\ref{lem:lquprc:nLnR} $\neg \set^i_B$ is true.
		If $\set^i_B\in \ann_{x,B}(V_1\cup V_2 \cup U^*)$ then $u^*\in C_1\cup C_2$ and so by Lemma~\ref{lem:lquprc:nLnR}, $\set^i_B$ is true.
		
		We can put this all together to show in \eFrege that 
		$B\wedge \diff^m_{L}\rightarrow L$,
		$L\rightarrow \con_{L,C_1\vee U_1\vee \neg x}(l)\rightarrow B \wedge L \wedge \diff^m_{L} \rightarrow \con_{B,C_2\vee V_2\vee U^*}(l)$, for existential literal $l$.
		Note that $\diff_{L}$ means that $\con_{R,C_2\cup U_2\vee x,R}(\neg x)$ is not satisfied by $L$ to begin with. 
		
		\noindent\textbf{Additional universal consideration.}
		
		If $l=u_k$,  then when $l$ does not become merged we also have to show that $\neg \set^k_B $ is preserved when $\con_{L,C_1\cup U_1\vee x}(l)$ and $\diff_L^m$.  Note that if $\diff^k_L$ then the annotation is contradicted. 
		If $u_k\in C_1\vee C_2$ or $\neg u_k\in C_1\vee C_2$, for $i\leq m$ then $\neg \set^i_B$ is desired, but $\set^i_B$ will only happen when forced by $\set^i_R$ being true, but this would mean $\diff_R^k$ and $\neg \diff_L^k$, which would contradict $\diff_L^m$.
		If $u_k\in C_1\vee C_2$ or $\neg u_k\in C_1\vee C_2$ for $i> m$ then $\diff_L^m$ will contradict an annotation.
		$u_k \in U_1$ then the literal will not appear as such in $\con_B(C_1\cup C_2 \cup U^*)$ because it will now only count as a starred literal.
		
		We have to show any universal literal $l= u_k$ or $l= \neg u_k$ that does become merged, can be removed from the disjunction. In essence we need to prove 
		$(L\rightarrow \con_{L,C_1\cup U_1}(l))\rightarrow(B\wedge L \wedge \diff_L^m\rightarrow (\bot)) $. The essential part is that $\con_{L,C_1\cup U_1}(l)$ contains $l$ but also contains  $\set_L^k$ which in turn guarantees $\set_B^k$ and forces $\val_B^k$ to be the opposite value of $l$.

		\noindent\textbf{The $\diff_R^m\wedge \neg \diff_L^m$ cases.}
		
			If $\diff_R^m$ is true then there is some $j$ such that $\diff^j_R\wedge \neg \diff^j_R\wedge \neg \diff^j_L$ via Lemmas~\ref{lem:lquprc:chain}~and~\ref{lem:lquprc:impl}. We use the disjunction $\diff_R^m \rightarrow \bigvee_{j=1}^m \diff_R^j \wedge \neg \diff_R^{j-1}$ to join all the cases of $j$ together.
			
		\noindent\textbf{Suppose $i>m$.}
		
		$\diff^m_R\wedge\neg\diff_L^m$ satisfies $ \neg \diff_L^m \wedge (\diff_R^{m}\vee \neg x )$ so whenever $\diff^m_R\wedge\neg\diff_L^m$  is true and $\set_R^i$ is true $(\val_B^i,\set_B^i)=(\val_R^i,\set_R^i)$, therefore $(\set_B^i\rightarrow (u_i\leftrightarrow \val_B^i))\rightarrow (\set_R^i\rightarrow (u_i\leftrightarrow \val_R^i))$.
		
		If $\set^i_B\in \ann_{x,B}(V_1\cup V_2\cup U^*)$, then $u_i^*\in  V_1$, $u_i^*\in  V_2$ or $u_i^*\in  U^*$. In every case $\set^i_B$ is true when $\ann_{x,R}(V_2\cup W_2)$ is affirmed by $R$ and $\diff^m_R\wedge\neg\diff_L^m$ is true.
		
		If $\neg\set^i_B\in \ann_{x,B}(V_1\cup V_2\cup U^*)$ then $u_i^*\notin  U$, this means that $u_i^*\notin W_2$, so whenever $\ann_{x,R}(V_2\cup W_2)$ is true, $\neg \set^i_R$. But then $\neg \set^i_B$ must be true because of $\diff^m_R\wedge\neg\diff_L^m$.
		
		\noindent\textbf{Suppose $j<i\leq m $.}
		
		We know $\diff^j_R\rightarrow \diff^{i-1}_R$ and $\neg \diff^m_R\rightarrow \neg \diff^{i-1}_R$ from Lemma~\ref{lem:lquprc:impl},
		we will use that to get that when $\diff^j_R\wedge \neg \diff^m_L \wedge \set^i_R$ then $(\val^i_B, \set^i_B)=(\val^i_R,\set^i_R)$ which allows us to then show $(\set_B^i\rightarrow (u_i\leftrightarrow \val_B^i))\rightarrow (\set_R^i\rightarrow (u_i\leftrightarrow \val_R^i))$.
		
		Suppose $\neg \set_B^i\in \ann_{x,B}(V_1\cup V_2\cup U^*)$, then $u_i^*\notin  C_1\cup C_2$ so $(\val^i_B, \set^i_B)=(\val^i_R,\set^i_R)$. But since $\set^i_R$ will be false because $u_i^*\notin  C_2 $, $\set^i_B$ will be false.
		
		Now suppose $ \set_B^i\in \ann_{x,B}(V_1\cup V_2\cup U^*)$, either $u_i\in C_2$ in which case $(\val^i_B, \set^i_B)=(\val^i_R,\set^i_R)$, but since $u_i\in C_2$ $\val^i_R$ must be true, or $u_i\in C_1$ in which case $(\val^i_B, \set^i_B)=(\val^i_R,\set^i_R)$ or $\neg \set_R^i$, but here we know $\set_B^i$ will be forced to be true.

		\noindent\textbf{Suppose $i=j$.}
		
		$\diff^i_R$  $\neg \diff^{i-1}_R$, $\neg \diff^{i}_L$ and $\neg \diff^{i-1}_L$ are all true.
		If $\set^i_R\in \ann_{x,R}(V_2\cup W_2) $ then $\neg \set^i_R$, and if $\neg \set^i_R\in \ann_{x,R}(V_2\cup W_2) $ then $\set^i_R$. 
		If $\set^i_R\in \ann_{x,R}(V_2\cup W_2) $ and $\neg \set^i_R$ then $u_i^*\in C_2$ and $u_i\notin C_1$. 
		$\neg \diff^{i}_L$ and $\neg \diff^{i-1}_L$ means that $\neg \set_L^i$, so then $(\val^i_B, \set^i_B)=(\val^i_R,\set^i_R)$.
		So if $\ann_{x,L}(V_1\cup W_1)$ is satisfied by $R$ the term $\set^i_B\in \ann_{x,L}(V_1\cup V_2 \cup U^*)$ is satisfied by $B$.
		
		If $\neg \set^i_R\in \ann_{x,R}(V_R\cup W_R) $ and $ \set^i_R$ then if $u_i^*\in C_1$, we know $\set^i_B\in \ann_{x,L}(V_1\cup V_2 \cup U^*)$, $\neg \diff^{i}_L$ and $\neg \diff^{i-1}_L$ means that $\set_L^i$ is true,	
		since $\set_R^i$ is also true then $\set_B^i$ is true.
		If $u_i^*\notin C_1 \cup C_2$ then $\neg \set^i_B\in \ann_{x,B}(V_1\cup V_2 \cup U^*)$, 
		$\neg \diff^{i}_L$ and $\neg \diff^{i-1}_L$ means that $\set_L^i$ is true,
		so then $(\val^i_B, \set^i_B)=(\val^i_R,\set^i_R)$. So if $\ann_{x,R}(V_2\cup W_2)$ is satisfied by $R$ the term $\set^i_B\in \ann_{x,B}(V_1\cup V_2 \cup U^*)$ is satisfied by $B$.

		\noindent\textbf{Suppose $i<j$.}
		
		If $\neg \set^i_B\in \ann_{x,B}(V_1\cup V_2 \cup U^*)$ then $u^*\notin C_1\cup C_2$ and so by Lemma~\ref{lem:lquprc:nLnR} $\neg \set^i_B$ is true.
		If $\set^i_B\in \ann_{x,B}(V_1\cup V_2 \cup U^*)$ then $u^*\in C_1\cup C_2$ and so by Lemma~\ref{lem:lquprc:nLnR}, $\set^i_B$ is true.
		
		We can put this all together to show in \eFrege that 
		$B\wedge \diff^m_{R}\wedge \neg \diff^m_{L}\rightarrow R$,
		$R\rightarrow \con_{R,C_2\vee U_2\vee x}(l)\rightarrow B \wedge R\wedge \diff^m_{R}\wedge \neg \diff^m_{L} \rightarrow \con_{B,C_2\vee V_2\vee U^*}(l)$, for existential literal $l$.
		Note that $\diff_{R}$ means that $\con_{R,C_2\cup U_2\vee x,R}(x)$ is not satisfied by $R$ to begin with. 
		
		\noindent\textbf{Additional universal consideration.}
		
		If $l=u_k$ then we also have to show that $\neg \set^k_B $ is preserved when $\con_{R,C_2\cup U_2\vee x,L}(y)$ and $\diff_R^m\wedge \neg \diff_L^m$. Note that if $\diff^k_R$ then the annotation is contradicted. 
		If $u_k\in C_1\vee C_2$ or $\neg u_k\in C_1\vee C_2$, for $i\leq m$ then $\neg \set^i_B$ is desired, but $\set^i_B$ will only happen when forced by $\set^i_L$ being true, but this would mean $\diff_L^k$ contradicting $\neg \diff_L^m$
		If $u_k\in C_1\vee C_2$ or $\neg u_k\in C_1\vee C_2$ for $i> m$ then $\diff_L^m$ will contradict an annotation.
		$u_k \in U_1$ then the literal will not appear as such in $\con_B(C_2\vee V_2\vee U^*)$ because it will now only count as a starred literal.
		
\end{sloppypar}		
		We have to show any universal literal $l= u_k$ or $l= \neg u_k$ that does become merged, can be removed from the disjunction. In essence we need to prove 
		$(R\rightarrow \con_{R,C_2\cup U_2}(l))\rightarrow(B\wedge R \wedge \neg \diff_L^m\wedge \diff_R^m\rightarrow (\bot)) $. The essential part is that $\con_{L,C_1\cup U_1}(l)$ contains $l$ but also contains  $\set_L^k$ which in turn guarantees $\set_B^k$ and forces $\val_B^k$ to be the opposite value of $l$. 
\end{proof}

\end{document}